\documentclass[11pt,letterpaper]{article}
\usepackage{thmtools}
\usepackage{mathtools}
\usepackage{thm-restate}
\usepackage{bm}
\usepackage{mathrsfs}           %
\usepackage{vmargin,fancyhdr}   %
\usepackage{tikz}
\usetikzlibrary{positioning,chains,fit,shapes,calc}

\usepackage{subcaption}
\usepackage{algorithm}
\usepackage{algorithmic}
\usepackage{enumerate}

\usepackage{amsmath,amssymb, amsthm}    %
\usepackage{verbatim}           %
\usepackage{xspace}             %
\usepackage{graphicx,float}     %
\usepackage{ifthen,calc}        %
\usepackage{textcomp}           %
\usepackage{fancybox}           %
\usepackage{hhline}             %
\usepackage{float}              %

\usepackage[vflt]{floatflt}     %
\usepackage[small,compact]{titlesec}
\usepackage{setspace}
\usepackage{color}
\definecolor{ForestGreen}{rgb}{0.1333,0.5451,0.1333}
\usepackage{thumbpdf}
\usepackage[letterpaper,
colorlinks,linkcolor=ForestGreen,citecolor=ForestGreen,
backref,
bookmarks,bookmarksopen,bookmarksnumbered]
{hyperref}

\setpapersize{USletter}
\setmarginsrb{1in}{.5in}        %
{1in}{.5in}        %
{.25in}{.25in}     %
{.25in}{.5in}      %
\setlength{\marginparwidth}{.75in}
\setlength{\marginparsep}{.05in}

\newcommand{\showccc}[0]{0}
\newcommand{\ccc}[2][nothing]{%
	\ifthenelse{\showccc=0}{}{
		\ensuremath{^{\Lsh\Rsh}}\marginpar{\raggedright\tiny\textsf{%
				\ifthenelse{\equal{#1}{nothing}}{}{\textbf{#1}\\}#2}}}}
\pagestyle{fancy}
\newcounter{hours}\newcounter{minutes}
\newcommand{\hhmm}{%
	\setcounter{hours}{\time/60}%
	\setcounter{minutes}{\time-\value{hours}*60}%
	\ifthenelse{\value{hours}<10}{0}{}\thehours:%
	\ifthenelse{\value{minutes}<10}{0}{}\theminutes}
\lhead{}
\chead{}
\ifthenelse{\showccc=0}{\rhead{}}{\rhead{\today \ [\hhmm]}}
\lfoot{}
\cfoot{\thepage}
\rfoot{}
\newtheorem{theorem}{Theorem}

\newtheorem{proposition}{Proposition}
\newtheorem{corollary}{Corollary}

\newtheorem{lemma}{Lemma}
\newtheorem{fact}{Fact}

\newtheorem{problem}{Problem}
\newtheorem{assumption}{Assumption}

\usepackage{float}
\floatstyle{plain}\newfloat{myfig}{t}{figs}[section]
\floatname{myfig}{\textsc{Figure}}
\floatstyle{plain}\newfloat{myalg}{H}{algs}[section]
\floatname{myalg}{}
\setlength{\fboxrule}{0.8pt}    %

\newcommand{\defeq}{:=}
\newcommand{\norm}[1]{\left\lVert#1\right\rVert}
\newcommand{\inprod}[2]{\left\langle#1, #2\right\rangle}
\newcommand{\eps}{\epsilon}
\newcommand{\lam}{\lambda}
\newcommand{\argmax}{\textup{argmax}}
 
\newcommand{\R}{\mathbb{R}}

\newcommand{\half}{\frac{1}{2}}
\newcommand{\thalf}{\tfrac{1}{2}}
\newcommand{\1}{\mathbf{1}}
\newcommand{\E}{\mathbb{E}}

\newcommand{\Nor}{\mathcal{N}}

\newcommand{\Tr}{\textup{Tr}}
\newcommand{\opt}{\textup{OPT}}

\newcommand{\ma}{\mathbf{A}}
\newcommand{\mb}{\mathbf{B}}
\newcommand{\ms}{\mathbf{S}}
\newcommand{\my}{\mathbf{Y}}

\newcommand{\id}{\mathbf{I}}
\newcommand{\tO}{\widetilde{O}}
\newcommand{\nnz}{\textup{nnz}}
\newcommand{\lmax}{\lambda_{\textup{max}}}
\newcommand{\lmin}{\lambda_{\textup{min}}}

\newcommand{\Par}[1]{\left(#1\right)}
\newcommand{\Brack}[1]{\left[#1\right]}
\newcommand{\Brace}[1]{\left\{#1\right\}}
\newcommand{\covar}{\boldsymbol{\Sigma}}
\newcommand{\mzero}{\mathbf{0}}

\newcommand{\Sym}{\mathbb{S}}
\newcommand{\PSD}{\Sym_{\ge 0}}
\newcommand{\dist}{\mathcal{D}}
\newcommand{\fS}{\mathfrak{S}}
\newcommand{\veps}{\varepsilon}
\newcommand{\mprox}{\boldsymbol{\Gamma}}
\newcommand{\tX}{\widetilde{X}}
\newcommand{\net}{\mathcal{N}}
\newcommand{\teps}{\tilde{\eps}}
\newcommand{\mm}{\mathbf{M}}
\newcommand{\mn}{\mathbf{N}}
\newcommand{\mmg}{\mathbf{M}_G}
\newcommand{\mmb}{\mathbf{M}_B}
\newcommand{\wg}{w_G}
\newcommand{\wb}{w_B}
\newcommand{\PackingLP}{\mathsf{PackingLP}}
\newcommand{\lppacking}{\mathsf{PNormPacking}}
\newcommand{\schattenpacking}{\mathsf{SchattenPacking}}
\newcommand{\mv}{\mathbf{V}}
\newcommand{\mq}{\mathbf{Q}}
\newcommand{\mz}{\mathbf{Z}}
\newcommand{\one}{\mathbf{1}}
\newcommand{\tw}{\widetilde{w}}
\newcommand{\alla}{\mathcal{A}}
\newcommand{\nomega}{\Omega\Par{\tfrac{d + \log \delta^{-1}}{(\eps \log \eps^{-1})^2}}}
\newcommand{\RobustPCA}{\mathsf{RobustPCA}}
\newcommand{\BoxPacking}{\mathsf{BoxedSchattenPacking}}
\newcommand{\PCAFilter}
	{\mathsf{PCAFilter}}
	
\newcommand{\OneD}{\mathsf{1DRobustVariance}}

\definecolor{burntorange}{rgb}{0.8, 0.33, 0.0}

\newcommand{\jerry}[1]{\textcolor{red}{jerry: #1}}
  \usepackage{nth}
  \usepackage{intcalc}

\usepackage{url}
\usepackage{enumitem}

\setlength\parindent{0pt}
\setlength{\parskip}{4pt}

\begin{document}

	\begin{titlepage}
		\def\thepage{}
		\thispagestyle{empty}
		
		\title{Robust Sub-Gaussian Principal Component Analysis\\ and Width-Independent Schatten Packing} 
		
		\date{}
		\author{
			Arun Jambulapati\thanks{Stanford University, {\tt \{jmblpati, kjtian\}@stanford.edu}}
			\and
			Jerry Li\thanks{Microsoft Research, {\tt jerrl@microsoft.com}}
			\and
			Kevin Tian\footnotemark[1] 
		}
		
		\maketitle

\abstract{
We develop two methods for the following fundamental statistical task: given an $\eps$-corrupted set of $n$ samples from a $d$-dimensional sub-Gaussian distribution, return an approximate top eigenvector of the covariance matrix. Our first robust PCA algorithm runs in polynomial time, returns a $1 - O(\eps\log\eps^{-1})$-approximate top eigenvector, and is based on a simple iterative filtering approach. Our second, which attains a slightly worse approximation factor, runs in nearly-linear time and sample complexity under a mild spectral gap assumption. These are the first polynomial-time algorithms yielding non-trivial information about the covariance of a corrupted sub-Gaussian distribution without requiring additional algebraic structure of moments. As a key technical tool, we develop the first width-independent solvers for Schatten-$p$ norm packing semidefinite programs, giving a $(1 + \eps)$-approximate solution in $O(p\log(\tfrac{nd}{\eps})\eps^{-1})$ input-sparsity time iterations (where $n$, $d$ are problem dimensions).
}
 		
	\end{titlepage}

\section{Introduction}
\label{sec:intro}

We study two natural, but seemingly unrelated, problems in high dimensional robust statistics and continuous optimization respectively. As we will see, these problems have an intimate connection.

\textbf{Problem 1: Robust sub-Gaussian principal component analysis.} We consider the following statistical task, which we call \emph{robust sub-Gaussian principal component analysis} (PCA). Given samples $X_1, \ldots, X_n$ from sub-Gaussian\footnote{See Section~\ref{sec:prelims} for a formal definition.} distribution $\dist$ with covariance $\covar$, an $\eps$ fraction of which are arbitrarily corrupted, the task asks to output unit vector $u$ with $u^\top \covar u \ge (1 - \gamma)\norm{\covar}_{\infty}$\footnote{Throughout we use $\norm{\mm}_p$ to denote the Schatten $p$-norm (cf.\ Section~\ref{sec:prelims} for more details).} for tolerance $\gamma$. Ergo, the goal is to robustly return a $(1 - \gamma)$-approximate top eigenvector of the covariance of sub-Gaussian $\dist$. This is the natural extension of PCA to the robust statistics setting.

There has been a flurry of recent work on efficient algorithms for robust statistical tasks, e.g.\ covariance estimation and PCA. From an information-theoretic perspective, sub-Gaussian concentration suffices for robust covariance estimation. Nonetheless, to date all polynomial-time algorithms achieving nontrivial guarantees on covariance estimation of a sub-Gaussian distribution (including PCA specifically) in the presence of adversarial noise require additional algebraic structure. For instance, sum-of-squares certifiably bounded moments have been leveraged in polynomial time covariance estimation \cite{hopkins2018mixture, kothari2018robust}; however, this is a stronger assumption than sub-Gaussianity.

In many applications (see discussion in \cite{diakonikolas2017being}), the end goal of covariance estimation is PCA. Thus, a natural question which relaxes robust covariance estimation is: can we robustly estimate the top eigenvector of the covariance $\covar$, assuming only sub-Gaussian concentration? Our work answers this question affirmatively via two incomparable algorithms.
The first achieves $\gamma = O(\eps \log \eps^{-1})$ in polynomial time; the second achieves $\gamma = O(\sqrt{\eps\log\eps^{-1} \log d})$ in nearly-linear time under a mild gap assumption on $\covar$. Moreover, both methods have nearly-optimal sample complexity.

\textbf{Problem 2: Width-independent Schatten packing.} 
We consider a natural generalization of packing semidefinite programs (SDPs) which we call \emph{Schatten packing}. Given symmetric positive semidefinite $\ma_1, \ldots, \ma_n$ and parameter $p \geq 1$, a Schatten packing SDP asks to solve the optimization problem
\begin{equation}
\label{eq:packing}
\min \norm{\sum_{i \in[n]} w_i \ma_i}_p \text{ subject to } w \in \Delta^n.
\end{equation}
Here, $\norm{\mm}_p$ is the Schatten-$p$ norm of matrix $\mm$ and $\Delta^n$ is the probability simplex (see Section~\ref{sec:prelims}). When $p = \infty$, \eqref{eq:packing} is the well-studied (standard) packing SDP objective \cite{JainY11, Allen-ZhuLO16, PengTZ16}, which asks to find the most spectrally bounded convex combination of packing matrices. For smaller $p$, the objective encourages combinations more (spectrally) uniformly distributed over directions. 

The specialization of \eqref{eq:packing} to diagonal matrices is a smooth generalization of packing linear programs, previously studied in the context of fair resource allocation \cite{MarasevicSZ16, DiakonikolasFO18}. For the $\ell_\infty$ case of \eqref{eq:packing}, packing SDPs have the desirable property of admitting ``width-independent'' approximation algorithms via exploiting positivity structure. Specifically, width-independent solvers obtain multiplicative approximations with runtimes independent or logarithmically dependent on size parameters of the problem. This is a strengthening of additive notions of approximation typically used for approximate semidefinite programming. Our work gives the first width-independent solver for Schatten packing.

\subsection{Previous work}

\textbf{Learning with adversarial outliers.} The study of estimators robust to a small fraction of adversarial outliers dates back to foundational work, e.g.\ \cite{huber1964robust, tukey1975mathematics}. Following more recent work \cite{lai2016agnostic, DiakonikolasKKLMS19}, there has been significant interest in efficient, robust algorithms for statistical tasks in high-dimensional settings. We focus on methods robustly estimating covariance properties here, and defer a thorough discussion of the (extensive) robust statistics literature to \cite{steinhardt2018robust, li2018principled, diakonikolas2019recent}.

There has been quite a bit of work in understanding and giving guarantees for robust covariance estimation where the uncorrupted distribution is exactly Gaussian \cite{diakonikolas2017being, diakonikolas2018robustly, DiakonikolasKKLMS19, cheng2019faster}. These algorithms strongly use relationships between higher-order moments of Gaussian distributions via Isserlis' theorem. Departing from the Gaussian setting, work of \cite{lai2016agnostic} showed that if the distribution is an affine transformation of a 4-wise independent distribution, robust covariance estimation is possible. This was extended by \cite{kothari2018robust}, which also assumed nontrivial structure in the moments of the distribution, namely that sub-Gaussianity was certifiable via the sum-of-squares proof system. To the best of our knowledge it has remained open to give nontrivial guarantees for robust estimation of any covariance properties under minimal assumptions, i.e.\ sub-Gaussian concentration.

All aforementioned algorithms also yield guarantees for robust PCA, by applying a top eigenvector method to the learned covariance. However, performing robust PCA via the intermediate covariance estimation step is lossy, both statistically and computationally. From a statistical perspective, $\Omega(d^2)$ samples are necessary to learn the covariance of a $d$-dimensional Gaussian in Frobenius norm (and for known efficient algorithms for spectral norm error \cite{diakonikolas2017statistical}); in contrast, $O(d)$ samples suffice for (non-robust) PCA. Computationally, even when the underlying distrubution is exactly Gaussian, the best-known covariance estimation algorithms run in time $\Omega(d^{3.25})$; algorithms working in more general settings based on the sum-of-squares approach require much more time. In contrast, the power method for PCA in a $d \times d$ matrix takes time $\tO(d^2)$\footnote{We say $g = \tO(f)$ if $g = O(f \log^c f)$ for some constant $c > 0$.}. Motivated by this, our work initiates the direct study of robust PCA, which is often independently interesting in applications.

We remark there is another problem termed ``robust PCA'' in the literature, e.g.\ \cite{candes2011robust}, under a different generative model. We defer a detailed discussion to \cite{diakonikolas2017being}, which experimentally shows that algorithms from that line of work do not transfer well to our corruption model.

\textbf{Width-independent iterative methods.} Semidefinite programming (SDP) and its linear programming specialization are fundamental computational tasks, with myriad applications in learning, operations research, and computer science. Though general-purpose polynomial time algorithms exist for SDPs (\cite{NesterovN94}), in practical settings in high dimensions, approximations depending linearly on input size and polynomially on error $\eps$ are sometimes desirable. To this end, approximation algorithms based on entropic mirror descent have been intensely studied \cite{WarmuthK06, AroraK16, GarberHM15, Allen-ZhuL17b, CarmonDST19}, obtaining $\eps$ additive approximations to the objective with runtimes depending polynomially on $\rho/\eps$, where $\rho$ is the ``width'', the largest spectral norm of a constraint.

For structured SDPs, stronger guarantees can be obtained in terms of width. Specifically, several algorithms developed for packing SDPs (\eqref{eq:packing} with $p = \infty$) yield $(1 + \eps)$-\emph{multiplicative} approximations to the objective, with \emph{logarithmic} dependence on width \cite{JainY11, PengTZ16, Allen-ZhuLO16, JambulapatiLLPT20}. As $\rho$ upper bounds objective value in this setting, in the worst case runtimes of width-dependent solvers yielding $\eps\rho$-additive approximations have similar dependences as width-independent counterparts. Width-independent solvers simultaneously yield stronger multiplicative bounds at all scales of objective value, making them desirable in suitable applications. In particular, $\ell_\infty$ packing SDPs have found great utility in robust statistics algorithm design \cite{ChengG18, ChengDG19, cheng2019faster, DespersinL19}.

Beyond $\ell_\infty$ packing, width-independent guarantees in the SDP literature are few and far between; to our knowledge, other than the covering and mixed solvers of \cite{JambulapatiLLPT20}, ours is the first such guarantee for a broader family of objectives\footnote{In concurrent and independent work, \cite{CherapanamjeriMY20} develops width-independent solvers for Ky-Fan packing objectives, a different notion of generalization than the Schatten packing objectives we consider.}. Our method complements analogous $\ell_p$ extensions in the width-dependent setting, e.g.\ \cite{ZhuLO15}, as well as width-independent solvers for $\ell_p$ packing linear programs \cite{MarasevicSZ16, DiakonikolasFO18}. We highlight the fair packing solvers of \cite{MarasevicSZ16, DiakonikolasFO18}, motivated by problems in equitable resource allocation, which further solved $\ell_p$ packing variants for $p \not\in [1, \infty)$. We find analogous problems in semidefinite settings interesting, and defer to future work.

\subsection{Our results}
\label{ssec:ourresults}

\textbf{Robust sub-Gaussian principal component analysis.} We give two algorithms for robust sub-Gaussian PCA\footnote{We follow the distribution and corruption model described in Assumption~\ref{assume:corruption}.}. 
Both are near-sample optimal, polynomial-time, and assume only sub-Gaussianity.
The first is via a simple filtering approach, as summarized here (and developed in Section~\ref{sec:filtering}).

\begin{restatable}{theorem}{restatepolyfinal}
	\label{thm:poly-final}
	Under Assumption~\ref{assume:corruption}, let $\delta \in [0, 1]$, and $n = \Omega\Par{\tfrac{d + \log\delta^{-1}}{(\eps\log\eps^{-1})^2}}$. Algorithm~\ref{alg:filter} runs in time $O(\tfrac{nd^2}{\eps} \log\tfrac{n}{\delta\eps}\log\tfrac{n}{\delta})$, and outputs $u$ with $u^\top \covar u > (1 - C^\star \eps \log \eps^{-1}) \| \covar \|_\infty$, for $C^\star$ a fixed multiple of parameter $c$ in Assumption~\ref{assume:corruption}, with probability at least $1 - \delta$.
\end{restatable}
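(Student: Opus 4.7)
The plan is to run an iterative filtering scheme that maintains nonnegative weights $w \in [0,1]^n$ on the samples (initialized to $w = \one$) and in each iteration either certifies that the weighted top eigenvector is approximately correct or downweights a set of suspected outliers. Concretely, at each step compute the weighted second-moment matrix $\mm(w) = \sum_i w_i X_i X_i^\top$ and extract its top eigenpair $(\lambda, v)$ by power iteration; if $\lambda \le (1 + C \eps \log \eps^{-1})\norm{\covar}_\infty$ output $v$, else compute the projection scores $\tau_i = (v^\top X_i)^2$, choose a threshold $T$ at the sub-Gaussian tail scale $\Theta(\log \eps^{-1}) \cdot \lambda$, and downweight sample $i$ in proportion to $\max(0, \tau_i - T)$ (a soft filter rather than a hard cutoff).

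The correctness analysis rests on a single deterministic concentration event, which I would show holds with probability at least $1 - \delta$ at the stated sample complexity via matrix Bernstein plus an $\eps$-net on $S^{d-1}$: for every unit vector $v$ and every valid reweighting $w'$ of the uncorrupted set $G$ (i.e.\ $\sum_{i \in G}(1 - w'_i) \le O(\eps|G|)$), the weighted second-moment matrix of $G$ lies within $(1 \pm O(\eps \log \eps^{-1}))\covar$ in Loewner order, and the sub-Gaussian tail $\Pr_{i \in G}[(v^\top X_i)^2 \ge t \cdot v^\top \covar v]$ decays as $\exp(-\Omega(t))$ uniformly in $v$. The gap $\eps \log \eps^{-1}$ is precisely the mass a truncated sub-Gaussian tail at scale $\log \eps^{-1}$ contributes to the second moment at an $\eps$-fraction of samples, so both the sample bound and the approximation rate fall out of this step.

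The second ingredient is a filter invariant asserting that at every iteration, the cumulative good weight removed from $G$ is at most the cumulative bad weight removed from $B$. Given this invariant, concentration applies to $\mm(w)$ throughout the run, so the termination condition $\lambda \le (1 + C \eps \log \eps^{-1}) \norm{\covar}_\infty$ together with $v$ being an eigenvector of $\mm(w)$ yields $v^\top \covar v \ge (1 - C^\star \eps \log \eps^{-1}) \norm{\covar}_\infty$ directly. To preserve the invariant, I would argue contrapositively: if the algorithm does not terminate, then $\lambda$ exceeds what concentration on $G$ alone permits, so the excess of $\sum_i w_i \tau_i$ above the truncation $T$ must come predominantly from $B$; the soft filter therefore strictly removes more bad than good mass, by a direct algebraic comparison of the contributions of $G$ and $B$ to $\sum_i w_i \max(0, \tau_i - T)$.

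The main technical obstacle is obtaining the sharp $\eps \log \eps^{-1}$ rate rather than a $\sqrt{\eps}$-type rate. Hard thresholding on $\tau_i$ yields the weaker scaling because adversarial points can hide just below the threshold while still shifting $\lambda$; the soft filter resolves this by integrating the entire tail beyond $T$, but this forces a careful choice of $T$ at the sub-Gaussian scale $\log \eps^{-1}$ so that the good mass above $T$ is exactly at the $\eps \log \eps^{-1}$ tolerance. Finally, the runtime bound follows by charging $O(nd^2)$ per iteration for forming $\mm(w)$, $\tO(d^2)$ per power iteration to amplify success probability, and bounding the total number of filter steps by $O(\eps^{-1})$ up to logarithmic factors using a potential argument on $\norm{w}_1$ (each soft filter strictly contracts a weighted-mass potential by at least $1 - \Omega(\eps)$ after rescaling).
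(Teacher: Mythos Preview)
Your overall filtering framework and the invariant ``good weight removed $\le$ bad weight removed'' match the paper's approach, but there is a genuine gap in your termination test. You terminate when the top eigenvalue $\lambda$ of $\mm(w)$ satisfies $\lambda \le (1 + C\eps\log\eps^{-1})\norm{\covar}_\infty$. First, $\norm{\covar}_\infty$ is unknown to the algorithm. Second, and more seriously, even if it were known this condition does not certify that $v^\top \covar v$ is large. Consider $\covar = \diag{1, \tfrac{1}{2}}$: the adversary can place the $\eps n$ corrupted points along $e_2$ with squared norm $\Theta(\eps^{-1})$, so that $\mm(w) \approx \diag{1, 1 + \eta}$ for arbitrarily small $\eta > 0$. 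Then $\lambda = 1 + \eta \le (1 + C\eps\log\eps^{-1})\norm{\covar}_\infty$, your test passes, and you output $v = e_2$ with $v^\top \covar v = \tfrac{1}{2}$. The Loewner lower bound $\mm(w) \succeq (1 - O(\eps\log\eps^{-1}))\covar$ only gives $\lambda \gtrsim \norm{\covar}_\infty$; it says nothing about $v^\top \covar v$ because the adversary controls the eigen\emph{direction} of $\mm(w)$.

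The paper fixes this with a key additional ingredient you are missing: a robust one-dimensional variance estimator $\OneD$ (Lemma~\ref{lem:univariate}) that, given the corrupted sample and any unit $u$, returns $\sigma_u^2$ with $|\sigma_u^2 - u^\top\covar u| \le O(\eps\log\eps^{-1}) u^\top \covar u$. The termination test is then $u_t^\top \mm(w^{(t-1)}) u_t \le (1 + O(\eps\log\eps^{-1}))\sigma_t^2$, i.e.\ a comparison along the \emph{current direction} $u_t$, not against $\norm{\covar}_\infty$. In the counterexample above this test correctly fails, since $u_t^\top \mm(w) u_t \approx 1$ but $\sigma_t^2 \approx \tfrac{1}{2}$. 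At termination one combines $u_T^\top \mm(w) u_T \le (1 + O(\eps\log\eps^{-1})) u_T^\top \covar u_T$ with the Courant--Fischer consequence $u_T^\top \mm(w) u_T \ge (1 - O(\eps\log\eps^{-1}))\norm{\covar}_\infty$ to get the conclusion. Secondarily, your iteration bound of $O(\eps^{-1})$ via a $\norm{w}_1$-contraction argument is underspecified; the paper obtains $O(d\log(n/\delta))$ iterations by showing each step removes $\Omega\bigl(\eps/(d\log(n/\delta))\bigr)$ mass (using the preprocessing bound $\norm{X_i}_2^2 = O(\log(n/\delta))\Tr(\covar)$), and the stated $O(nd^2\eps^{-1})$ runtime comes from this iteration count times $\widetilde{O}(nd/\eps)$ per power iteration, not from explicitly forming $\mm(w)$.
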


Our second algorithm is more efficient under mild conditions, but yields a worse approximation $1 - \gamma$ for $\gamma = O(\sqrt{\eps\log\eps^{-1}\log d})$. Specifically, if there are few eigenvalues of $\covar$ larger than $1 - \gamma$, our algorithm runs in nearly-linear time. Note that if there are many eigenvalues above this threshold, then the PCA problem itself is not very well-posed; our algorithm is very efficient in the interesting setting where the approximate top eigenvector is identifiable. We state our main algorithmic guarantee here, and defer details to Section~\ref{sec:subspace}.
\begin{theorem}
	\label{thm:main-robust}
	Under Assumption~\ref{assume:corruption}, let $\delta \in [0, 1]$, $n =\nomega$, $\gamma = C \sqrt{\epsilon \log \eps^{-1} \log d}$,  for $C$ a fixed multiple of parameter $c$ from Assumption~\ref{assume:corruption}, and let $t \in [d]$ satisfy $\covar_{t + 1} < (1 - \gamma)\norm{\covar}_\infty$. Algorithm~\ref{alg:robust-pca} outputs a unit vector $u \in \R^d$ with $u^\top \covar u \geq (1 - \gamma) \| \covar\|_\infty$ in time $\tO(\tfrac{nd}{\eps^{4.5}} + \tfrac{ndt}{\eps^{1.5}})$.
\end{theorem}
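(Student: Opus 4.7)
The strategy is to replace the exact spectral-norm-based filtering behind Theorem~\ref{thm:poly-final} with a smooth Schatten-$p$ relaxation at $p = \Theta(\log d)$, and to solve the resulting convex program in nearly-linear time using the width-independent Schatten packing solver developed earlier. Concretely, I would set up the program
\[
\min_{w \in \fS} \norm{\mm(w)}_p, \qquad \mm(w) \defeq \sum_{i \in [n]} w_i X_i X_i^\top,
\]
where $\fS \subset \Delta^n$ is the set of admissible reweightings (weights capped at roughly $\tfrac{1}{(1 - \eps)n}$ to reflect the $\eps$-corruption model). Solving this approximately yields weights $w$ for which $\mm(w)$ concentrates around $\covar$ in its dominant eigendirections, and an approximate top eigenvector of $\mm(w)$ serves as the algorithm's output.

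First I would establish a structural certificate: for any $w \in \fS$ attaining $\norm{\mm(w)}_p$ within a $(1 + O(\gamma))$-factor of the minimum over $\fS$, the top eigenvector $u$ of $\mm(w)$ satisfies $u^\top \covar u \ge (1 - \gamma) \norm{\covar}_\infty$ for $\gamma = O(\sqrt{\eps \log \eps^{-1} \log d})$. This rests on two ingredients: (i) uniform sub-Gaussian concentration of $\norm{\mm(w)}_p$ simultaneously over all $w \in \fS$ and all $\eps$-corruption patterns, which is what pins down the sample bound $n = \nomega$; and (ii) a deterministic argument that a near-optimal $w$ must place little weight on corruption-induced spikes, thereby forcing spectral alignment of $\mm(w)$ with $\covar$. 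The weaker approximation rate relative to Theorem~\ref{thm:poly-final} reflects the cost of the smoother objective: the Schatten-$p$ gradient averages softly over all eigendirections, contributing a $\sqrt{\log d}$ factor versus $\norm{\cdot}_\infty$, and the sharpness of the deterministic certificate degrades from linear to square-root in $\eps$ because the Schatten norm cannot isolate a single dominant direction as precisely as the spectral norm.

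Second I would invoke the Schatten packing solver with $p = \Theta(\log d)$ and accuracy parameter tuned so the previous certificate applies; its $\tO(1/\eps)$ iteration count, combined with the per-iteration cost of applying $\mm(w)^{p-1}$ to vectors via Johnson--Lindenstrauss sketching and polynomial approximation to the matrix power, yields the $\tO(nd / \eps^{4.5})$ term. The gap assumption $\covar_{t + 1} < (1 - \gamma) \norm{\covar}_\infty$ then lets me extract an approximate top eigenvector of $\mm(w)$ by a block Krylov iteration on a $t$-dimensional subspace in $\tO(ndt / \eps^{1.5})$ time, producing the total runtime claimed. The main obstacle I anticipate is the deterministic half of step one: certifying spectral alignment with $\covar$ from near-optimality in the smoother Schatten surrogate will require a careful partition of the spectrum of $\mm(w)$ around the threshold $(1 - \gamma)\norm{\covar}_\infty$ and a delicate balancing of the contribution of the tail eigenvalues (which the Schatten-$p$ norm weights more heavily than $\norm{\cdot}_\infty$ does) against the concentration bound from ingredient (i).
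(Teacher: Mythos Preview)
Your overall architecture matches the paper's, but two of the key technical choices are off in ways that break the argument.

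\textbf{The exponent $p$.} You set $p = \Theta(\log d)$, but the paper takes $p = \Theta\bigl(\sqrt{\log d / (\eps\log\eps^{-1})}\bigr)$, and this is not cosmetic. The structural argument (the paper's Proposition~\ref{prop:containsgamma}) balances two constraints: one needs $(1-\gamma/4)^p \le O(1/d)$, forcing $p\gamma \gtrsim \log d$; and the adversary's ``Schatten budget'' $(1+\teps)^p - (1-\teps)^p \approx p\teps$ must be beaten by $\gamma$, forcing $\gamma \gtrsim p\teps$. Combining gives $\gamma^2 \gtrsim \teps\log d$, and the optimal $p$ is $\Theta(\sqrt{\log d/\teps})$. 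With your choice $p=\Theta(\log d)$ the first constraint would demand $\gamma = \Omega(1)$, which is useless. So your tradeoff explanation (``the Schatten-$p$ gradient averages softly'') is not the right mechanism, and the parameter choice must be changed.

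\textbf{Which eigenvector is good.} You assert that the top eigenvector of $\mm(w)$ already satisfies $u^\top\covar u \ge (1-\gamma)\norm{\covar}_\infty$. This is false in general: the corrupted mass $\mmb$ can concentrate on a single spurious direction and push it to the top of $\mm(w)$'s spectrum while still respecting the Schatten-$p$ bound. What the paper actually proves is that \emph{some} direction among the top $t$ (more precisely, among $\{\mm^{(p-1)/2} z_j / \norm{\mm^{(p-1)/2} z_j}_2\}_{j\in[t]}$ for approximate eigenvectors $z_j$) achieves the guarantee. This is exactly where the spectral-gap hypothesis on $\covar$ enters: it caps at $t$ the number of candidates one must search. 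Your plan is silent on how to pick one vector from the $t$-dimensional block; the paper resolves this by running the one-dimensional robust variance estimator $\OneD$ on each of the $t$ candidates and returning the one with the largest estimated variance. Without that selection step, the algorithm has no way to distinguish the true leading direction from an adversarial plant, and the correctness claim does not follow.
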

We remark that $\Omega (d \eps^{-2})$ samples are necessary for a $(1 - \eps)$-approximation to the top eigenvector of $\covar$ via uncorrupted samples from $\Nor(0, \covar)$, so our first method is sample-optimal, as is our second up to a $\tO(\eps^{-1})$ factor.

\textbf{Width-independent Schatten packing.} Our second method crucially requires an efficient solver for Schatten packing SDPs. We demonstrate that Schatten packing, i.e.~\eqref{eq:packing} for arbitrary $p$, admits width-independent solvers. We state an informal guarantee, and defer details to Section~\ref{sec:packing}.
	\begin{theorem}
	\label{thm:main-informal-1}
		Let $\{\ma_i\}_{i \in [n]} \in \PSD^d$, and $\eps > 0$. There is an algorithm taking $O(\tfrac{p \log (\frac{nd}{\eps})}{\eps})$ iterations, returning a $1 + \eps$ multiplicative approximation to the problem~\eqref{eq:packing}. For odd $p$, each iteration can be implemented in time nearly-linear in the number of nonzeros amongst all $\{\ma_i\}_{i \in [n]}$.
	\end{theorem}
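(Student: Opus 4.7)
The plan is to cast the Schatten-$p$ packing objective in a form suitable for entropic mirror descent on the simplex, then exploit positivity for a width-independent convergence guarantee and the algebraic structure of odd $p$ for an efficient per-iteration implementation. Since $x \mapsto x^{1/p}$ is monotone on $\R_{\ge 0}$, minimizing $\norm{M(w)}_p$ for $M(w) \defeq \sum_{i \in [n]} w_i \ma_i$ over $w \in \Delta^n$ is equivalent to minimizing $f(w) \defeq \Tr\Par{M(w)^p}$. The gradient is $\nabla_i f(w) = p\,\Tr\Par{M(w)^{p-1} \ma_i}$, which is entrywise nonnegative because $M(w) \succeq \mzero$ and $\ma_i \succeq \mzero$. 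This packing-like nonnegativity is precisely the structure that powers width-independent multiplicative-weights analyses in the $\ell_\infty$ case of \cite{JainY11, Allen-ZhuLO16, PengTZ16}, and the goal is to lift it to general $p$.

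Concretely, I would run entropic mirror descent updates $w_i^{(t+1)} \propto w_i^{(t)} \exp\Par{-\eta\,\widehat{g}_i^{(t)}}$ against a suitably truncated or rescaled estimate $\widehat{g}^{(t)}$ of $\nabla f(w^{(t)})/f(w^{(t)})$. The analysis tracks the KL divergence from an $\eps$-approximate minimizer $w^\star$ to $w^{(t)}$ and argues that in each iteration either this potential decreases by $\Omega(\eps/p)$ or $f(w^{(t)})^{1/p}$ is already within a $(1 + \eps)$ factor of the optimum; summing telescoped decreases against the initial mass $O(\log n)$ recovers the iteration count. The factor of $p$ arises because the gradient scales with the $(p-1)$-th power of the matrix spectrum, so producing a multiplicative-in-$f$ decrease per iteration requires effective step size $\eta = \Theta(\eps/p)$, while the $\log(nd/\eps)$ factor absorbs the initial KL mass together with the tolerance required for approximate trace queries. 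I expect the width-independent step to be the principal obstacle: a naive application of $\ell_\infty$ mirror descent incurs an additive error proportional to $\max_i \norm{\ma_i}_\infty$, and eliminating this requires a bucketing or thresholded-gradient argument exploiting that large gradient entries correspond to variables that must be near zero in any near-optimal solution.

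For odd $p$, the per-iteration work reduces to $\tO\Par{\sum_{i \in [n]} \nnz(\ma_i)}$ via Hutchinson-style trace estimation. Because $p - 1$ is even, $M(w)^{p - 1} = \Par{M(w)^{(p-1)/2}}^2$, so for $g \sim \Nor(\mzero, \id)$ the vector $v \defeq M(w)^{(p-1)/2} g$ satisfies $\E\Brack{v^\top \ma_i v} = \Tr\Par{\ma_i M(w)^{p-1}}$ for every $i \in [n]$. The vector $v$ is computed by $(p-1)/2$ matrix-vector multiplications of the form $M(w) u = \sum_i w_i (\ma_i u)$, each taking time $O\Par{\sum_i \nnz(\ma_i)}$; once $v$ is formed, all $n$ estimates $v^\top \ma_i v$ are produced in the same total time. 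Averaging $\tO(\eps^{-2})$ independent Gaussians and invoking standard Hanson--Wright (or Johnson--Lindenstrauss) concentration yields $(1 \pm O(\eps))$ multiplicative estimates of every gradient coordinate with high probability, which is sufficient to preserve the mirror-descent convergence while keeping the total per-iteration cost nearly-linear in the input sparsity and polynomial in $p$.
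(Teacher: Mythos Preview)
Your proposal correctly identifies the key structural ingredient (nonnegativity of the gradients $\Tr(M(w)^{p-1}\ma_i)$) and the per-iteration implementation for odd $p$ via Hutchinson/Johnson--Lindenstrauss is essentially identical to the paper's. However, the convergence argument has a genuine gap: you explicitly flag the width-independent step as ``the principal obstacle'' and defer to an unspecified ``bucketing or thresholded-gradient argument,'' without actually giving one. Running entropic mirror descent on $\Delta^n$ against $f(w)=\Tr(M(w)^p)$ (even normalized by $f(w^{(t)})$) does not obviously yield the claimed $\Omega(\eps/p)$ per-step progress, because individual gradient coordinates $\Tr(M(w)^{p-1}\ma_i)/\Tr(M(w)^p)$ can be arbitrarily large relative to one another, and the standard KL-telescoping bound picks up exactly this width. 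The bucketing heuristic you allude to is not obviously compatible with the Schatten structure, and the paper does not use anything like it.

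The paper's mechanism is substantially different from what you propose. It works over the positive orthant rather than the simplex, starting from a tiny $w_0$ and growing via multiplicative updates $w_{t+1}=w_t\circ(1+\eta g_t)$ with $g_t=\max(0,\,\mathbf{1}-\langle \ma_i,\mv_t^{p-1}\rangle)$, where $\mv_t=M(w_t)/\|M(w_t)\|_p$. The central technical lemma is that the potential $\Phi_t\defeq\|M(w_t)\|_p-\|w_t\|_1$ is monotone nonincreasing under this step with $\eta=p^{-1}$; the proof uses Lieb--Thirring and a Schur-complement trick $\mm_1\mm_0^{-1}\mm_1\preceq\mm_2$ to control the second-order term. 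This potential bound immediately gives primal feasibility once $\|w_t\|_1$ reaches scale $\eps^{-1}$. If $\|w_t\|_1$ fails to grow after $T=O(p\log(nd/\eps)/\eps)$ steps, a separate mirror-descent regret bound on the normalized iterates shows that the averaged dual variable $\bar{\my}=\tfrac{1}{T}\sum_t\mv_t^{p-1}$ certifies dual feasibility. The truncation in $g_t$ is what makes both halves work simultaneously: it keeps $g_t\in[0,1]^n$ so the potential argument goes through, while $g_t\ge \mathbf{1}-\ma^\top\mv_t^{p-1}$ suffices for the regret side. None of this is captured by your sketch, and the decision-problem/primal-dual framing is not something you would recover by patching a standard simplex mirror-descent analysis.
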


%

\section{Preliminaries}
\label{sec:prelims}

\textbf{General notation.} $[n]$ denotes the set $1 \le i \le n$. Applied to a vector, $\norm{\cdot}_p$ is the $\ell_p$ norm; applied to a symmetric matrix, $\norm{\cdot}_p$ is the Schatten-$p$ norm, i.e.\ the $\ell_p$ norm of the spectrum. The \emph{dual norm} of $\ell_p$ is $\ell_q$ for $q = \tfrac{p}{p - 1}$; when $p = \infty$, $q = 1$. $\Delta^n$ is the $n$-dimensional simplex (subset of positive orthant with $\ell_1$-norm $1$) and we define $\fS^n_{\veps} \subseteq \Delta^n$ to be the truncated simplex:
\begin{equation}\label{eq:truncsimp}\fS^n_{\veps} \defeq \Brace{w \in \R^n_{\ge 0} \;\Bigg| \; \norm{w}_1 = 1,\; w \le \frac{1}{n (1 - \veps)} \text{ entrywise}}.\end{equation}

\textbf{Matrices.} $\Sym^d$ is $d \times d$ symmetric matrices, and $\PSD^d$ is the positive semidefinite subset. $\id$ is the identity of appropriate dimension. $\lmax$, $\lmin$, and $\Tr$ are the largest and smallest eigenvalues and trace of a symmetric matrix. For $\mm, \mn \in \Sym^d$, $\inprod{\mm}{\mn} \defeq \Tr\Par{\mm\mn}$ and we use the Loewner order $\preceq$, ($\mm \preceq \mn$ iff $\mn - \mm \in \PSD^d$). The seminorm of $\mm \succeq 0$ is $\norm{v}_{\mm} \defeq \sqrt{v^\top\mm v}$. 

\begin{fact}\label{fact:trprod}
For $\ma$, $\mb$ with compatible dimension, $\Tr(\ma\mb) = \Tr(\mb\ma)$. For $\mm, \mn \in \PSD^d$, $\inprod{\mm}{\mn} \ge 0$.
\end{fact}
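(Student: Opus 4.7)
The plan is to prove the two assertions separately, both by reducing to direct index-level or spectral computations. Since these are standard linear-algebraic facts, the main task is to record the argument cleanly rather than to overcome any genuine obstacle.

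For the first assertion, I would expand the trace by its definition as the sum of diagonal entries and swap the order of summation. Writing $\ma \in \R^{n \times m}$ and $\mb \in \R^{m \times n}$, the $i$-th diagonal entry of $\ma\mb$ is $\sum_{j \in [m]} \ma_{ij} \mb_{ji}$, so $\Tr(\ma \mb) = \sum_{i \in [n]} \sum_{j \in [m]} \ma_{ij} \mb_{ji}$. Reindexing the (finite) double sum and collecting terms yields $\sum_{j \in [m]} \sum_{i \in [n]} \mb_{ji} \ma_{ij} = \Tr(\mb \ma)$.

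For the second assertion, the natural approach is to use a square root or Cholesky-type factorization of one of the PSD matrices together with the cyclic property just proved. Since $\mm \in \PSD^d$, there exists $\mathbf{L} \in \R^{d \times d}$ with $\mm = \mathbf{L} \mathbf{L}^\top$ (for example, take $\mathbf{L} = \mm^{1/2}$). Then
\begin{equation*}
\inprod{\mm}{\mn} = \Tr(\mathbf{L} \mathbf{L}^\top \mn) = \Tr(\mathbf{L}^\top \mn \mathbf{L}),
\end{equation*}
using the cyclic property. Each diagonal entry of $\mathbf{L}^\top \mn \mathbf{L}$ has the form $(\mathbf{L} e_i)^\top \mn (\mathbf{L} e_i) \ge 0$, since $\mn \succeq 0$ implies $v^\top \mn v \ge 0$ for every $v \in \R^d$. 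Summing these nonnegative quantities gives $\Tr(\mathbf{L}^\top \mn \mathbf{L}) \ge 0$, which concludes the proof.

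There is no real obstacle here; the only minor subtlety is choosing the right factorization so that the cyclic trace identity lines up the PSD structure of both matrices. Using $\mm = \mathbf{L}\mathbf{L}^\top$ and applying the cyclic identity converts the inner product into a sum of scalar quadratic forms in $\mn$, which is the cleanest route.
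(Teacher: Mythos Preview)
Your proof is correct. The paper does not actually prove this statement: it is recorded as a ``Fact'' in the preliminaries and left without argument, being entirely standard. Your index computation for $\Tr(\ma\mb)=\Tr(\mb\ma)$ and the factorization $\mm=\mathbf{L}\mathbf{L}^\top$ followed by cyclicity to reduce $\inprod{\mm}{\mn}$ to a sum of nonnegative quadratic forms is exactly the canonical justification one would give if pressed.
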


\begin{fact}\label{fact:qnormcert}
We have the following characterization of the Schatten-$p$ norm: for $\mm \in \Sym^d$, and $q = \tfrac{p}{p - 1}$,
\[\norm{\mm}_p = \sup_{\substack{\mn \in \Sym^d,\; \norm{\mn}_q = 1}} \inprod{\mn}{\mm}.\]
For $\mm = \sum_{j \in [d]} \lam_i v_i v_i^\top$, the satisfying $\mn$ is $\tfrac{\sum_{j \in [d]} \pm \lam_i^{p - 1} v_i v_i^\top}{\norm{\mm}_p^{p - 1}}$, so $\mn\mm$ has spectrum $\tfrac{|\lam|^p}{\norm{\mm}_p^{p - 1}}$. 
\end{fact}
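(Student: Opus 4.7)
The identity asserts that Schatten-$p$ and Schatten-$q$ are dual with respect to the trace pairing $\inprod{\cdot}{\cdot}$, and it additionally identifies the maximizer. I would split the argument into (i) an upper bound $\inprod{\mn}{\mm} \le \norm{\mm}_p$ valid for every admissible $\mn$, and (ii) an explicit matching witness constructed from the spectral decomposition of $\mm$.

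For (i), the standard tool is von Neumann's trace inequality: writing $\sigma_1(\cdot) \ge \dots \ge \sigma_d(\cdot) \ge 0$ for singular values, one has $|\inprod{\mn}{\mm}| \le \sum_{i \in [d]} \sigma_i(\mn)\sigma_i(\mm)$ for any $\mn, \mm \in \Sym^d$. Chaining this with the scalar H\"older inequality applied to the two singular-value vectors at dual exponents $p, q$ gives $|\inprod{\mn}{\mm}| \le \norm{\mn}_q \norm{\mm}_p$, so restricting to $\norm{\mn}_q = 1$ yields $\sup \inprod{\mn}{\mm} \le \norm{\mm}_p$. (A self-contained alternative is to note that both sides are unitarily invariant in $\mn$ at fixed $\mm$, so the supremum is achieved by an $\mn$ simultaneously diagonalized with $\mm$, reducing to the scalar H\"older duality.)

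For (ii), diagonalize $\mm = \sum_{i \in [d]} \lam_i v_i v_i^\top$ in an orthonormal eigenbasis and take
\[
\mn \defeq \frac{1}{\norm{\mm}_p^{p-1}} \sum_{i \in [d]} \mathrm{sgn}(\lam_i)\, |\lam_i|^{p-1}\, v_i v_i^\top.
\]
Since $\mn$ and $\mm$ share the basis $\{v_i\}$, the product $\mn\mm$ has spectrum $\{|\lam_i|^p / \norm{\mm}_p^{p-1}\}_{i \in [d]}$, which matches the claim; summing yields $\inprod{\mn}{\mm} = \Tr(\mn\mm) = \norm{\mm}_p^{-(p-1)} \sum_{i \in [d]} |\lam_i|^p = \norm{\mm}_p$. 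Using $(p-1)q = p$, we verify $\norm{\mn}_q^q = \norm{\mm}_p^{-(p-1)q} \sum_{i \in [d]} |\lam_i|^{(p-1)q} = \norm{\mm}_p^{-p} \sum_{i \in [d]} |\lam_i|^p = 1$. Combined with (i), this simultaneously establishes the identity and exhibits the claimed maximizer.

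The only real subtlety is the $\pm$ choice, which is forced because $\mm$ is merely symmetric (eigenvalues can be negative), so the witness must carry $\mathrm{sgn}(\lam_i)$ rather than all $+$; the degenerate case $\mm = \mzero$ is handled separately (both sides vanish), and the boundary exponents $p \in \{1, \infty\}$ either follow directly from the definition or by a limiting argument from $p \in (1, \infty)$.
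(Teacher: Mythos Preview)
Your proof is correct and is the standard argument for Schatten duality. Note, however, that the paper does not supply its own proof of this statement: it is recorded as a ``Fact'' in the preliminaries and is used without justification, so there is no proof in the paper to compare against.
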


\textbf{Distributions.} We denote drawing vector $X$ from distribution $\dist$ by $X \sim \dist$, and the covariance $\covar$ of $\dist$ is $\E_{X \sim \dist}\Brack{XX^\top}$. We say scalar distribution $\dist$ is $\gamma^2$-sub-Gaussian if $\E_{X \sim \dist}[X] = 0$ and 
\[\E_{X \sim \dist}\Brack{\exp\Par{tX}} \le \exp\Par{\frac{t^2\gamma^2}{2}} \; \forall t \in \R.\]
Multivariate $\dist$ has sub-Gaussian proxy $\mprox$ if its restriction to any unit $v$ is $\norm{v}_{\mprox}^2$-sub-Gaussian, i.e.\
\begin{equation}\label{eq:subg_def}\E_{X \sim \dist}\Brack{\exp\Par{tX^\top v}} \le \exp\Par{\frac{t^2\norm{v}_{\mprox}^2}{2}} \text{ for all }\norm{v}_2 = 1,\; t \in \R.\end{equation}

We consider the following standard model for gross corruption with respect to distribution a $\dist$. 

\begin{assumption}[Corruption model, see~\cite{DiakonikolasKKLMS19}]\label{assume:corruption}
Let $\dist$ be a mean-zero distribution on $\R^d$ with covariance $\covar$ and sub-Gaussian proxy $\mprox \preceq c\covar$ for a constant $c$. Denote by index set $G'$ with $|G'| = n$ a set of (uncorrupted) samples $\{X_i\}_{i \in G'} \sim \dist$. An adversary arbitrarily replaces $\eps n$ points in $G'$; we denote the new index set by $[n] = B \cup G$, where $B$ is the (unknown) set of points added by an adversary, and $G \subseteq G'$ is the set of points from $G'$ that were not changed. 
\end{assumption}

As we only estimate covariance properties, the assumption that $\dist$ is mean-zero only loses constants in problem parameters, by pairing samples and subtracting them (cf.\ \cite{DiakonikolasKKLMS19}, Section 4.5.1). 

\section{Robust sub-Gaussian PCA via filtering}
\label{sec:filtering}

In this section, we sketch the proof of Theorem~\ref{thm:poly-final}, which gives guarantees on our filtering algorithm for robust sub-Gaussian PCA. This algorithm obtains stronger statistical guarantees than Theorem~\ref{thm:main-robust}, at the cost of super-linear runtime; the algorithm is given as Algorithm~\ref{alg:filter}. Our analysis stems largely from concentration facts about sub-Gaussian distributions, as well as the following (folklore) fact regarding estimation of variance along any particular direction.

\begin{restatable}{lemma}{univariate}\label{lem:univariate}
	Under Assumption~\ref{assume:corruption}, let $\delta \in [0, 1]$, $n = \Omega \left(\tfrac{\log \delta^{-1}}{(\eps \log \eps^{-1})^{2}} \right)$, and $u \in \R^d$ be a fixed unit vector. Algorithm~\ref{alg:univariate}, $\OneD$, takes input $\{X_i\}_{i \in [n]}$, $u$, and $\eps$, and outputs $\sigma_u^2$ with $|u^\top \covar u - \sigma_u^2| < C u^\top \covar u \cdot \epsilon \log \eps^{-1}$ with probability at least $1 - \delta$, and runs in time
	$O(nd + n \log n)$, for $C$ a fixed multiple of the parameter $c$ in Assumption~\ref{assume:corruption}.
\end{restatable}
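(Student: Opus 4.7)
The plan is to reduce the task to one-dimensional robust variance estimation by projection. Let $Y_i \defeq u^\top X_i$ for $i \in [n]$; for $i \in G$ the $Y_i$ are i.i.d.\ from a mean-zero scalar distribution with variance $\sigma^2 \defeq u^\top \covar u$ and, by \eqref{eq:subg_def}, sub-Gaussian proxy $u^\top \mprox u \le c\sigma^2$. I expect $\OneD$ to sort $|Y_1|, \dots, |Y_n|$ and output the empirical second moment of the samples surviving after discarding the top $2\eps n$ values by magnitude; this gives the claimed $O(nd + n\log n)$ runtime via projection and sorting. For the analysis, fix the threshold $T \defeq C_0 \sigma \sqrt{\log \eps^{-1}}$ for a sufficiently large absolute constant $C_0$ so that the sub-Gaussian tail bound yields $\Pr[|Y| > T] \le \eps$ and $\E[Y^2 \one\{|Y|>T\}] \le C_1 \sigma^2 \eps \log \eps^{-1}$ (the latter by integration of the tail).

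On the good samples I would establish two concentration statements holding simultaneously with probability $\ge 1 - \delta$: (i) $|\{i\in G : |Y_i| > T\}| \le 2\eps|G|$ by a Chernoff bound, which needs only $n = \Omega(\eps^{-1}\log\delta^{-1})$; and (ii) the truncated empirical second moment $\tfrac{1}{|G|}\sum_{i\in G} Y_i^2 \one\{|Y_i|\le T\}$ lies within additive $C_2 \sigma^2 \eps \log \eps^{-1}$ of its expectation. For (ii), each summand lies in $[0,T^2] = [0,O(\sigma^2 \log \eps^{-1})]$ with mean $\sigma^2(1 - O(\eps\log\eps^{-1}))$, so Bernstein's inequality yields the claimed deviation with failure probability $\delta$ precisely when $n = \Omega(\log \delta^{-1}/(\eps \log \eps^{-1})^2)$, matching the hypothesis and pinning down why this sample complexity is tight.

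To finish, I would compare the algorithm's output to the ideal quantity in (ii). On the event that (i) holds, the surviving index set $S$ and $G_{\le T} \defeq \{i \in G : |Y_i| \le T\}$ differ by at most $3\eps n$ indices: the adversary contributes $\eps n$ bad samples, and at most $2\eps|G|$ good samples lie above $T$, so both $|S \setminus G_{\le T}|$ and $|G_{\le T} \setminus S|$ are $O(\eps n)$, and crucially every involved sample has $Y_i^2 \le T^2 = O(\sigma^2 \log \eps^{-1})$. Each symmetric-difference term therefore shifts the per-sample average by at most $O(\sigma^2 \eps \log \eps^{-1})$; combining with (ii) and $|\sigma^2 - \E[Y^2 \one\{|Y|\le T\}]| \le C_1 \sigma^2 \eps \log \eps^{-1}$ yields $|\sigma_u^2 - \sigma^2| \le C \sigma^2 \eps \log \eps^{-1}$, as required.

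The main obstacle is ensuring the bounded-magnitude property $|Y_i| \le T$ for every surviving sample without the algorithm knowing $\sigma$ (and hence $T$) in advance: since trimming is by rank rather than by an explicit threshold, one must argue that the $(2\eps n)^{\text{th}}$-largest of the $|Y_i|$ is itself at most $T$ on the good event. This follows from (i), because at most $2\eps|G| + \eps n \le 3\eps n$ indices in total satisfy $|Y_i| > T$, so the rank-$2\eps n$ magnitude is $\le T$ and the symmetric-difference bookkeeping above goes through cleanly.
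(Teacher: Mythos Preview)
Your approach is essentially the same as the paper's: project, sort by magnitude, trim the top $2\eps n$, and bound the error by (a) showing a good-point Chernoff bound puts few good projections above a threshold $T$, (b) concentrating the truncated empirical second moment, and (c) accounting for the symmetric difference between the survivor set and the ``good below $T$'' set using the uniform bound $T^2 = O(\sigma^2 \log\eps^{-1})$ on each term.

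There is one arithmetic slip to fix in your last paragraph. You prove $|\{i\in G:|Y_i|>T\}|\le 2\eps|G|$, then add the $\eps n$ adversarial indices to get at most $3\eps n$ indices with $|Y_i|>T$, and conclude that the $(2\eps n)^{\text{th}}$-largest magnitude is $\le T$. That does not follow: if up to $3\eps n$ indices exceed $T$ and you remove only the top $2\eps n$, some survivors can still exceed $T$, which breaks the ``every involved sample has $Y_i^2\le T^2$'' step for $S\setminus G_{\le T}$. The paper avoids this by taking the threshold large enough that the Chernoff bound gives $|\{i\in G': a_i>T\}|\le \eps n$ (not $2\eps n$), so the total count above $T$ is at most $2\eps n$ and trimming $2\eps n$ really does force all survivors below $T$. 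Your argument closes the same way once you either tighten (i) to $\eps n$ (e.g.\ pick $C_0$ so that $\Pr[|Y|>T]\le \eps/2$) or have the algorithm trim $3\eps n$ points instead; either change affects only constants.
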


In other words, we show that using corrupted samples, we can efficiently estimate a $1 + O(\eps\log\eps^{-1})$-multiplicative approximation of the variance of $\dist$ in any unit direction\footnote{Corollary~\ref{corr:oned-independence} gives a slightly stronger guarantee that reusing samples does not break dependencies of $u$.}. This proof is deferred to Appendix~\ref{app:filtering} for completeness. Algorithm~\ref{alg:filter} combines this key insight with a soft filtering approach, suggested by the following known structural fact found in previous work (e.g.\ Lemma A.1 of \cite{dong2019quantum}, see also \cite{steinhardt2017resilience,steinhardt2018robust}).

\begin{lemma}
	\label{lem:uni-filter}
	Let $\{a_i\}_{i \in [m]}$, $\{w_i\}_{i \in [m]}$ be sets of nonnegative reals, and $a_{\max} = \max_{i \in [m]} a_i$. Define
	$
	w'_i = \Par{1 - \frac{a_i}{a_{\max}}} w_i$, for all $i \in [m]$.
	Consider any disjoint partition $I_B$, $I_G$ of $[m]$ with
	$
	\sum_{i \in I_B} w_i a_i > \sum_{i \in I_G} w_i a_i.
	$
	Then, $\sum_{i \in I_B} w_i - w_i' > \tfrac{1}{2 a_{\max}} \sum_{i \in [m]} w_i a_i > \sum_{i \in I_G} w_i - w_i'$.
\end{lemma}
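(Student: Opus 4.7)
The proof is essentially a one-line algebraic computation, so my plan is simply to lay out the key identities and the partition argument cleanly. First, I would unpack the definition of $w_i'$: by construction
\[ w_i - w_i' = w_i - \left(1 - \frac{a_i}{a_{\max}}\right) w_i = \frac{w_i a_i}{a_{\max}}. \]
This identity is the heart of the lemma. Summing over any subset $I \subseteq [m]$ gives $\sum_{i \in I} (w_i - w_i') = \frac{1}{a_{\max}} \sum_{i \in I} w_i a_i$, so the claimed inequality becomes a statement purely about $\sum w_i a_i$ restricted to $I_B$ versus $I_G$.

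The second step is to use the hypothesis $\sum_{i \in I_B} w_i a_i > \sum_{i \in I_G} w_i a_i$ together with the fact that $\{I_B, I_G\}$ is a disjoint partition of $[m]$. Since the two sums add up to $\sum_{i \in [m]} w_i a_i$, the larger of them (namely the one indexed by $I_B$) strictly exceeds $\tfrac{1}{2}\sum_{i \in [m]} w_i a_i$, and the smaller (indexed by $I_G$) is strictly less. Dividing through by $a_{\max}$ and plugging back into the identity from the previous paragraph yields both of the desired strict inequalities
\[ \sum_{i \in I_B} (w_i - w_i') > \frac{1}{2 a_{\max}} \sum_{i \in [m]} w_i a_i > \sum_{i \in I_G} (w_i - w_i'). \]

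There is no real obstacle here; the only thing worth noting is that the argument implicitly uses $a_{\max} > 0$ (otherwise all $a_i$ vanish, $w_i' = w_i$, and the hypothesis of a strict inequality between the two sums cannot hold), but this is a harmless edge case. The conceptual content of the lemma, which I would only briefly remark on, is that the soft downweighting $w_i \mapsto w_i (1 - a_i/a_{\max})$ removes mass from $I_B$ and $I_G$ in exact proportion to their $w_i a_i$-mass, so any scoring function $a_i$ that places more weighted mass on $I_B$ than on $I_G$ automatically translates into strictly more mass removed from the bad set than the good set — which is precisely the invariant needed to drive a filtering argument.
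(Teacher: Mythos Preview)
Your proof is correct and is exactly the standard argument for this fact. The paper does not actually supply its own proof of this lemma; it states it as a known structural fact and cites prior work (Lemma~A.1 of \cite{dong2019quantum} and \cite{steinhardt2017resilience,steinhardt2018robust}), so there is nothing to compare against beyond noting that your derivation matches the expected one-line computation.
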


Our Algorithm~\ref{alg:filter}, $\PCAFilter$, takes as input a set of corrupted samples $\{X_i\}_{i \in [n]}$ following Assumption~\ref{assume:corruption} and the corruption parameter $\eps$. At a high level, it initializes a uniform weight vector $w^{(0)}$, and iteratively operates as follows (we denote by $\mm(w)$ the empirical covariance $\sum_{i \in [n]} w_i X_i X_i^\top$).
\begin{enumerate}
	\item $u_t \gets $ approximate top eigenvector of $\mm(w^{(t - 1)})$ via power iteration.
	\item Compute $\sigma_t^2 \gets \OneD(\{X_i\}_{i \in [n]}, u_t, \eps)$.
	\item If $\sigma_t^2 > (1 - O(\eps\log\eps^{-1})) \cdot u_t^\top \mm(w^{(t - 1)}) u_t$, then terminate and return $u_t$.
	\item Else:
	\begin{enumerate}
		\item Sort indices $i \in [n]$ by $a_i \gets \inprod{u_t}{X_i}^2$, with $a_1$ smallest.
		\item Let $\ell \le i \le n$ be the smallest set for which $\sum_{i = \ell}^n w_i \ge 2\eps$, and apply the downweighting procedure of Lemma~\ref{lem:uni-filter} to this subset of indices.
	\end{enumerate}
\end{enumerate}

The analysis of Algorithm~\ref{alg:filter} then proceeds in two stages.

\paragraph{Monotonicity of downweighting.} We show the invariant criteria for Lemma~\ref{lem:uni-filter} (namely, that for the set $\ell \le i \le n$ in every iteration, there is more spectral mass on bad points than good) holds inductively for our algorithm. Specifically, lack of termination implies $\mm(w^{(t - 1)})$ puts significant mass on bad directions, which combined with concentration of good directions yields the invariant. The details of this argument can be found as Lemma~\ref{lem:inv-holds}.

\paragraph{Roughly uniform weightings imply approximation quality.} As Lemma~\ref{lem:uni-filter} then applies, the procedure always removes more mass from bad points than good, and thus can only remove at most $2\eps$ mass total by the corruption model. Thus, the weights $w^{(t)}$ are always roughly uniform (in $\fS_{O(\eps)}^n$), which by standard concentration facts (see Appendix~\ref{app:concentration}) imply the quality of the approximate top eigenvector is good. Moreover, the iteration count is bounded by roughly $d$ because whenever the algorithm does not terminate, enough mass is removed from large spectral directions. Combining with the termination criteria imply that when a vector is returned, it is a close approximation to the top direction of $\covar$. Details can be found as Lemma~\ref{lem:output-correct} and in the proof of Theorem~\ref{thm:poly-final}.

\section{Schatten packing}
\label{sec:packing}

\subsection{Mirror descent interpretation of \cite{MahoneyRWZ16}}
\label{ssec:mrwzonsimplex}
We begin by reinterpreting the \cite{MahoneyRWZ16} solver, which achieves the state-of-the-art parallel runtime for packing LPs\footnote{The \cite{MahoneyRWZ16} solver also generalizes to covering and mixed objectives; we focus on packing in this work.}. An $(\ell_\infty)$ packing LP algorithm solves the following decision problem.\footnote{Packing linear programs are sometimes expressed as the optimization problem $\max_{x \ge 0, \ma x \le \1} \norm{x}_1$, similarly to \eqref{eq:packing}; these problems are equivalent up to a standard binary search, see e.g.\ discussion in \cite{JambulapatiLLPT20}.}.

\begin{problem}[$\ell_\infty$ packing linear program]
\label{problem:packing}
Given entrywise nonnegative $\ma \in \mathbb{R}_{\geq 0}^{d \times n}$, either find primal solution $x \in \Delta^n$ with $\norm{\ma x}_\infty \leq 1+\epsilon$ or dual solution $y \in \Delta^d$ with $\ma^\top y \geq (1-\epsilon) \1$. 
\end{problem}

\begin{algorithm}
	\caption{$\PackingLP(\ma,\epsilon)$}
	\begin{algorithmic}[1]\label{alg:mrwz}
	\STATE \textbf{Input:} $\ma \in \R^{d \times n}_{\ge 0}$, $\eps \in [0, \thalf]$
	\STATE $K \gets \tfrac{3 \log (d)}{\epsilon}$, $\eta \gets K^{-1}$, $T \gets \tfrac{4 \log(d)\log(nd/\eps)}{\epsilon^2}$
	\STATE $[w_0]_i \gets \tfrac{\eps}{n^2 d}$ for all $i \in [n]$, $z \gets \mzero$, $t \gets 0$
	\WHILE{$\ma w_t \leq K \1$, $\norm{w_t}_1 \le K$}
		\STATE $v_t \gets \tfrac{\exp(\ma w_t)}{\norm{\exp(\ma w_t)}_1}$
		\STATE $g_t \gets \max(0, \1 - \ma^\top v_t)$ entrywise
		\STATE $w_{t + 1} \gets w_t \circ (1 + \eta g_t)$, $z \gets z + v_t$, $t \gets t + 1$
		\IF{$t \geq T$}
			\RETURN{$y \gets \frac{1}{T} z$}
		\ENDIF
	\ENDWHILE
	\RETURN{ $x \gets \frac{ w_t }{\norm{w_t}_1}$ }
	\end{algorithmic}
\end{algorithm}
The following result is shown in \cite{MahoneyRWZ16}.
\begin{proposition}
\label{prop:mrwz}
$\PackingLP$ (Algorithm~\ref{alg:mrwz}) solves Problem~\ref{problem:packing} in $O(\nnz(\ma) \cdot \tfrac{\log(d)\log(nd/\eps)}{\epsilon^2})$ time. 
\end{proposition}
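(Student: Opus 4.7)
The plan is to view Algorithm~\ref{alg:mrwz} as a coupled pair of entropic mirror descent iterations on the bilinear saddle-point formulation of Problem~\ref{problem:packing}. The dual iterate $v_t = \exp(\ma w_t) / \norm{\exp(\ma w_t)}_1$ is the softmax projection onto $\Delta^d$ of the linearized payoff $\inprod{y}{\ma w_t}$, while the multiplicative step $w_{t+1} = w_t \circ (1 + \eta g_t)$ with $g_t = \max(0, \1 - \ma^\top v_t)$ is the standard multiplicative weights update on the primal driven by the clipped dual violation. The crucial structural property ensuring width-independence is the loop invariant $\ma w_t \le K \1$, together with $g_t \in [0,1]^n$ (immediate from nonnegativity of $\ma$ and $v_t$) and $\eta = 1/K$: these imply that $1 + \eta g_t$ is well-approximated by $\exp(\eta g_t)$, enabling a clean potential-function analysis.

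First, I would analyze the case where the algorithm runs the full $T$ iterations. The plan is to track the log-partition potential $\Phi_t = \log\norm{\exp(\ma w_t)}_1$, whose gradient at $w_t$ is exactly $\ma^\top v_t$, and a matching primal potential $\Psi_t = \log\norm{w_t}_1$. A second-order expansion using the invariant $\eta[\ma(w_t \circ g_t)]_i \le 1$ yields $\Phi_{t+1} - \Phi_t \le \eta \inprod{\ma^\top v_t}{w_t \circ g_t}(1 + O(\eps))$, while the multiplicative step gives $\Psi_{t+1} - \Psi_t \ge \eta \inprod{\1}{w_t \circ g_t}/\norm{w_t}_1 \cdot (1 - O(\eps))$ from $\log(1 + \eta g_t) \ge \eta g_t - O(\eta^2)$. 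Subtracting these, using $\Phi_T - \Phi_0 = O(K)$ (from the loop invariant combined with $\Phi_t \le \log d + \max_i [\ma w_t]_i$) and $\Psi_T - \Psi_0 = O(\log(Knd/\eps))$, the resulting regret bound forces $\1 - \ma^\top \bar v \le \eps \1$ for $\bar v = z/T$, certifying dual feasibility at the advertised iteration count $T = O(\log(d) \log(nd/\eps)/\eps^2)$.

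Next, I would handle exits via failure of the while condition. If $\norm{w_t}_1 > K$ triggers exit, the per-step growth bound $\ma w_t \le (1 + \eta) \ma w_{t-1} \le (1 + \eta) K \1$ combined with $\norm{w_t}_1 > K$ immediately yields $\norm{\ma x}_\infty \le 1 + \eta \le 1 + \eps$ for $x = w_t/\norm{w_t}_1$. The harder case is exit via $\norm{\ma w_t}_\infty > K$: here the main obstacle is to show that $\norm{w_t}_1$ has grown in proportion, specifically $\norm{w_t}_1 \ge K/(1+\eps)$. This follows from the same primal-dual potential coupling: since $\Phi_t$ is within $\log d$ of $\norm{\ma w_t}_\infty$, and the updates drive $\Phi_t$ and $\Psi_t$ at the same rate (modulo $O(\eps)$ multiplicative slack), the $\log d$ gap incurs only a $1 + O(\eps)$ factor after rescaling by $K = 3\log(d)/\eps$, yielding the required primal feasibility.

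Finally, the runtime follows directly: each iteration performs one matrix-vector product to form $\ma w_t$ and one to form $\ma^\top v_t$, each in $O(\nnz(\ma))$ time, plus $O(n+d)$ additional work for normalization and the entrywise update. Since there are at most $T = O(\log(d)\log(nd/\eps)/\eps^2)$ iterations before return via either branch, the total cost is $O(\nnz(\ma) \cdot \log(d) \log(nd/\eps)/\eps^2)$, as claimed.
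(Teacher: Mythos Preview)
Your high-level plan matches the paper's two-ingredient framework (potential monotonicity for primal feasibility; a mirror-descent-style regret bound for dual feasibility), but both pieces are not quite set up correctly and the dual side has a genuine gap.

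\textbf{Wrong primal potential.} The monotone quantity is $\log\bigl(\sum_{j}\exp([\ma w_t]_j)\bigr) - \norm{w_t}_1$, i.e.\ the softmax minus $\norm{w_t}_1$ \emph{linearly}, not minus $\Psi_t = \log\norm{w_t}_1$. Your own per-step bound already shows why: $\Phi_{t+1}-\Phi_t \le \eta\inprod{\ma^\top v_t}{w_t\circ g_t}(1+O(\eps))$, and since $[\ma^\top v_t]_i < 1$ whenever $[g_t]_i > 0$, this is at most $\eta\inprod{g_t}{w_t}(1+O(\eps)) = (\norm{w_{t+1}}_1-\norm{w_t}_1)(1+O(\eps))$. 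That comparison is additive in $\norm{w_t}_1$, not in $\log\norm{w_t}_1$; your bound on $\Psi_{t+1}-\Psi_t$ carries an extra $1/\norm{w_t}_1$ factor that prevents the two from matching. With the correct linear potential, exit via $\norm{\ma w_t}_\infty > K$ gives $\norm{w_t}_1 \ge K - \log d$ directly from monotonicity and $\Phi_0 \approx \log d$, which is the paper's argument.

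\textbf{Dual feasibility gap.} The statement ``subtracting these, \ldots\ the resulting regret bound forces $\1-\ma^\top\bar v \le \eps\1$'' does not follow. Subtracting aggregate potentials $\Phi_T-\Phi_0$ and $\Psi_T-\Psi_0$ only gives information about the specific iterate sequence $\{w_t\}$; it cannot by itself bound $\sum_t (1-[\ma^\top v_t]_i)$ for \emph{every} coordinate $i$ separately, which is what dual feasibility requires. The missing ingredient is a per-coordinate telescope: since $[w_{t+1}]_i = [w_t]_i(1+\eta[g_t]_i)$ and $\log(1+\eta g) \ge \eta(1-\eta)g$, one gets
\[
\eta(1-\eta)\sum_{0\le t<T}[g_t]_i \le \log\frac{[w_T]_i}{[w_0]_i} \le \log\frac{\norm{w_T}_1}{[w_0]_i} = O\Bigl(\log\tfrac{nd}{\eps}\Bigr),
\]
and combining $[g_t]_i \ge 1-[\ma^\top v_t]_i$ with the choice of $T$ finishes. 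Equivalently (and this is how the paper phrases it in the proof of Theorem~\ref{thm:wgrowth}), telescope the KL divergence $V_{x_t}(u)$ to an arbitrary $u\in\Delta^n$ and then specialize to $u=e_i$. Either way, the analysis must track individual coordinates of $w_t$, not just $\norm{w_t}_1$.
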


Oour interpretation of the analysis of \cite{MahoneyRWZ16}, combines two ingredients: a potential argument and mirror descent, which yields a dual feasible point if $\norm{w_t}_1$ did not grow sufficiently.

\textbf{Potential argument.} The potential used by \cite{MahoneyRWZ16} is $\log(\sum_{j \in [d]}\exp([\ma w_t]_j)) - \norm{w_t}_1$, well-known to be a $O(\log d)$-additive approximation of $\norm{\ma w_t}_\infty - \norm{w_t}_1$. As soon as $\norm{\ma w_t}_\infty$ or $\norm{w_t}_1$ reaches the scale $O(\tfrac{\log d}{\eps})$, by nonnegativity this becomes a multiplicative guarantee, motivating the setting of threshold $K$. To prove the potential is monotone, \cite{MahoneyRWZ16} uses step size $K^{-1}$ and a Taylor approximation; combining with the termination condition yields the desired claim.

\textbf{Mirror descent.} To certify that $w_t$ grows sufficiently (e.g.\ the method terminates in few iterations, else dual feasibility holds), we interpret the step $w_{t + 1} \gets w_t \circ (1 + \eta g_t)$ as approximate entropic mirror descent. Specifically, we track the quantity $\sum_{0 \le t < T} \inprod{\eta g_t}{u}$, and show that if $\norm{w_t}_1$ has not grown sufficiently, then it must be bounded for every $u \in \Delta^n$, certifying dual feasibility. Formally, for any $g_t$ sequence and $u \in \Delta^n$, we show
\[O(\log(nd/\eps)) + \log\Par{\frac{\norm{w_T}_1}{\norm{w_0}_1}}\ge \sum_{0 \le t < T} \inprod{\eta g_t}{u} \ge \eta\sum_{0 \le t < T} \inprod{\1 - \ma^\top v_t}{u}. \]
The last inequality followed by $g_t$ being an upwards truncation. If $\norm{w_T}_1$ is bounded (else, we have primal feasibility), we show the entire above expression is bounded $O(\log\tfrac{nd}{\eps})$ for any $u$. Thus, by setting $T = O(\tfrac{\log(nd/\eps)}{\eta\eps})$ and choosing $u$ to be each coordinate indicator, it follows that the average of all $v_t$ is coordinatewise at least $1 - \eps$, and solves Problem~\ref{problem:packing} as a dual solution. 

Our $g_t$ is chosen as the (truncated) gradient of the function used in the potential analysis, so its form allows us to interpret dual feasibility (e.g.\ $v_t$ has $\ell_1$ norm 1 and is a valid dual point). Our analysis patterns standard mirror descent, complemented by side information which says that lack of a primal solution can transform a regret guarantee into a feasibility bound. We apply this framework to analyze $\ell_p$ variants of Problem~\ref{problem:packing}, via different potentials; nonetheless, our proofs are quite straightforward upon adopting this perspective, and we believe it may yield new insights for instances with positivity structure.

\subsection{$\ell_p$-norm packing linear programs}
\label{sec:lp-lp}
In this section, we give a complete self-contained example of the framework proposed in Section~\ref{ssec:mrwzonsimplex}, for approximately solving $\ell_p$ norm packing linear programs. Specifically, we now consider the generalization of Problem~\ref{problem:packing} to $\ell_p$ norms; throughout, $q = \tfrac{p}{p - 1}$ is the dual norm.

\begin{problem}[$\ell_p$ packing linear program]
\label{problem:lp-packing}
Given entrywise nonnegative $\ma \in \R_{\geq 0}^{d \times n}$, either find primal solution $x \in \Delta^n$ with $\norm{\ma x}_p \leq 1+\epsilon$ or dual solution $y \in \R^d_{\ge 0}, \norm{y}_q = 1$ with $\ma^\top y \geq (1-\epsilon) \1$.
\end{problem}

For $p = \tfrac{\log d}{\eps}$, Problem~\ref{problem:lp-packing} recovers Problem~\ref{problem:packing} up to constants as $\ell_p$ multiplicatively approximates $\ell_\infty$ by $1 + \eps$. We now state our method for solving Problem~\ref{problem:lp-packing} as Algorithm~\ref{alg:lp-lp}.
\begin{algorithm}[ht!]
	\caption{$\lppacking(\ma, \eps, p)$}
	\begin{algorithmic}[1]\label{alg:lp-lp}
		\STATE \textbf{Input:} $\ma \in \R_{\ge 0}^{d \times n}, \eps \in [0, \thalf], p \ge 2$
		\STATE $\eta \gets p^{-1}, T \gets \tfrac{4p\log(\frac{nd}{\eps}) }{\epsilon}$
		\STATE $[w_0]_i \gets \tfrac{\eps}{n^2 d}$ for all $i \in [n]$, $z \gets \mzero$, $t \gets 0$
		\WHILE{$\norm{w_t}_1 \leq \epsilon^{-1}$}	
		\STATE $g_t \gets \max(0, \1 - \ma^\top (v_t)^{p-1})$ entrywise, for $v_t \gets \tfrac{\ma w_t}{\norm{\ma w_t}_p}$
		\STATE $w_{t + 1} \gets w_t \circ (1 + \eta g_t)$, $z \gets z + (v_t)^{p-1}$, $t \gets t+1$
		\IF{$t \geq T$}
		\RETURN{$y  = \frac{z}{\norm{z}_q}$}
		\ENDIF
		\ENDWHILE
		\RETURN{ $x = \frac{w_t}{\norm{w_t}_1}$ }
	\end{algorithmic}
\end{algorithm}

Other than changing parameters, the only difference from Algorithm~\ref{alg:mrwz} is that $v$ is a point with unit $\ell_q$ norm induced by the gradient of our potential $\Phi_t$. We state our main potential fact, whose proof is based straightforwardly on Taylor expanding $\norm{\cdot}_p$, and deferred to Appendix~\ref{app:pack} for brevity.

\begin{restatable}{lemma}{restatelplppotential}
\label{lemma:lp-potential}
In all iterations $t$ of Algorithm~\ref{alg:lp-lp}, defining $\Phi_t \defeq \norm{\ma w_t}_p - \norm{w_t}_1$, $\Phi_{t+1} \leq \Phi_t$. 
\end{restatable}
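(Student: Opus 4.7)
The plan is to bound the per-iteration change of $\Phi_t = \norm{\ma w_t}_p - \norm{w_t}_1$ by analyzing the multiplicative update termwise. Let $\delta \defeq w_{t+1} - w_t = \eta w_t \circ g_t$, which is entrywise nonnegative; since $\norm{w_{t+1}}_1 - \norm{w_t}_1 = \norm{\delta}_1$, it suffices to prove
\[\norm{\ma w_{t+1}}_p - \norm{\ma w_t}_p \le \norm{\delta}_1.\]

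First I would write $\norm{\ma w_{t+1}}_p^p = \sum_j u_j^p (1 + \tau_j)^p$ with $u \defeq \ma w_t$ and $\tau_j \defeq (\ma\delta)_j / u_j$, handling zero entries of $u$ by continuity. The step size $\eta = 1/p$ and the bound $g_t \le \one$ entrywise together yield $\tau_j \le 1/p$ uniformly, on which interval a careful Taylor estimate of the form $(1 + \tau)^p \le 1 + p\tau + O((p\tau)^2)$ applies. Using $(\ma\delta)_j \le u_j/p$, the second-order contribution $p^2 \sum_j u_j^{p-2}(\ma\delta)_j^2$ is bounded by $p \sum_j u_j^{p-1}(\ma\delta)_j$, of the same order as the first-order term. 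Summing over $j$ and extracting a $p$th root via the concavity bound $(1+x)^{1/p} \le 1 + x/p$ then yields a first-order bound of the form
\[\norm{\ma w_{t+1}}_p \le \norm{\ma w_t}_p + C \inprod{\ma^\top v_t^{p-1}}{\delta}\]
for an explicit constant $C$ close to $1$.

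The key structural step is to exploit the definition of $g_t$: on the support of $\delta$ we have $(\ma^\top v_t^{p-1})_i = 1 - g_{t,i}$, so
\[\inprod{\ma^\top v_t^{p-1}}{\delta} = \sum_i (1 - g_{t,i}) \eta w_{t,i} g_{t,i} = \norm{\delta}_1 - \eta \sum_i w_{t,i} g_{t,i}^2.\]
The nonnegative slack $\eta \sum_i w_{t,i} g_{t,i}^2$ is exactly the ``dissipation'' that must absorb any residual from the Taylor estimate, yielding $\Phi_{t+1} - \Phi_t \le 0$. The main obstacle is calibrating constants precisely: we need the effective constant $C$ from the norm expansion and the dissipation slack to balance, which is why the step size $\eta = 1/p$ is chosen to match the norm exponent so that $p\tau_j \le 1$ and the Taylor approximation can be made sharp. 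A slightly too-loose Taylor bound leaves a positive residual that the slack cannot absorb, so the careful selection of both constant and step size is essential.
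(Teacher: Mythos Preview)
Your overall architecture matches the paper's: expand $\norm{\ma w_{t+1}}_p^p$ using $(1+\tau_j)^p \le 1 + p\tau_j + (p\tau_j)^2$, take a $p$th root, and then exploit $(\ma^\top v_t^{p-1})_i = 1 - g_{t,i}$ on the support of $g_t$. However, your treatment of the second-order term has a real gap that the ``careful calibration'' you flag cannot fix.

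You bound $p^2\sum_j u_j^{p-2}(\ma\delta)_j^2$ by $p\sum_j u_j^{p-1}(\ma\delta)_j$ via the crude estimate $(\ma\delta)_j \le u_j/p$. This produces a first-order bound with $C = 2$, and then
\[
\Phi_{t+1} - \Phi_t \;\le\; 2\inprod{\ma^\top v_t^{p-1}}{\delta} - \norm{\delta}_1 \;=\; \eta\sum_i w_{t,i}\,g_{t,i}\bigl(1 - 2g_{t,i}\bigr),
\]
which is positive whenever some $g_{t,i} \in (0,\tfrac{1}{2})$. No choice of step size repairs this: with $\eta = c/p$ the same computation gives $\eta\sum_i w_{t,i}g_{t,i}(c - (1+c)g_{t,i})$, again positive for small $g_{t,i}$. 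The difficulty is structural, not a matter of constants: your crude bound turns the second-order term into a global multiple of the first-order term, so the residual scales like $\norm{\delta}_1$ while the slack scales like $\eta\sum_i w_{t,i}g_{t,i}^2 \le \norm{\delta}_1$, with equality only when all $g_{t,i}\in\{0,1\}$.

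The missing step is a Cauchy--Schwarz bound that keeps the quadratic dependence on $g$ \emph{per coordinate $i$} rather than collapsing it globally. Writing your $\delta = \eta\, g_t \circ w_t$, one has $(\ma\delta)_j^2 \le (\ma(\eta^2 g_t^2 \circ w_t))_j\,(\ma w_t)_j$, so the second-order term becomes $p^2\eta^2\sum_i [\ma^\top v_t^{p-1}]_i\, g_{t,i}^2\, w_{t,i}$. Combined with the first-order term (and $p\eta = 1$) this gives $\sum_i [\ma^\top v_t^{p-1}]_i\, \eta g_{t,i} w_{t,i}(1 + g_{t,i})$, and now the coordinate-wise identity $[\ma^\top v_t^{p-1}]_i(1+g_{t,i}) = (1-g_{t,i})(1+g_{t,i}) \le 1$ closes the argument with $C=1$ exactly. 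This is precisely what the paper does; your proposal has all the other pieces but is missing this one inequality, and without it the potential is not shown to be monotone.
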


We now prove our main result, which leverages the potential bound following the framework of Section~\ref{ssec:mrwzonsimplex}. In the proof, we assume that entries of $\ma$ are bounded by $n\eps^{-1}$; this does not incur more loss than a constant multiple of $\eps$ in the guarantees, and a proof can be found as Lemma~\ref{lem:assumeabounded}.

\begin{restatable}{theorem}{restatewgrowth}\label{thm:wgrowth}
Algorithm~\ref{alg:lp-lp} runs in time $O(\nnz(\ma) \cdot \tfrac{p\log(nd/\eps)}{\eps})$. Further, its output solves Problem~\ref{problem:lp-packing}.
\end{restatable}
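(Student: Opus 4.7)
The plan is to follow the two-ingredient framework developed in Section~\ref{ssec:mrwzonsimplex}: combine the potential monotonicity $\Phi_{t+1} \le \Phi_t$ from Lemma~\ref{lemma:lp-potential} (where $\Phi_t = \norm{\ma w_t}_p - \norm{w_t}_1$) with a multiplicative-weights regret computation, so that whichever termination branch fires, the stated output is feasible. The runtime bound is immediate once per-iteration cost is accounted for: by Lemma~\ref{lem:assumeabounded} I may assume entries of $\ma$ are bounded by $n/\eps$, each iteration reduces to the matrix-vector products $\ma w_t$ and $\ma^\top v_t^{p-1}$ which are $O(\nnz(\ma))$, and the while loop runs at most $T = O(p \log(nd/\eps)/\eps)$ iterations.

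For the primal branch, which fires when $\norm{w_t}_1 > \eps^{-1}$, I would chain $\Phi_t \le \Phi_0$ with the crude estimate $\Phi_0 \le 1$. The latter follows from $\norm{\ma w_0}_p \le \norm{\ma w_0}_1 \le \norm{w_0}_1 \cdot d \cdot \max_{ij}\ma_{ij}$ (using nonnegativity) together with $\norm{w_0}_1 = \eps/(nd)$ and the entry bound, yielding $\norm{\ma w_0}_p \le 1$. Dividing $\norm{\ma w_t}_p \le \norm{w_t}_1 + 1$ by $\norm{w_t}_1 > \eps^{-1}$ then produces $\norm{\ma x}_p \le 1 + \eps$ for $x = w_t/\norm{w_t}_1$, certifying primal feasibility.

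For the dual branch, which fires when the loop runs all $T$ iterations, I would run the standard multiplicative-weights KL analysis on the update $w_{t+1} = w_t \circ (1 + \eta g_t)$. Since $g_t = \max(0, \1 - \ma^\top v_t^{p-1})$ is entrywise in $[0, 1]$ and $\eta = 1/p \le 1/2$, applying $\log(1 + x) \ge x - x^2$ and telescoping $\sum_i u_i \log(w_{T, i}/w_{0, i})$ for any $u \in \Delta^n$, while using $\norm{w_T}_1 \le \eps^{-1}$ and $w_{0, i} = \eps/(n^2 d)$, gives
\[\eta(1 - \eta)\sum_{0 \le t < T} \inprod{g_t}{u} \le O\Par{\log\tfrac{nd}{\eps}},\]
where I use $g_{t, i}^2 \le g_{t, i}$ to absorb the quadratic remainder. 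Plugging in the standard basis vectors for $u$ and using $g_t \ge \1 - \ma^\top v_t^{p-1}$ entrywise, the choice of $T$ gives $\ma^\top z \ge T(1 - \eps)\1$. Finally, $\norm{v_t^{p-1}}_q = \norm{v_t}_p^{p-1} = 1$ by duality, so the triangle inequality gives $\norm{z}_q \le T$, whence $y = z/\norm{z}_q$ satisfies $\ma^\top y \ge (1 - \eps)\1$ and has unit $\ell_q$ norm by construction.

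The main obstacle I anticipate is keeping constants tight enough that the $g_t^2$ remainder in the mirror-descent step does not inflate the iteration count. The collapse $g_{t, i}^2 \le g_{t, i}$ (since $g_t$ is $[0, 1]$-bounded) is precisely what makes $\eta = 1/p$ the right step size — any smaller and we would pay an extra $1/\eta$ factor in $T$, any larger and the Taylor remainder for $\log(1 + x)$ would break. The other subtlety, ultimately routine, is matching the dual-norm geometry: it is exactly the identity $\norm{v_t^{p-1}}_q = 1$ that links the $\ell_p$-normalization of $v_t$ to the $\ell_q$-normalization needed on the output $y$, and that keeps $\norm{z}_q \le T$ uniformly.
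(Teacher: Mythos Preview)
Your proposal is correct and follows essentially the same approach as the paper's proof. The paper phrases the mirror-descent step via the KL divergence $V_{x_t}(u)$ and the inequality chain $\log\tfrac{1}{1+\eta g} \le g\log\tfrac{1}{1+\eta} \le -\eta(1-\eta)g$, while you telescope $\sum_i u_i \log(w_{T,i}/w_{0,i})$ directly using $\log(1+x)\ge x-x^2$ and $g_{t,i}^2\le g_{t,i}$; these are the same computation, and both land on the bound $\eta(1-\eta)\sum_t\inprod{g_t}{u}\le 2\log(nd/\eps)$, after which the dual certification and the $\norm{z}_q\le T$ step are identical.
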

\begin{proof}
	The runtime follows from Line 7 (each iteration cost is dominated by multiplication through $\ma$), so we prove correctness. Define potential $\Phi_t$ as in Lemma~\ref{lemma:lp-potential}, and note that as $w_0 = \tfrac{\eps}{n^2d}\1$,
	\[\Phi_0 \le \norm{\ma w_0}_p \le \frac{1}{n}\norm{\1}_p \le 1.\]
	The second inequality followed from our assumption on $\ma$ entry sizes (Lemma~\ref{lem:assumeabounded}). If Algorithm~\ref{alg:lp-lp} breaks out of the while loop of Line 4, we have by Lemma~\ref{lemma:lp-potential} that for $x$ returned on Line 11,
	\[\norm{\ma w_t}_p - \norm{w_t}_1 \le 1 \implies \norm{\ma x}_p \le \frac{1 + \norm{w_t}_1}{\norm{w_t}_1} \le 1 + \eps.\]
	Thus, primal feasibility is always correct. We now prove correctness of dual feasibility. First, let $V_x(u) = \sum_{i \in [n]} u_i \log (\tfrac{u_i}{x_i})$ be the Kullback-Leibler divergence from $x$ to $u$, for $x$, $u \in \Delta^d$. Define the normalized points $x_t = \tfrac{w_t}{\norm{w_t}_1}$ in each iteration. Expanding definitions,
	\begin{equation}\label{eq:telescope_lp}
	\begin{aligned}
	V_{x_{t + 1}}(u) - V_{x_t}(u) &= \sum_{i \in [n]} u_i \log \frac{[x_t]_i}{[x_{t + 1}]_i}\\
	&= \sum_{i \in [n]} u_i\left(\log\Par{\frac{\norm{w_{t + 1}}_1}{\norm{w_t}_1}} + \log\left(\frac{1}{1+\eta[g_t]_i}\right)\right) \\
	&\le -\eta(1-\eta)\inprod{g_t}{u} + \log\left(\frac{\norm{w_{t + 1}}_1}{\norm{w_t}_1}\right).
	\end{aligned}
	\end{equation}
	The only inequality used the bounds, for $g, \eta \in [0, 1]$,
	\[\log\left(\frac{1}{1 + \eta g}\right) \le g\log\left(\frac{1}{1 + \eta}\right) \le -\eta(1 - \eta)g. \]
	Telescoping \eqref{eq:telescope_lp} over all $T$ iterations, and using $V_{x_0}(u) \le \log n$ for all $u \in \Delta^n$ since $x_0$ is uniform, we have that whenever Line 4 is not satisfied before the check on Line 7 (i.e.\ $t \ge T$),
	\begin{equation}\label{eq:upperboundtelescope}\eta(1 - \eta)\sum_{0 \le t < T} \inprod{g_t}{u} \le \log\Par{\frac{\norm{w_T}_1}{\norm{w_0}_1}} + V_{x_0}(u) \le \log\Par{\frac{nd}{\eps^2}} + \log n \le 2\log\Par{\frac{nd}{\eps}}.\end{equation}
	The last inequality used $\norm{w_T}_1 \le \eps^{-1}$ by assumption, and $\norm{w_0}_1 = \tfrac{\eps}{nd}$. Next, since each $g_t \ge \1 - \ma^\top(v_t)^{p - 1}$ entrywise, defining $\bar{z} = \tfrac{z}{T}$,
	\begin{equation}\label{eq:lowerboundsum}\sum_{0 \le t < T}\inprod{g_t}{u} \ge \sum_{0 \le t < T} \inprod{\1 - \ma^\top (v_t)^{p - 1}}{u} = T\inprod{\1 - \ma^\top \bar{z}}{u}, \text{ for all } u \in \Delta^n.\end{equation}
	Combining \eqref{eq:upperboundtelescope} and \eqref{eq:lowerboundsum}, and rearranging, yields by definition of $T$,
	\[\inprod{\1 - \ma^\top \bar{z}}{u} \le \frac{2\log(\frac{nd}{\eps})}{T\eta(1 - \eta)} \le \frac{4p\log(\frac{nd}{\eps})}{T} \le \eps \implies \ma^\top \bar{z} \ge 1 - \eps \text{ entrywise.}\]
	The last claim follows by setting $u$ to be each coordinate-sparse simplex vector. Finally, since $\tfrac{\bar{z}}{\norm{\bar{z}}_q} = \tfrac{z}{\norm{z}_q}$, to show that $y$ is a correct dual solution to Problem~\ref{problem:lp-packing} it suffices to show $\norm{\bar{z}}_q \le 1$. This follows as $\bar{z}$ is an average of the $(v_t)^{p - 1}$, convexity of $\ell_q$ norms, and that for all $t$, 
	\[\norm{(v_t)^{p - 1}}_q^q = \sum_{j \in [d]} v_t^p = \sum_{j \in [d]} \frac{\Brack{\ma w_t}_j^p}{\norm{\ma w_t}_p^p} = 1.\]
\end{proof}

\subsection{Schatten-norm packing semidefinite programs}
\label{sec:lp-sdp}
We generalize Algorithm~\ref{alg:lp-lp} to solve Schatten packing semidefinite programs, which we now define.
\begin{problem}\label{problem:sdp-packing}
	Given $\{\ma_i\}_{i \in [n]} \in \PSD^d$, either find primal solution $x \in \Delta^n$ with $\norm{\sum_{i \in [n]} x_i \ma_i}_p \le 1 + \eps$ or dual solution $\my \in \PSD^d$, $\norm{\my}_q = 1$ with $\inprod{\ma_i}{\my} \ge 1 - \eps$ for all $i \in [n]$.
\end{problem}

\begin{algorithm}
	\caption{$\schattenpacking(\{\ma_i\}_{i \in [n]}, \eps, p)$}
	\begin{algorithmic}[1]\label{alg:lp-sdp}
		\STATE \textbf{Input:} $\{\ma_i\}_{i \in [n]} \in \PSD^d, \eps \in [0, \thalf], p \ge 2$
		\STATE $\eta \gets p^{-1}, T \gets \tfrac{4p\log(\frac{nd}{\eps}) }{\epsilon}$
		\STATE $[w_0]_i \gets \tfrac{\eps}{n^2 d}$ for all $i \in [n]$, $z \gets 0$
		\WHILE{$\norm{w_t}_1 \leq \epsilon^{-1}$}	
		\STATE $g_t \gets \max\Par{0, \1 - \inprod{\ma_i}{\mv_t^{p - 1}}}$ entrywise, for $\mv_t \gets \tfrac{\sum_{i \in [n]} [w_t]_i \ma_i}{\norm{\sum_{i \in [n]} [w_t]_i \ma_i}_p}$
		\STATE $w_{t + 1} \gets w_t \circ (1 + \eta g_t)$, $\mz \gets \mz + (\mv_t)^{p-1}$, $t \gets t+1$
		\IF{$t \geq T$}
		\RETURN{$\my  = \frac{\mz}{\norm{\mz}_q}$}
		\ENDIF
		\ENDWHILE
		\RETURN{ $x = \frac{w_t}{\norm{w_t}_1}$ }
	\end{algorithmic}
\end{algorithm}

We assume that $p$ is an odd integer for simplicity (sufficient for our applications), and leave for interesting future work the cases when $p$ is even or noninteger. The potential used in the analysis and an overall guarantee are stated here, and deferred to Appendix~\ref{app:pack}. The proofs are simple modifications of Lemma~\ref{lemma:lp-potential} and Theorem~\ref{thm:wgrowth} using trace inequalities (similar to those in \cite{JambulapatiLLPT20}) in place of scalar inequalities, as well as efficient approximation of quantities in Line 5 via the standard technique of Johnson-Lindestrauss projections.

\begin{restatable}{lemma}{restatesdppot}
\label{lem:sdp-potential}
In all iterations $t$ of Algorithm \ref{alg:lp-sdp}, defining $\Phi_t \defeq \norm{\sum_{i \in [n]} [w_t]_i \ma_i}_p - \norm{w_t}_1$,
$\Phi_{t+1} \leq \Phi_t$. 
\end{restatable}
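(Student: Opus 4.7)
The plan is to mirror the proof of Lemma~\ref{lemma:lp-potential} nearly line-for-line, substituting each scalar manipulation for its natural trace-level analog. First, set $\mm_t \defeq \sum_{i \in [n]} [w_t]_i \ma_i \in \PSD^d$, so that the multiplicative weight update reads $\mm_{t+1} = \mm_t + \eta \boldsymbol{\Delta}_t$ for $\boldsymbol{\Delta}_t \defeq \sum_i [w_t]_i [g_t]_i \ma_i \in \PSD^d$, and similarly $\|w_{t+1}\|_1 = \|w_t\|_1 + \eta \langle w_t, g_t\rangle$. The desired inequality $\Phi_{t+1} \le \Phi_t$ is thus equivalent to
\[
\|\mm_{t+1}\|_p - \|\mm_t\|_p \;\le\; \eta \langle w_t, g_t \rangle.
\]

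The central step is to Taylor-expand $\Tr(\mm_{t+1}^p)$ about $\mm_t$. The key observation is that even though $\mm_t$ and $\boldsymbol{\Delta}_t$ need not commute, cyclicity of trace collapses the first-order coefficient to $p\eta\, \Tr(\boldsymbol{\Delta}_t\, \mm_t^{p-1}) = p\eta \|\mm_t\|_p^{p-1} \langle \boldsymbol{\Delta}_t, \mv_t^{p-1}\rangle$, with $\mv_t = \mm_t/\|\mm_t\|_p$ as in Algorithm~\ref{alg:lp-sdp}. Taking $p$-th roots and applying the scalar bound $(1+x)^{1/p} \le 1 + x/p$, this yields
\[
\|\mm_{t+1}\|_p \;\le\; \|\mm_t\|_p + \eta \langle \boldsymbol{\Delta}_t, \mv_t^{p-1}\rangle + (\text{higher-order correction}) .
\]
Expanding $\langle \boldsymbol{\Delta}_t, \mv_t^{p-1}\rangle = \sum_i [w_t]_i [g_t]_i \langle \ma_i, \mv_t^{p-1}\rangle$ and using the defining identity $\langle \ma_i, \mv_t^{p-1}\rangle = 1 - [g_t]_i$ on the support of $[g_t]_i$ (with the product vanishing otherwise), the first-order contribution is at most $\eta \sum_i [w_t]_i [g_t]_i (1 - [g_t]_i) \le \eta \langle w_t, g_t\rangle$, matching the growth of $\|w_t\|_1$ up to a negative slack of $\eta \sum_i [w_t]_i [g_t]_i^2$.

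The main obstacle is controlling the noncommutative higher-order terms in the expansion of $(\mm_t + \eta\boldsymbol{\Delta}_t)^p$, since the binomial theorem does not apply directly. Here the plan is to invoke the trace-inequality toolkit used in \cite{JambulapatiLLPT20}: specifically, convexity of $\Tr(X^p)$ on $\PSD^d$ (equivalently, Klein's inequality applied to $f(x) = x^p$ for $p \ge 1$) yields the one-sided estimate $\Tr(\mm_{t+1}^p) \le \Tr(\mm_t^p) + p\eta\, \Tr(\boldsymbol{\Delta}_t\, \mm_{t+1}^{p-1})$, and a Lieb--Thirring-type bound can then swap $\mm_{t+1}^{p-1}$ for $\mm_t^{p-1}$ up to an $O(\eta p)$ multiplicative factor. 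With the choice $\eta = 1/p$ this correction is a constant multiple of the negative slack $\eta \sum_i [w_t]_i [g_t]_i^2$ already isolated above, which absorbs it and yields $\Phi_{t+1} \le \Phi_t$; the remainder of the argument is then a mechanical transcription of the scalar proof with Hadamard products replaced by trace inner products.
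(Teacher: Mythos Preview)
Your plan has a real gap in the absorption step. The convexity bound $\Tr(\mm_{t+1}^p) \le \Tr(\mm_t^p) + p\eta\,\Tr(\boldsymbol{\Delta}_t\,\mm_{t+1}^{p-1})$ is fine, but when you swap $\mm_{t+1}^{p-1}$ for $\mm_t^{p-1}$ at an $O(\eta p)=O(1)$ multiplicative cost, the resulting correction is a constant multiple of the \emph{first-order} quantity $\eta\sum_i w_i g_i\inprod{\ma_i}{\mv_t^{p-1}} = \eta\sum_i w_i g_i(1-g_i)$, not of the slack $\eta\sum_i w_i g_i^2$. These scale differently: for coordinates with $g_i \to 0^+$ (equivalently $\inprod{\ma_i}{\mv_t^{p-1}} \to 1^-$) the correction is $\Theta(\eta w_i g_i)$ while the slack is only $\Theta(\eta w_i g_i^2)$, so it cannot be absorbed. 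Equivalently, your route gives $\inprod{\ma_i}{\mv_t^{p-1}}\cdot(1 + O(1)) \le 1$ as the needed per-coordinate inequality, which fails whenever $\inprod{\ma_i}{\mv_t^{p-1}}$ is close to $1$.

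The paper's proof produces a correction that is genuinely second-order in $g$. It applies Lieb--Thirring to obtain $\Tr((\mm_0+\mm_1)^p) \le \Tr\bigl(\mm_0^p(\id + \mm_0^{-1/2}\mm_1\mm_0^{-1/2})^p\bigr)$, then the matrix bound $(\id + \mm)^p \preceq \id + p\mm + p^2\mm^2$ (valid for $\mm \preceq p^{-1}\id$) to get the second-order term $p^2\Tr(\mm_0^{p-1}\mm_1\mm_0^{-1}\mm_1)$. The decisive ingredient---the matrix analog of the Cauchy--Schwarz step in Lemma~\ref{lemma:lp-potential}---is the Schur complement observation
\[
\begin{pmatrix}\mm_0 & \mm_1 \\ \mm_1 & \mm_2\end{pmatrix} \succeq 0 \;\implies\; \mm_1\mm_0^{-1}\mm_1 \preceq \mm_2, \quad \mm_2 \defeq \sum_{i \in [n]} \delta_i^2 w_i \ma_i,
\]
which replaces the correction by $p\sum_i \delta_i^2 w_i\inprod{\ma_i}{\mv_t^{p-1}}$. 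Combined with the first-order term this yields $\sum_i \delta_i w_i\inprod{\ma_i}{\mv_t^{p-1}}(1+p\delta_i)$, and on the support of $g$ the factor $\inprod{\ma_i}{\mv_t^{p-1}}(1+p\delta_i) = x(2-x)\le 1$ with $x=\inprod{\ma_i}{\mv_t^{p-1}}$ finishes the argument entrywise. Your route has no mechanism to manufacture the $\mm_2$ term; the Schur complement step is doing exactly the work that your ``swap'' cannot.
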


\begin{restatable}{theorem}{restatewgrowthsdp}\label{thm:wgrowthsdp}
Let $p$ be odd. Algorithm~\ref{alg:lp-sdp} runs in $O(\tfrac{p\log(nd/\eps)}{\eps})$ iterations, and its output solves Problem~\ref{problem:sdp-packing}. Each iteration is implementable in $O(\nnz \cdot \tfrac{p\log(nd/\eps)}{\eps^2})$, where $\nnz$ is the number of nonzero entries amongst all $\{\ma_i\}_{i \in [n]}$, losing $O(\eps)$ in the quality of Problem~\ref{problem:sdp-packing} with probability $1 - \textup{poly}((nd/\eps)^{-1})$.
\end{restatable}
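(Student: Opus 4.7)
The plan is to mirror the proof of Theorem~\ref{thm:wgrowth} line by line, substituting scalar quantities with their matrix analogs and invoking the Schatten-norm potential bound of Lemma~\ref{lem:sdp-potential} in place of Lemma~\ref{lemma:lp-potential}. First I would reduce to the case where each $\ma_i$ has bounded spectral norm (the SDP analog of Lemma~\ref{lem:assumeabounded}), so that at initialization $\Phi_0 \le \norm{\sum_{i \in [n]} [w_0]_i \ma_i}_p \le 1$; this only costs a constant factor in $\eps$. The iteration count is then immediate from the setting $T = 4p\log(nd/\eps)/\eps$ in Line 2.

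For primal feasibility, if the while loop terminates with $\norm{w_t}_1 > \eps^{-1}$, Lemma~\ref{lem:sdp-potential} gives $\Phi_t \le \Phi_0 \le 1$, so $\norm{\sum_{i \in [n]} [w_t]_i \ma_i}_p \le 1 + \norm{w_t}_1$; dividing through yields $\norm{\sum_{i \in [n]} x_i \ma_i}_p \le 1 + \eps$ for the returned $x$. For dual feasibility, the KL-divergence telescoping \eqref{eq:telescope_lp} goes through verbatim because the multiplicative update $w_{t+1} = w_t \circ (1 + \eta g_t)$ is identical in form, so the regret bound \eqref{eq:upperboundtelescope} carries over. Since the new gradient satisfies $[g_t]_i \ge 1 - \inprod{\ma_i}{\mv_t^{p-1}}$, the analog of \eqref{eq:lowerboundsum} shows $\inprod{\ma_i}{\bar{\mz}} \ge 1 - \eps$ for every $i \in [n]$ with $\bar{\mz} = \mz/T$. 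To conclude $\my = \bar{\mz}/\norm{\bar{\mz}}_q$ is feasible for Problem~\ref{problem:sdp-packing}, I must verify $\bar{\mz} \succeq 0$ and $\norm{\bar{\mz}}_q \le 1$. The positivity uses that $p - 1$ is even (as $p$ is odd), so each $\mv_t^{p-1} \succeq 0$; the norm bound uses Fact~\ref{fact:qnormcert}, since by construction $\mv_t = \mm / \norm{\mm}_p$ for PSD $\mm = \sum_i [w_t]_i \ma_i$ makes $\mv_t^{p-1}$ precisely the Schatten-$q$ unit certifier of $\norm{\mm}_p$, and convexity of $\norm{\cdot}_q$ propagates the unit bound to the average $\bar{\mz}$.

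The main obstacle, and the step requiring the most care, is implementing each iteration in input-sparsity time. The two objects needed are the scalar $\norm{\sum_i [w_t]_i \ma_i}_p$ (to normalize $\mv_t$) and the $n$ inner products $\{\inprod{\ma_i}{\mv_t^{p-1}}\}_{i \in [n]}$. Since $p - 1$ is even, I would factor $\mv_t^{p-1} = \mv_t^{(p-1)/2}\mv_t^{(p-1)/2}$ and apply a Johnson--Lindenstrauss sketch: draw a Gaussian matrix $\Pi \in \R^{k \times d}$ with $k = O(\log(nd/\eps) / \eps^2)$, iteratively compute $\mq_t = \Pi \mv_t^{(p-1)/2}$ by $(p-1)/2$ repeated matrix products against $\sum_i [w_t]_i \ma_i$ (each application costing $O(k \cdot \nnz)$ time), and then estimate $\inprod{\ma_i}{\mv_t^{p-1}} \approx \Tr(\mq_t \ma_i \mq_t^\top)$ in $O(k \cdot \nnz(\ma_i))$ time, summed to $O(k \cdot \nnz)$ across all $i$. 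A union bound over the $T$ iterations and $n$ constraints shows all estimates achieve multiplicative accuracy $1 \pm O(\eps)$ with probability $1 - \textup{poly}((nd/\eps)^{-1})$, which is absorbed into the $O(\eps)$ slack of Problem~\ref{problem:sdp-packing}. The normalizing scalar can be approximated to the same precision via a separate sketched power iteration on $\mm$. Totaling, each iteration costs $O(p \cdot k \cdot \nnz) = O(\nnz \cdot p \log(nd/\eps)/\eps^2)$, matching the stated bound.
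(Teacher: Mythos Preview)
Your proposal is correct and follows essentially the same approach as the paper: mirror the proof of Theorem~\ref{thm:wgrowth} using Lemma~\ref{lem:sdp-potential} in place of Lemma~\ref{lemma:lp-potential}, invoke the spectral-bound preprocessing (the paper's Lemma~\ref{lem:assumeaspectralbound}) to get $\Phi_0 \le 1$, and implement each iteration via Johnson--Lindenstrauss sketching of $\mm^{(p-1)/2}$ to estimate both $\Tr[\mm^p]$ and each $\inprod{\ma_i}{\mm^{p-1}}$. One small terminological note: the normalizer $\norm{\mm}_p^p = \Tr[\mm^p]$ is estimated by a Hutchinson-type trace sketch $\sum_{\ell} \mq_{\ell:}^\top \mm^p \mq_{\ell:}$ (as the paper does), not by ``power iteration,'' which would only give $\lambda_{\max}(\mm)$; and to absorb the resulting multiplicative errors the paper formally replaces the potential by $\norm{\sum_i [w_t]_i \ma_i}_p - (1+O(\eps))\norm{w_t}_1$, which is the precise mechanism behind your ``absorbed into the $O(\eps)$ slack'' remark.
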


\subsection{Schatten packing with a $\ell_\infty$ constraint}

We remark that the framework outlined in Section~\ref{ssec:mrwzonsimplex} is flexible enough to handle mixed-norm packing problems. Specifically, developments in Section~\ref{sec:subspace} require the following guarantee.

\begin{restatable}{proposition}{restateboxconstrainedp}\label{prop:boxconstrainedp}
Following Theorem~\ref{thm:wgrowthsdp}'s notation, let $p$ be odd, $\{\ma_i\}_{i \in [n]} \in \PSD^d$, $0 < \eps = O(\alpha)$, and
\begin{equation}\label{eq:boxconstrainedschatten}\min_{\substack{x \in \Delta^n \\ \norm{x}_\infty \le \frac{1 + \alpha}{n}}} \norm{\alla(x)}_p = \textup{OPT}.
\end{equation}
for $\alla(x)\defeq \sum_{i \in [n]} x_i \ma_i$.
Given estimate of $\textup{OPT}$ exponentially bounded in $\tfrac{nd}{\eps}$, there is a procedure calling Algorithm~\ref{alg:boxpack} $O(\log\frac{nd}{\eps})$ times giving $x \in \Delta^n$ with $\norm{x}_\infty \le \tfrac{(1 + \alpha)(1 + \eps)}{n}$, $\norm{\alla(x)}_p \le (1 + \eps)\opt$. Algorithm~\ref{alg:boxpack} runs in $O(\tfrac{\log(nd/\eps)\log n}{\eps^2})$ iterations, each implementable in time $O(\nnz \cdot \frac{p\log(nd/\eps)}{\eps^2})$.
\end{restatable}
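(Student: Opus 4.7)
The plan is to combine three ingredients already developed in the paper: a binary search over $\opt$, a variant of the Schatten packing multiplicative-weights algorithm (Algorithm~\ref{alg:lp-sdp}) run on the truncated simplex $\fS^n_\veps$, and a Johnson--Lindenstrauss-based approximation to make each iteration nearly input-sparsity time. I would first do the binary search: since $\opt$ is exponentially bounded in $nd/\eps$, $O(\log(nd/\eps))$ guesses of $K$ suffice, and for each guess I scale the matrices to $\ma_i/K$ and invoke Algorithm~\ref{alg:boxpack} as a feasibility oracle, which either returns $x \in \Delta^n$ with $\norm{x}_\infty \le \tfrac{(1+\alpha)(1+\eps)}{n}$ and $\norm{\alla(x)}_p \le (1+\eps)K$, or certifies that no $x$ in the box-constrained simplex attains $\norm{\alla(x)}_p \le K$. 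Standard binary search arguments then locate a $(1+\eps)$-approximate optimum.

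The core algorithm Algorithm~\ref{alg:boxpack} would follow the template of Algorithm~\ref{alg:lp-sdp}: maintain weights $w_t$, set $\mv_t \propto \alla(w_t)$ normalized to unit Schatten-$p$ norm, compute gradient coordinates $[g_t]_i = \max(0, 1 - \inprod{\ma_i}{\mv_t^{p-1}})$, and apply a multiplicative update $w_{t+1} \gets w_t \circ (1 + \eta g_t)$, terminating when $\norm{w_t}_1$ crosses a growth threshold. The potential $\Phi_t \defeq \norm{\alla(w_t)}_p - \norm{w_t}_1$ remains monotone by exactly the Taylor-expansion and trace-inequality argument of Lemma~\ref{lem:sdp-potential}; the only change needed is the mirror descent step, which I would carry out on the truncated simplex $\fS^n_\veps$ with $\veps = \Theta(\alpha)$ using the box-aware Bregman divergence
\[
V_x(u) = \sum_{i \in [n]} u_i \log \frac{u_i}{x_i} + \Par{\tfrac{1}{n(1-\veps)} - u_i}\log\frac{\tfrac{1}{n(1-\veps)} - u_i}{\tfrac{1}{n(1-\veps)} - x_i},
\]
whose diameter on $\fS^n_\veps$ is $O(\log n)$. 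Telescoping the per-step bound $V_{x_{t+1}}(u) - V_{x_t}(u) \le -\eta(1-\eta)\inprod{g_t}{u} + \log(\norm{w_{t+1}}_1/\norm{w_t}_1)$, exactly as in \eqref{eq:telescope_lp}--\eqref{eq:upperboundtelescope} but against any $u \in \fS^n_\veps$, shows that if $\norm{w_T}_1$ fails to reach its threshold then the normalized average $\my = \mz/\norm{\mz}_q$ satisfies $\inprod{\alla(u)}{\my} \ge (1-\eps)\norm{u}_1$ for every $u \in \fS^n_\veps$; by trace-norm duality this would contradict the existence of a feasible $x^*$ in the scaled problem, so $\norm{w_T}_1$ must reach its threshold and the renormalized primal $x = w_T/\norm{w_T}_1$, which lies in $\fS^n_\veps$ up to the stated $(1+\eps)$ slack, witnesses primal feasibility. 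Per-iteration cost is controlled by the argument deferred to Theorem~\ref{thm:wgrowthsdp}: $\mv_t^{p-1}$ can be factored as $(\mv_t^{(p-1)/2})^2$ (legitimate because $p$ is odd) and the scalars $\inprod{\ma_i}{\mv_t^{p-1}}$ are approximated to $(1 \pm \eps)$ accuracy in aggregate $\widetilde{O}(\nnz \cdot p\log(nd/\eps)/\eps^2)$ time using a Johnson--Lindenstrauss sketch, with the error folded into the eventual $O(\eps)$ slack.

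The main obstacle, in my estimation, is getting the mirror descent analysis on $\fS^n_\veps$ to match both (i) the monotonicity of the Schatten potential (which is an inherently ``scale-free'' argument in the $p$-norm) and (ii) the slight blow-up in the $\ell_\infty$ constraint to $\tfrac{(1+\alpha)(1+\eps)}{n}$, since the multiplicative update $w_{t+1} \propto w_t \circ (1+\eta g_t)$ does not exactly respect the box and must either be projected back to $\fS^n_\veps$ or be analyzed as naturally overshooting by at most a $(1+\eta) \le (1+\eps)$ factor on any single coordinate. The cleanest route is the latter: choose $\eta = O(\eps)$, observe that any coordinate weight can grow by at most $(1+\eta)$ per step while its $\ell_\infty$ bound is active (since $[g_t]_i = 0$ once $\inprod{\ma_i}{\mv_t^{p-1}} \ge 1$), and use this to argue that the final renormalized iterate violates the box only by a $1+O(\eps)$ factor, which is exactly the slack allowed in the proposition. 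Combined with the $O(\log n)$ divergence bound and step size $\eta = O(\eps/\log(nd/\eps))$, the iteration count comes out to the claimed $O(\log(nd/\eps)\log n/\eps^2)$, and the Johnson--Lindenstrauss cost dominates the per-iteration runtime.
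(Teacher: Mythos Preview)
Your binary-search reduction and the Johnson--Lindenstrauss per-iteration analysis are fine and match the paper. The real gap is in how you enforce the $\ell_\infty$ box. You propose to keep the gradient $[g_t]_i = \max(0,1-\inprod{\ma_i}{\mv_t^{p-1}})$ from Algorithm~\ref{alg:lp-sdp} unchanged and to handle the box purely through the mirror-descent geometry, but the update $w_{t+1}=w_t\circ(1+\eta g_t)$ you use is the (approximate) entropic step on the \emph{full} simplex, not the step associated with the box-aware Bregman divergence you wrote down; the two do not match, and the telescoping bound you quote is the KL bound \eqref{eq:telescope_lp}, not one for your $V_x(u)$. More seriously, your mechanism for bounding the box violation is incorrect: the event $[g_t]_i=0$ is triggered by $\inprod{\ma_i}{\mv_t^{p-1}}\ge 1$, which is a statement about the Schatten packing constraint, not about the size of $[w_t]_i$. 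If $\ma_i$ happens to be spectrally small, $[g_t]_i$ stays positive in every round regardless of how large $[w_t]_i$ becomes, so coordinate $i$'s normalized weight can drift arbitrarily far past $\tfrac{1+\alpha}{n}$; nothing in your algorithm prevents this.

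The paper takes a different route and folds the box into the \emph{potential} rather than the geometry. It approximates $\norm{x}_\infty$ by $\norm{\ms x}_{p'}$ with $\ms=\tfrac{n}{1+\alpha}\id$ and $p'=\tfrac{\log n}{\eps}$, and works with
\[
\Phi(w)=\log\Bigl(\exp\bigl(\norm{\alla(w)}_p\bigr)+\exp\bigl(\norm{\ms w}_{p'}\bigr)\Bigr)-\norm{w}_1.
\]
The gradient $\nabla\Phi(w)$ is then a convex combination of the Schatten-packing gradient and an $\ell_{p'}$-packing gradient that directly penalizes large coordinates, so the multiplicative update automatically stops growing any $[w_t]_i$ that threatens the box. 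Monotonicity of this softmax potential is proved by combining the two single-norm Taylor bounds, and infeasibility is certified via a lemma showing $\inprod{\nabla\Phi(w)}{x^*}\ge\eps$ whenever a box-feasible $x^*$ exists; the mirror descent analysis then runs on the ordinary simplex with standard KL, exactly as in Theorem~\ref{thm:wgrowth}. This is what produces the extra $\log n$ in the iteration count (from $\eta^{-1}=\Theta(p')$), which your proposal obtains only by accident.
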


Our method, found in Appendix~\ref{app:pack}, approximately solves \eqref{eq:boxconstrainedschatten} by first applying a standard binary search to place $\alla(x)$ on the right scale, for which it suffices to solve an approximate decision problem. Then, we apply a truncated mirror descent procedure on the potential $\Phi(w) = \log(\exp(\norm{\alla(w)}_p) + \exp(\tfrac{n}{1 + \alpha}\norm{w}_\infty)) - \norm{w}_1$, and prove correctness for solving the decision problem following the framework we outlined in Section~\ref{ssec:mrwzonsimplex}. 	%

\section{Robust sub-Gaussian PCA in nearly-linear time}
\label{sec:subspace}

We give our nearly-linear time robust PCA method, leveraging developments of Section~\ref{sec:packing}. Throughout, we will be operating under Assumption~\ref{assume:corruption}, for some corruption parameter $\eps$ with $\eps\log\eps^{-1}\log d = O(1)$; $\eps = O(\frac{1}{\log d\log\log d})$ suffices. We now develop tools to prove Theorem~\ref{thm:main-robust}.

Algorithm~\ref{alg:robust-pca} uses three subroutines: our earlier $\OneD$ method (Lemma~\ref{lem:univariate}), an application of our earlier Proposition~\ref{prop:boxconstrainedp} to approximate the solution to
\begin{equation}\label{eq:robust-schatten}\min_{w \in \fS_\eps^n} \norm{\sum_{i \in [n]} w_i X_iX_i^\top}_p, \text{ for }p =\Theta\Par{\sqrt{\frac{\log d}{\eps\log \eps^{-1}}}},\end{equation}
and a method for computing approximate eigenvectors by \cite{MuscoM15} (discussed in Appendix~\ref{app:pca}).
\begin{restatable}{proposition}{poweriter}\label{prop:poweriter}
There is an algorithm $\mathsf{Power}$ (Algorithm 1, \cite{MuscoM15}), parameterized by $t \in [d]$, tolerance $\teps > 0$, $p \ge 1$, and $\ma \in \PSD^d$, which outputs orthonormal $\{z_j\}_{j \in [t]}$ with the guarantee
	\begin{equation}\label{eq:mpsvd}
	\begin{rcases}
	\left|z_j^\top \ma^p z_j - \lam_j^p (\ma) \right| &\le \teps \lam_j^p (\ma) \\
	\left|z_j^\top \ma^{p - 1} z_j - \lam_j^{p - 1} (\ma) \right|&\le \teps \lam_j^{p - 1} (\ma)
	\end{rcases} \text{ for all } j \in [t].
	\end{equation}
	Here, $\lam_j(\ma)$ is the $j^{th}$ largest eigenvalue of $\ma$. The total time required by the method is $O(\nnz(\ma) \tfrac{tp\log d}{\veps})$.
\end{restatable}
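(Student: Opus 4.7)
The plan is to invoke Musco--Musco's method (Algorithm~1 of \cite{MuscoM15}) essentially as a black box, since Proposition~\ref{prop:poweriter} is a light wrapping of their per-vector Rayleigh quotient guarantee for block Krylov / simultaneous iteration on a polynomial in $\ma$. Specifically, I would run their procedure on input matrix $\ma$ with block size $t$ and iteration count $q = \Theta(p \log d / \veps)$, producing orthonormal output $\{z_j\}_{j \in [t]}$.

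The main external input I would use is their per-vector guarantee, instantiated for $\mb = \ma^k$ with $k \in \{p-1, p\}$: for any $\mb \in \PSD^d$ and tolerance $\teps' \in (0, 1)$, the method returns orthonormal $\{z_j\}_{j \in [t]}$ satisfying
\[|z_j^\top \mb z_j - \lam_j(\mb)| \le \teps' \lam_{t+1}(\mb) \quad \text{for every } j \in [t].\]
Since $\lam_{t+1}(\ma^k) = \lam_{t+1}^k(\ma) \le \lam_j^k(\ma)$ for every $j \le t$ and every $k \in \{p-1, p\}$, setting $\teps' = \teps$ converts this additive guarantee into the stated multiplicative form $|z_j^\top \ma^k z_j - \lam_j^k(\ma)| \le \teps \lam_j^k(\ma)$ of~\eqref{eq:mpsvd}.

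For the runtime, simultaneous iteration on $\ma^p$ (respectively $\ma^{p-1}$) for $O(\log d / \veps)$ outer iterations suffices for the per-vector guarantee; each outer iteration requires $p$ block matvecs with $\ma$, each costing $O(t \cdot \nnz(\ma))$. The total work is therefore $O(\nnz(\ma) \cdot t p \log d / \veps)$, as claimed. Equivalently, I can run a single block Krylov instance on $\ma$ itself with depth $q = \Theta(p \log d / \veps)$: the Krylov subspace $\mathcal{K}_q(\ma, G) = \textup{span}(G, \ma G, \ldots, \ma^{q-1} G)$ contains the depth-$\lfloor q/p \rfloor$ Krylov subspaces of both $\ma^p$ and $\ma^{p-1}$, which share the eigenbasis of $\ma$.

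The only step I would treat with care---and really the only piece of accounting beyond quoting \cite{MuscoM15}---is the verification that a \emph{single} orthonormal set $\{z_j\}$ simultaneously satisfies both bounds in \eqref{eq:mpsvd}. This falls out of the Krylov subspace observation above: extracting the top-$t$ near-eigenvectors of $\ma$ within $\mathcal{K}_q(\ma, G)$ produces a single set $\{z_j\}$ whose Rayleigh quotients against both $\ma^p$ and $\ma^{p-1}$ inherit the per-vector approximation guarantee, since the two matrices are diagonalized by a common orthonormal basis and the near-invariance of $\textup{span}(\{z_j\})$ under $\ma$ transfers to near-invariance under any polynomial in $\ma$.
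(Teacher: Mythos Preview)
Your proposal is correct and takes essentially the same approach as the paper: invoke \cite{MuscoM15} as a black box and observe that a single run simultaneously yields the guarantees for both $\ma^p$ and $\ma^{p-1}$ since these share the eigenbasis of $\ma$. The paper phrases the ``one run for both'' step slightly differently---it chooses the power-iteration exponent $q$ to be a common multiple of $p$ and $p-1$ so that the iterates of simultaneous iteration on $\ma^p$ and on $\ma^{p-1}$ are literally identical---whereas you use Krylov subspace containment; both are valid routes to the same conclusion.
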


\begin{algorithm}
	\caption{$\RobustPCA(\{X_i\}_{i \in [n]}, \eps, t)$} 
	\begin{algorithmic}[1]\label{alg:robust-pca}
	\STATE \textbf{Input:} $\{X_i\}_{i \in [n]}$ $\eps = O(\frac{1}{\log d\log\log d})$, $t \in [d]$ with $\covar_{t + 1} \le (1 - \gamma)\covar$ for $\gamma$ in Theorem~\ref{thm:main-robust}
	\STATE $w \gets$ $\BoxPacking$ (Proposition~\ref{prop:boxconstrainedp}) on $\{\ma_i = X_iX_i^\top\}_{i \in [n]}$, $\alpha \gets \eps$, $p$ as in \eqref{eq:robust-schatten}
	\STATE $\mm = \sum_{i \in[n]} w_i X_i X_i^\top$
	\STATE $\{z_j\}_{j \in [t]} = \mathsf{Power}(t, \eps, p, \mm)$
	\STATE $\alpha_j \gets \mathsf{1DRobustVariance} (\{X_i\}_{i \in [n]}, \mm^{\frac{p-1}{2}}z_j/\|\mm^{\frac{p-1}{2}}z_j\|_2, \eps)$ for all $j \in [t]$
	\RETURN $z_{j^*}$ for $j^* = \argmax_{j \in [t]} \alpha_j$
	\end{algorithmic}	
\end{algorithm}
Algorithm~\ref{alg:robust-pca} is computationally bottlenecked by the application of Proposition~\ref{prop:boxconstrainedp} on Line 2 and the call to $\mathsf{Power}$ on Line 4, from which the runtime guarantee of Theorem~\ref{thm:main-robust} follows straightforwardly. To demonstrate correctness, we first certify the quality of the solution to \eqref{eq:robust-schatten}.
\begin{restatable}{lemma}{robustobjectiveub}
	\label{lem:robust-objective-ub}
	Let $n = \Omega\Par{\tfrac{d + \log \delta^{-1}}{(\eps \log \eps^{-1})^2}}$.
	With probability $1 - \tfrac{\delta}{2}$, the uniform distribution over $G$ attains value $(1 +\tfrac{\teps}{2}) \| \covar \|_p$ for objective \eqref{eq:robust-schatten}, where $\teps = C' \epsilon \log \eps^{-1}$ for a universal constant $C' > 0$.
\end{restatable}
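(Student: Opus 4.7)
The plan is to exhibit the uniform distribution over $G$ as a feasible point of \eqref{eq:robust-schatten} and bound its objective by chaining a deterministic PSD reduction with standard sub-Gaussian matrix concentration. Concretely, I would set $w_i = 1/|G|$ for $i \in G$ and $w_i = 0$ for $i \in B$; since Assumption~\ref{assume:corruption} gives $|G| \geq (1-\eps)n$, we have $w_i \leq 1/((1-\eps)n)$, placing $w \in \fS^n_\eps$. Writing $\hat\covar_G \defeq \sum_{i\in[n]} w_i X_i X_i^\top = \tfrac{1}{|G|}\sum_{i\in G} X_i X_i^\top$, the claim reduces to showing $\|\hat\covar_G\|_p \leq (1 + \teps/2)\|\covar\|_p$ with probability at least $1-\delta/2$.

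The key step is decoupling the adversarial choice of $G$ from concentration via a deterministic Loewner domination. Because $G \subseteq G'$ and each $X_i X_i^\top \succeq 0$, we have $\hat\covar_G \preceq \tfrac{n}{|G|}\hat\covar_{G'} \preceq \tfrac{1}{1-\eps}\hat\covar_{G'}$, where $\hat\covar_{G'} \defeq \tfrac{1}{n}\sum_{i \in G'} X_i X_i^\top$ depends only on the uncorrupted samples. Applying standard isotropic sub-Gaussian matrix concentration in the whitened coordinates $\covar^{-1/2} X_i$ (which have covariance $\id$ and sub-Gaussian proxy $\preceq c\id$ by the hypothesis $\mprox \preceq c\covar$), we would obtain $\|\covar^{-1/2} \hat\covar_{G'} \covar^{-1/2} - \id\|_\infty = O(\sqrt{(d + \log\delta^{-1})/n})$ with probability at least $1 - \delta/2$; the precise form of this statement is deferred to Appendix~\ref{app:concentration}. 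The hypothesized $n = \Omega((d + \log\delta^{-1})/(\eps\log\eps^{-1})^2)$ makes the right-hand side $O(\eps\log\eps^{-1})$, so that $\hat\covar_{G'} \preceq (1 + O(\eps\log\eps^{-1}))\covar$. Composing with the PSD reduction gives $\hat\covar_G \preceq (1 + \teps/2)\covar$ for an appropriate choice of the constant $C'$, which absorbs both the concentration constant and the $1/(1-\eps)$ slack.

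To convert this Loewner bound into the claimed Schatten bound, I would invoke monotonicity of $\|\cdot\|_p$ on PSD matrices under the Loewner order: if $0 \preceq A \preceq B$, the Courant-Fischer min-max characterization gives $\lambda_k(A) \leq \lambda_k(B)$ for every $k$, so $\|A\|_p^p = \sum_k \lambda_k(A)^p \leq \sum_k \lambda_k(B)^p = \|B\|_p^p$. Applied with $A = \hat\covar_G$ and $B = (1 + \teps/2)\covar$, this yields $\|\hat\covar_G\|_p \leq (1 + \teps/2)\|\covar\|_p$. The main obstacle the plan navigates is that $G$ is an adversary-dependent function of the samples, precluding a direct concentration inequality on $\hat\covar_G$ itself; the deterministic PSD domination by $\tfrac{1}{1-\eps}\hat\covar_{G'}$ is precisely what lets us fall back on standard sub-Gaussian matrix concentration applied to the clean, adversary-independent covariance.
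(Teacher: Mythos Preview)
Your argument is correct and in fact takes a cleaner route than the paper. The paper proves this by invoking its Lemma~\ref{lem:pnormbound_fs}, a uniform Schatten-$p$ concentration bound over all weightings $w\in\fS_\eps^n$ simultaneously, which is obtained via a union bound over the $\binom{n}{\eps n}$ possible corruption sets; it then plugs in the uniform distribution on $G$ as one such $w$. You instead sidestep the union bound entirely with the deterministic observation $\hat\covar_G \preceq \tfrac{1}{1-\eps}\hat\covar_{G'}$, apply a single concentration statement (essentially the paper's Lemma~\ref{lem:subg_conc}) to the adversary-independent $\hat\covar_{G'}$, and close with Loewner monotonicity of the Schatten norm. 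Your approach is more elementary and yields exactly what this lemma needs (only the one-sided upper bound); the paper's route reuses heavier machinery it develops anyway for the two-sided spectral control in Lemma~\ref{lem:linfguarantee} and Corollary~\ref{corr:infnormbound_fs}, so for them there is no extra cost in citing the uniform statement.
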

The proof of this is similar to results in e.g.~\cite{DiakonikolasKKLMS19,li2018principled}, and combines concentration guarantees with a union bound over all possible corruption sets $B$. This implies the following immediately, upon applying the guarantees of Proposition~\ref{prop:boxconstrainedp}.
\begin{corollary}
	\label{cor:robust-output-ub}
	Let $w$ be the output of Line 2 of $\RobustPCA$. Then, we have
	$\norm{w}_\infty \le \frac{1}{(1 - 2\eps)n}$, and $\norm{\sum_{i \in [n]} w_i X_i X_i^\top}_p \le (1 + \teps)\norm{\covar}_p$ under the guarantee of Lemma~\ref{lem:robust-objective-ub}.
\end{corollary}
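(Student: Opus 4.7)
My plan is to obtain the corollary by directly chaining the two preceding results: use Lemma~\ref{lem:robust-objective-ub} as an upper bound on the optimal value of the Schatten packing instance solved on Line 2 of $\RobustPCA$, then invoke Proposition~\ref{prop:boxconstrainedp} to translate this bound into guarantees on the approximate solution $w$ actually produced. Both conclusions of the corollary should fall out as immediate consequences, with all calculations coming down to tracking constants.

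For the infinity-norm bound, I will observe that Line 2 runs $\BoxPacking$ with $\alpha = \eps$, so Proposition~\ref{prop:boxconstrainedp} guarantees the output satisfies $\|w\|_\infty \le \frac{(1+\alpha)(1+\eps)}{n} = \frac{(1+\eps)^2}{n}$. The claimed bound then reduces to the elementary inequality $(1+\eps)^2(1-2\eps) \le 1$, which expands to $1 - 3\eps^2 - 2\eps^3 \le 1$ and is immediate; this gives $\|w\|_\infty \le \frac{1}{(1-2\eps)n}$.

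For the Schatten-$p$ norm bound, Lemma~\ref{lem:robust-objective-ub} supplies a feasible witness (the uniform distribution over $G$) attaining value $(1+\teps/2)\|\covar\|_p$ for the objective in \eqref{eq:robust-schatten}, so the optimum of the BoxPacking instance satisfies $\opt \le (1+\teps/2)\|\covar\|_p$. The $(1+\eps)$-multiplicative approximation guarantee of Proposition~\ref{prop:boxconstrainedp} then yields $\|\sum_{i \in [n]} w_i X_i X_i^\top\|_p \le (1+\eps)(1+\teps/2)\|\covar\|_p$, and since $\teps = C'\eps\log\eps^{-1}$ strictly dominates $\eps$ in the regime of interest, this product collapses to $(1+\teps)\|\covar\|_p$ after a harmless inflation of the universal constant $C'$.

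The only real subtlety I anticipate — more bookkeeping than obstacle — is matching the constraint constants: the uniform-on-$G$ witness has max entry $\tfrac{1}{(1-\eps)n}$, which is slightly larger than the $\frac{1+\eps}{n}$ box of the $\alpha=\eps$ BoxPacking instance. I will resolve this either by taking $\alpha = \eps/(1-\eps)$, which still lies in the $1+O(\eps)$ regime where every downstream estimate absorbs unchanged, or by rescaling the witness into the box at $O(\eps^2)$ cost in the objective. The exponentially-bounded estimate of $\opt$ required by Proposition~\ref{prop:boxconstrainedp} is routine to supply, since $\opt \in [0, \|\sum_i X_iX_i^\top\|_p]$ and a standard binary search over this polynomially-bounded range fits within the $\log(nd/\eps)$ outer iteration budget of the proposition.
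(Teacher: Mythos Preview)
Your proposal is correct and matches the paper's approach exactly: the paper provides no explicit proof, stating only that the corollary follows ``immediately, upon applying the guarantees of Proposition~\ref{prop:boxconstrainedp}'' to Lemma~\ref{lem:robust-objective-ub}. You have supplied precisely that chain, and you are in fact more careful than the paper in flagging the constant mismatch between the $\tfrac{1}{(1-\eps)n}$ bound of the uniform-on-$G$ witness and the $\tfrac{1+\eps}{n}$ box used by $\BoxPacking$ with $\alpha=\eps$; either of your proposed fixes (adjusting $\alpha$ or rescaling the witness) resolves this at $O(\eps^2)$ cost, which is absorbed into the universal constants exactly as you describe.
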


Let $w$ be the output of the solver.
Recall that $\mm = \sum_{i = 1}^n w_i X_i X_i^\top$.
Additionally, define
\begin{equation}\label{eq:mgbdef}\mmg \defeq \sum_{i \in G} w_i X_i X_i^\top,\; \wg \defeq \sum_{i \in G} w_i,\; \mmb \defeq \sum_{i \in B} w_i X_i X_i^\top,\; \wb \defeq \sum_{i \in G} w_i \; .\end{equation}
Notice in particular that $\mm = \mmg + \mmb$, and that all these matrices are PSD.
We next prove the second, crucial fact, which says that $\mmg$ is a good approximator to $\covar$ in Loewner ordering:
\begin{restatable}{lemma}{linfguarantee}\label{lem:linfguarantee}
	Let $n = \Omega\Par{\tfrac{d + \log \delta^{-1}}{(\eps \log \eps^{-1})^2}}$.
	With probability at least $1 - \tfrac{\delta}{2}$, 
	$(1 + \teps)\covar \succeq \mmg \succeq (1 - \teps) \covar$.
\end{restatable}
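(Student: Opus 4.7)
The plan is to reduce to a uniform Loewner-concentration statement that depends only on the uncorrupted samples $\{X_i\}_{i \in G'}$. Let $\teps = C' \eps \log \eps^{-1}$ as in Lemma~\ref{lem:robust-objective-ub}. Consider the ``stability'' event $\mathcal{E}$, depending only on $\{X_i\}_{i \in G'}$: for every $w' \in \R_{\ge 0}^G$ with $\|w'\|_1 \le 1$, $\|w'\|_\infty \le \tfrac{1}{(1-3\eps)n}$, and every unit $v \in \R^d$,
\[\left| v^\top \Par{\sum_{i \in G} w'_i X_i X_i^\top} v - \|w'\|_1 v^\top \covar v \right| \le \tfrac{\teps}{2} \|w'\|_1 v^\top \covar v.\]
This is the standard multiplicative stability condition from the robust statistics literature (cf.~\cite{DiakonikolasKKLMS19, li2018principled, steinhardt2018robust}); scaling the right-hand side by $v^\top \covar v$ (rather than $\|\covar\|_\infty$) produces a Loewner, not operator-norm, control, two-sidedly.

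Assuming $\mathcal{E}$, the lemma follows immediately. Corollary~\ref{cor:robust-output-ub} ensures $\|w\|_\infty \le \tfrac{1}{(1 - 2\eps)n} \le \tfrac{1}{(1 - 3\eps)n}$, so $w|_G$ is admissible in $\mathcal{E}$; applying $\mathcal{E}$ at every unit $v$ yields the Loewner sandwich $(1 - \tfrac{\teps}{2}) \wg \covar \preceq \mmg \preceq (1 + \tfrac{\teps}{2}) \wg \covar$. Since $\wb \le |B| \cdot \|w\|_\infty \le \tfrac{\eps}{1 - 2\eps} = O(\eps)$, we have $\wg = 1 - \wb \in [1 - O(\eps), 1]$, and the stated bound $(1 - \teps) \covar \preceq \mmg \preceq (1 + \teps) \covar$ follows after absorbing the $O(\eps)$ slack from $\wg \neq 1$ into the constant $C'$.

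To establish $\mathcal{E}$ with probability $\ge 1 - \tfrac{\delta}{2}$ under Assumption~\ref{assume:corruption} and $n = \Omega((d + \log \delta^{-1})/(\eps \log \eps^{-1})^2)$, I proceed in four steps. First, fix a unit direction $v$ and note that $\langle X_i, v \rangle^2$ has mean $v^\top \covar v$ and is sub-exponential with parameter $O(v^\top \covar v)$ by sub-Gaussianity. Second, truncate $\langle X_i, v \rangle^2$ at scale $\Theta((v^\top \covar v) \log \eps^{-1})$; sub-Gaussianity guarantees the tail past this threshold contributes at most $O(\eps \log \eps^{-1}) v^\top \covar v$ to any weighted average with $\ell_\infty$-bound $\tfrac{1}{(1-3\eps)n}$. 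Third, a Bernstein-type inequality on the truncated variables yields $\tfrac{\teps}{4}$-multiplicative concentration for the uniform distribution on any $(1-3\eps)$-fraction of $G$, with failure probability $\exp(-\Omega(d + \log \delta^{-1}))$; a union bound over a $\tfrac{1}{3}$-net of $S^{d-1}$ (of size $\exp(O(d))$) upgrades this to uniform over $v$. Fourth, any admissible $w'$ is a convex combination of rescaled indicators $\tfrac{1}{|S|} \1_S$ for $|S| \ge (1-3\eps)n$ (a Carath\'eodory argument for the corresponding polytope), so convexity propagates the multiplicative guarantee to all $w'$.

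The main obstacle is the truncation step: getting an $O(\eps \log \eps^{-1})$ rate rather than the loose $O(\sqrt{\eps})$ of a naive Bernstein application requires tuning the truncation at $\Theta(\log \eps^{-1})$ standard deviations so that potential adversarial deletion of any $\eps$-fraction of mass contributes at most $O(\eps \log \eps^{-1})$, matching the bulk Bernstein term; this is also what produces the $(\eps \log \eps^{-1})^2$ denominator in the sample complexity.
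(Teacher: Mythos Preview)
Your reduction in the first two paragraphs is exactly the paper's proof: show $\wg \ge 1 - O(\eps)$, verify that $\{w_i/\wg\}_{i \in G}$ lies in the truncated simplex $\fS_{3\eps}^n$, invoke a uniform Loewner concentration result over that simplex (the paper's Corollary~\ref{corr:infnormbound_fs}), and absorb the $\wg$ factor. The paper simply cites Corollary~\ref{corr:infnormbound_fs} for your event $\mathcal{E}$ and is done.

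Where your proposal diverges is in the sketch of how to \emph{prove} $\mathcal{E}$, and here step 3 has a real gap. Applying Bernstein directly to each $(1-3\eps)n$-subset at rate $\teps$ gives per-set failure probability $\exp(-\Omega(n(\eps\log\eps^{-1})^2)) = \exp(-\Omega(d + \log\delta^{-1}))$ after the net over directions. But step 4 needs this to hold simultaneously for all $\binom{n}{3\eps n} = \exp(\Theta(n\eps\log\eps^{-1}))$ vertices, and since $n\eps\log\eps^{-1} \gg d$ under the stated sample complexity, the union bound does not close. The paper's fix (in the proof of Corollary~\ref{corr:infnormbound_fs}, following Lemma~\ref{lem:pnormbound_fs}) is the decomposition
\[
\frac{1}{|S|}\sum_{i \in S} X_iX_i^\top = \frac{1}{1-\eps}\cdot\frac{1}{n}\sum_{i \in G'} X_iX_i^\top - \frac{\eps}{1-\eps}\cdot\frac{1}{|S^c|}\sum_{i \in S^c} X_iX_i^\top.
\]
The first term is handled by a single Bernstein application (no union over sets). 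The second has only $\eps n$ samples, so Bernstein at the looser rate $t = \Theta(\log\eps^{-1})$ already gives failure $\exp(-\Omega(\eps n \log\eps^{-1}))$, precisely matching the number of complements; after scaling by $\eps$ its contribution is again $O(\eps\log\eps^{-1})$. Your truncation-based alternative in the final paragraph gestures at the right accounting, but as written the claim that ``the tail past the threshold contributes at most $O(\eps\log\eps^{-1})$ to any weighted average'' is not immediate, since individual tail values are unbounded; you would still need a concentration statement on the tail mass to make this rigorous.
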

The proof combines the strategy in Lemma~\ref{lem:robust-objective-ub} with the guarantee of the SDP solver. Perhaps surprisingly, Corollary~\ref{cor:robust-output-ub} and Lemma~\ref{lem:linfguarantee} are the only two properties about $\mm$ that our final analysis of Theorem~\ref{thm:main-robust} will need. 
In particular, we have the following key geometric proposition, which carefully combines trace inequalities to argue that the corrupted points $\mmb$ cannot create too many new large eigendirections.
\begin{restatable}{proposition}{containsgamma}\label{prop:containsgamma}
	Let $\mm = \mmg + \mmb$ be so that $\norm{\mm}_p \leq (1 + \teps) \norm{\covar}_p$, $\mmg \succeq 0$ and $\mmb \succeq 0$, and so that $(1 + \teps)\covar \succeq \mmg \succeq (1 - \teps) \covar$.
	Following notation of Algorithm~\ref{alg:robust-pca}, let
	\begin{equation}\label{eq:eigdecomp}\mm = \sum_{j \in [d]} \lam_j v_j v_j^\top,\; \covar = \sum_{j \in [d]} \sigma_j u_j u_j^\top\end{equation}
	be sorted eigendecompositions of $\mm$ and $\covar$, so $\lambda_1 \geq \ldots \geq \lambda_d$, and $\sigma_1 \geq \ldots \geq \sigma_d$. Let $\gamma$ be as in Theorem~\ref{thm:main-robust}, and assume $\sigma_{t + 1} < (1 - \gamma)\sigma_1$. Then,
	\[\max_{j \in [t]} v_j^\top \covar v_j \ge (1 - \gamma)\norm{\covar}_\infty.\]
\end{restatable}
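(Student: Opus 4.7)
The plan is to argue by contradiction: suppose $v_j^\top \covar v_j < (1 - \gamma)\sigma_1$ for every $j \in [t]$, and derive a violation of the Schatten bound $\norm{\mm}_p \le (1 + \teps)\norm{\covar}_p$. The intuition is that $\mmg \succeq (1 - \teps)\covar$ already forces $\mm$ to carry large eigenvalues along a near-copy of $\covar$'s top eigenspace, while the hypothesis insists that $v_1, \ldots, v_t$ lie essentially away from it; together these produce more than $t$ large eigenvalues of $\mm$, which the Schatten bound combined with the hypothesis $\sigma_{t+1} < (1-\gamma)\sigma_1$ cannot accommodate at the paper's scaling of $p$.

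I would first extract two quantitative ingredients. From eigenvalue monotonicity under Loewner order and the bounds on $\mmg$: $\lambda_j(\mm) \ge \lambda_j(\mmg) \ge (1 - \teps)\sigma_j$ and $v_j^\top \mmg v_j \le (1 + \teps)v_j^\top \covar v_j < (1 + \teps)(1 - \gamma)\sigma_1$ for $j \in [t]$; subtracting gives $v_j^\top \mmb v_j = \Omega(\gamma)\sigma_1$ whenever $\lambda_j \ge (1 - \gamma/2)\sigma_1$, so in particular $v_1^\top \mmb v_1 = \Omega(\gamma)\sigma_1$. Second, convexity of $X \mapsto \Tr(X^p)$ on the PSD cone for $p \ge 1$ yields $\Tr(\mm^p) \ge \Tr(\mmg^p) + p\,\Tr(\mmg^{p-1}\mmb)$, and combining with the Schatten bound together with $\Tr(\mmg^p) \ge (1-\teps)^p\norm{\covar}_p^p$ delivers
\[
\Tr(\mmg^{p-1}\mmb) \;\le\; \tfrac{1}{p}\bigl[(1+\teps)^p - (1-\teps)^p\bigr]\norm{\covar}_p^p \;=\; O(\teps)\norm{\covar}_p^p,
\]
where the parameter choices $p = \Theta(\sqrt{\log d/(\eps\log\eps^{-1})})$, $\teps = \Theta(\eps\log\eps^{-1})$, $\gamma = \Theta(\sqrt{\eps\log\eps^{-1}\log d})$ ensure $p\teps = O(\gamma) = O(1)$ and $d^{1/p} = 1 + O(\gamma)$, so $\norm{\covar}_p = (1+O(\gamma))\sigma_1$.

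The main obstacle is closing the contradiction. The route I would pursue is eigenvalue counting via min-max: exhibit $t+1$ approximately orthonormal directions on which $\mm$-variance is $\ge (1-O(\gamma))\sigma_1$, giving $\lambda_{t+1}(\mm) \ge (1-O(\gamma))\sigma_1$, and then compare $\norm{\mm}_p^p \ge (t+1)(1-O(\gamma))^p\sigma_1^p$ against $\norm{\covar}_p^p \le t\sigma_1^p + d(1-\gamma)^p\sigma_1^p$. The scaling $p\gamma = \Theta(\log d)$ is designed so $d(1-\gamma)^p = o(1)$, at which point the comparison reduces to $(t+1)(1-O(\gamma))^p > (1+\teps)^p(t + o(1))$ and delivers the contradiction for appropriate implicit constants. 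The first $t$ directions are $v_1,\ldots,v_t$ themselves (on which $\lambda_j$ is close to $\sigma_1$, under the combined Loewner and Schatten bounds), and the $(t+1)$-st arises from projecting $u_1$ off $V = \mathrm{span}(v_1,\ldots,v_t)$: one uses $u_1^\top\mm u_1 \ge (1-\teps)\sigma_1$ and the hypothesis-derived $(v_j^\top u_1)^2 \le v_j^\top\covar v_j/\sigma_1 < 1-\gamma$. The delicate step is controlling this projection when $t \ge 2$, since the per-coordinate bound does not immediately prevent $u_1 \in V$; handling that case appears to require either a multi-vector Cauchy--Schwarz argument using $\sigma_1 = u_1^\top\covar u_1$ against the off-diagonal $v_j^\top\covar v_k$ terms, or the upper bound on $\Tr(\mmg^{p-1}\mmb)$ to rule out $\mmb$'s significant mass aligning with $\mmg$'s top-$t$ eigenspace.
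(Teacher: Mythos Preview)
Your eigenvalue-counting route breaks at the final comparison. You correctly note that the parameter choices force $p\gamma = \Theta(\log d)$, so that $d(1-\gamma)^p = o(1)$. But the same scaling kills your inequality $(t+1)(1-O(\gamma))^p > (1+\teps)^p(t+o(1))$: the left side carries a factor $(1-c\gamma)^p = \exp(-\Theta(p\gamma)) = d^{-\Theta(1)}$, which is polynomially small in $d$, while the right side is $\approx t$. No choice of implicit constants rescues this. A related problem is the claim that $\lambda_j$ is close to $\sigma_1$ for all $j\le t$: the hypothesis $\sigma_{t+1}<(1-\gamma)\sigma_1$ places no lower bound on $\sigma_2,\dots,\sigma_t$, so neither the Loewner bound $\lambda_j\ge(1-\teps)\sigma_j$ nor the Schatten upper bound forces these $\lambda_j$ near $\sigma_1$. (Concretely, take $\covar$ rank one and $t=2$.) You also flag, but do not resolve, the issue that $u_1$ may lie entirely in $\mathrm{span}(v_1,\dots,v_t)$.

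The paper's argument avoids any appearance of $(1-O(\gamma))^p$. It passes from $t$ to the genuinely large index set $s\le t$ defined by $\sigma_s>(1-\gamma/4)\sigma_1$, sets $\mn=\sum_{j\le s}\lambda_j^{p-1}v_jv_j^\top\preceq\mm^{p-1}$, and sandwiches $\langle\mmb,\mn\rangle$. The upper bound $\langle\mmb,\mn\rangle\le\bigl[(1+\teps)^p-(1-\teps)^p\bigr]\norm{\covar}_p^p=O(p\teps)S$ (with $S=\sum_{j\le s}\sigma_j^p$) is close in spirit to your convexity step, though it uses $\mm^{p-1}$ rather than $\mmg^{p-1}$ and a trace inequality $\Tr(\ma^{p-1}\mb)\ge\Tr(\mb^p)$ for $\ma\succeq\mb\succeq 0$. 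The lower bound comes from the contradiction hypothesis: for $j\le s$ one has $\lambda_j - (1+\teps)v_j^\top\covar v_j \ge (1-\teps)(1-\gamma/4)\sigma_1 - (1+\teps)(1-\gamma)\sigma_1 = \Omega(\gamma)\sigma_1$, so $\langle\mmb,\mn\rangle \ge \Omega(\gamma)\sum_{j\le s}\lambda_j^{p-1}\sigma_1 \ge \Omega(\gamma)(1-\teps)^{p-1}S$. The only exponential factor that survives is $(1-\teps)^{p-1}$, and since $p\teps=\Theta(\gamma)=o(1)$ this is $\ge\tfrac12$. Comparing $O(p\teps)S$ with $\Omega(\gamma)S$ contradicts $\gamma=\Theta(p\teps)$ with a sufficiently large constant. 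The essential trick you are missing is to do the accounting multiplicatively in $\lambda_j^{p-1}$ (so that only one factor of $\lambda_j$ is compared against $(1-\gamma)\sigma_1$) rather than raising an entire $(1-O(\gamma))$ to the $p$th power.
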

\begin{proof}
	For concreteness, we will define the parameters
	\[p = \frac{2}{7}\sqrt{\frac{\log(3d)}{\teps}},\; \gamma = 14\sqrt{\teps \log (3d)} = 49p\teps.\]
	For these choices of $p$, $\gamma$, we will use the following (loose) approximations for sufficiently small $\teps$:
	\begin{equation}\label{eq:taylorapproxes}\begin{aligned}
	\Par{1 - \frac{\gamma}{4}}^p = \Par{1 - \frac{\gamma}{4}}^{\frac{4}{\gamma}\log(3d)}\le \frac{1}{3d},\; (1 + \teps)^p - (1 - \teps)^p \le \exp(p\teps) - (1 - p\teps) \le 3p\teps.
	\end{aligned}\end{equation}
	Suppose for contradiction that all $v_j^\top \covar v_j < (1 - \gamma)\sigma_1$ for $j \in [t]$. By applying the guarantee of Corollary~\ref{cor:robust-output-ub} and Fact~\ref{fact:qnormcert}, it follows that
	\begin{equation}\label{eq:applysdp}\inprod{\mm}{\mm^{p - 1}} = \norm{\mm}_p^p \le (1 + \teps)^p \norm{\covar}_p^p.\end{equation}
	Let $s \in [d]$ be the largest index such that $\sigma_s > \Par{1 - \tfrac{\gamma}{4}}\sigma_1$, and note that $s \le t$. We define
	\[\mn \defeq \sum_{j \in [s]} \lam_j^{p - 1} v_j v_j^\top \preceq \mm^{p - 1}.\]
	That is, $\mn$ is the restriction of $\mm^{p - 1}$ to its top $s$ eigendirections. Then,
	\begin{equation}\label{eq:goodbaddirs}
	\begin{aligned}
	\inprod{\mm}{\mm^{p - 1}} &= \inprod{\mmb}{\mm^{p - 1}} + \inprod{\mmg}{\mm^{p - 1}} \\
	&\ge \inprod{\mmb}{\mm^{p - 1}} + \inprod{(1 - \teps)\covar}{\mm^{p - 1}} \ge \inprod{\mmb}{\mn} + (1 - \teps)^p\norm{\covar}_p^p.
	\end{aligned}
	\end{equation}
	In the second line, we used Lemma~\ref{lem:linfguarantee} twice, as well as the trace inequality Lemma~\ref{lem:pnormshift} with $\ma = \mm$ and $\mb = (1 - \teps)\covar$. Combining \eqref{eq:applysdp} with \eqref{eq:goodbaddirs}, and expanding the definition of $\mmb$, yields
	\begin{equation}\label{eq:lhsrhsexpand}
	\begin{aligned}
	\Par{(1 + \teps)^p - (1 - \teps)^p}\norm{\covar}_p^p &\ge \inprod{\mmb}{\mn} = \inprod{\mmb}{\sum_{j \in [s]}\lam_j^{p - 1} v_j v_j^\top}\\
	&= \inprod{\mm}{\sum_{j \in [s]} \lam_j^{p - 1} v_j v_j^\top} - \inprod{\mmg}{\sum_{j \in [s]} \lam_j^{p - 1} v_j v_j^\top} \\
	&\ge \inprod{\mm}{\sum_{j \in [s]} \lam_j^{p - 1} v_j v_j^\top} - (1 + \teps)\inprod{\covar}{\sum_{j \in [s]} \lam_j^{p - 1} v_j v_j^\top}\\
	&=\sum_{j \in [s]} \Par{\lam_j^p - (1 + \teps)\lam_j^{p - 1} v_j^\top \covar v_j} \ge \sum_{j \in [s]} \Par{\lam_j^p - \lam_j^{p - 1}(1 + \teps)(1 - \gamma)\sigma_1}.
	\end{aligned}
	\end{equation}
	The third line followed from from the spectral bound $\mmg \preceq (1 + \teps)\covar$ of Lemma~\ref{lem:linfguarantee}, and the fourth followed from the fact that $\{\lam_j\}_{j \in [d]}$, $\{v_j\}_{j \in [d]}$ eigendecompose $\mm$, as well as the assumption $v_j^\top \covar v_j \le (1 -\gamma)\sigma_1$ for all $j \in [t]$. Letting $S \defeq \sum_{j \in [s]} \sigma_j^p$, and using both approximations in \eqref{eq:taylorapproxes}, 
	\begin{equation}\label{eq:sdominates}\begin{aligned}\norm{\covar}_p^p  \le \sum_{j \in [s]} \sigma_j^p + \Par{1 - \frac{\gamma}{4}}^p (d - s)\sigma_1^p \le \frac{4}{3}S 
	\implies \Par{(1 + \teps)^p - (1 - \teps)^p}\norm{\covar}_p^p \le 4p\teps S.
	\end{aligned}\end{equation}
	Next, we bound the last term of \eqref{eq:lhsrhsexpand}. By using $(1 + \teps)(1 - \gamma) \le 1 - \tfrac{\gamma}{2}$, 
	\begin{equation}\label{eq:dealwithlhs}
	\begin{aligned}
	\sum_{j \in [s]} \Par{\lam_j^p- \lam_j^{p - 1}(1 + \teps)(1 - \gamma)\sigma_1} &\ge \sum_{j \in [s]} \lam_j^{p - 1}\Par{\lam_j - \Par{1 - \frac{\gamma}{2}} \sigma_1} \\
	&\ge \frac{\gamma}{6} \sum_{j \in [s]} \lam_j^{p - 1} \sigma_1 \ge \frac{\gamma}{6} \Par{1 - \teps}^{p - 1}\sum_{j \in [s]} \sigma_j^p  \ge \frac{\gamma}{12} S.
	\end{aligned}
	\end{equation}
	The second line used $\lam_j - (1 - \tfrac{\gamma}{2})\sigma_1 \ge (1 - \teps)\sigma_j - (1 - \tfrac{\gamma}{2})\sigma_1 \ge \tfrac{\gamma}{6}\sigma_1$ by definition of $s$,  Lemma~\ref{lem:minimaxchar} (twice), and $(1 - \teps)^{p - 1} \ge \thalf$. Combining \eqref{eq:dealwithlhs} and \eqref{eq:sdominates} and plugging into \eqref{eq:lhsrhsexpand},
	\[4p\teps S  \ge \frac{\gamma}{12} S \implies 48p\teps \ge \gamma.\]
	By the choice of $\gamma$ and $p$ (i.e.\ $\gamma = 49p\teps$), we attain a contradiction.
\end{proof}

In the proof of Proposition~\ref{prop:containsgamma}, we used the following facts.

\begin{lemma}\label{lem:pnormshift}
	Let $\ma \succeq \mb \succeq 0$ be symmetric matrices and $p$ a positive integer. Then we have
	\[\Tr\Par{\ma^{p - 1}\mb} \ge  \Tr\Par{\mb^p}. \]
\end{lemma}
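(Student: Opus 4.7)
The plan is to reduce the inequality to a chain of two elementary comparisons via the symmetric change of variables $\mathbf{T} \defeq \ma^{-1/2}\mb\ma^{-1/2}$. I would first assume without loss of generality that $\ma \succ 0$: if $\ma$ is singular, replace $\ma$ by $\ma_\delta \defeq \ma + \delta\id$ for $\delta > 0$, observe that $\ma_\delta \succeq \mb \succeq 0$ still holds, prove the inequality for the pair $(\ma_\delta, \mb)$, and send $\delta \to 0^+$ using continuity of the trace in its arguments. Given $\ma \succ 0$, the hypothesis $0 \preceq \mb \preceq \ma$ is equivalent (via conjugation by $\ma^{-1/2}$) to $0 \preceq \mathbf{T} \preceq \id$; equivalently, every eigenvalue of $\mathbf{T}$ lies in $[0, 1]$, and $\mb = \ma^{1/2}\mathbf{T}\ma^{1/2}$.

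Using the cyclic property of trace (Fact~\ref{fact:trprod}), both sides can be rewritten purely in terms of $\ma$ and $\mathbf{T}$:
\[
\Tr(\ma^{p-1}\mb) = \Tr\!\left(\ma^{p-1} \ma^{1/2}\mathbf{T}\ma^{1/2}\right) = \Tr(\ma^p \mathbf{T}),\qquad \Tr(\mb^p) = \Tr\!\left((\ma^{1/2}\mathbf{T}\ma^{1/2})^p\right).
\]
The claim is therefore equivalent to $\Tr(\ma^p \mathbf{T}) \ge \Tr\!\left((\ma^{1/2}\mathbf{T}\ma^{1/2})^p\right)$, which I would bridge through the intermediate quantity $\Tr(\ma^p \mathbf{T}^p)$.

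For the first inequality $\Tr(\ma^p \mathbf{T}) \ge \Tr(\ma^p \mathbf{T}^p)$: the scalar inequality $t \ge t^p$ for $t \in [0,1]$ and $p \ge 1$ lifts by functional calculus to the operator inequality $\mathbf{T} - \mathbf{T}^p \succeq 0$, after which multiplying by the PSD matrix $\ma^p$ and taking trace yields a nonnegative quantity (Fact~\ref{fact:trprod}). For the second inequality $\Tr\!\left((\ma^{1/2}\mathbf{T}\ma^{1/2})^p\right) \le \Tr(\ma^p \mathbf{T}^p)$: this is an instance of the Araki--Lieb--Thirring trace inequality at exponent $p \ge 1$ applied to the PSD pair $(\ma, \mathbf{T})$, combined with the cyclic identity $\Tr(\ma^{p/2}\mathbf{T}^p\ma^{p/2}) = \Tr(\ma^p \mathbf{T}^p)$. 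Concatenating the two bounds yields $\Tr(\ma^{p-1}\mb) \ge \Tr(\mb^p)$. The one nontrivial ingredient is Araki--Lieb--Thirring; all other steps are cyclicity and one scalar spectral-calculus fact, so I expect the main subtlety to lie in setting up the substitution cleanly (keeping $\mathbf{T}$ well-defined and in $[0,\id]$ after the density reduction) rather than in any single computation.
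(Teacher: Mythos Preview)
Your proof is correct. The paper takes a different, inductive route: it shows that for each $1 \le k \le p-1$,
\[
\Tr(\ma^k \mb^{p-k}) \ge \Tr(\ma^{k-1}\mb^{(p-k)/2}\ma\mb^{(p-k)/2}) \ge \Tr(\ma^{k-1}\mb^{p-k+1}),
\]
where the first step invokes the ``Extended Lieb--Thirring'' inequality $\Tr(\mm\mn^2) \ge \Tr(\mm^\alpha \mn \mm^{1-\alpha}\mn)$ with $\mm = \ma^k$, $\mn = \mb^{(p-k)/2}$, $\alpha = (k-1)/k$, and the second step uses $\ma \succeq \mb$ directly. Telescoping $k$ from $p-1$ down to $0$ gives the claim.

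Both arguments ultimately rest on a Lieb--Thirring type trace inequality, but the structure differs. Your substitution $\mathbf{T} = \ma^{-1/2}\mb\ma^{-1/2}$ collapses the whole inequality into a single application of Araki--Lieb--Thirring plus the scalar fact $t \ge t^p$ on $[0,1]$, at the cost of a density/limit argument to handle singular $\ma$. The paper's approach avoids the invertibility reduction and keeps all powers integral, but pays with an induction that shifts one factor of $\ma$ to $\mb$ at a time. Your version is arguably more transparent once the substitution is made; the paper's is slightly more self-contained in that it does not touch $\ma^{-1/2}$.
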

\begin{proof}
	For any $1 \le k \le p - 1$,
	\[\Tr\Par{\ma^k \mb^{p - k}} \ge \Tr\Par{\ma^{k - 1}\mb^{\frac{p - k}{2}}\ma \mb^{\frac{p - k}{2}}} \ge \Tr\Par{\ma^{k - 1}\mb^{\frac{p - k}{2}}\mb\mb^{\frac{p - k}{2}}} =  \Tr\Par{\ma^{k - 1}\mb^{p - k + 1}}. \]
	The first step used the Extended Lieb-Thirring trace inequality $\Tr(\mm\mn^2) \ge \Tr(\mm^\alpha\mn\mm^{1 - \alpha}\mn)$ for $\alpha \in [0, 1]$, $\mm, \mn \in \PSD^d$ (see e.g.\ Lemma 2.1, \cite{Allen-ZhuLO16}), and the second $\ma \succeq \mb$. Finally, induction on $k$ yields the claim.
\end{proof}

\begin{lemma}\label{lem:minimaxchar}
	For all $j \in [d]$, $\lam_j \ge (1 - \teps) \sigma_j$.
\end{lemma}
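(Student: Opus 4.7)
The plan is to derive the inequality from a single application of the Courant--Fischer min--max characterization of eigenvalues, using the Loewner lower bound on $\mm$ already established earlier in the paper.

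First, I would recall from Lemma~\ref{lem:linfguarantee} that $\mmg \succeq (1 - \teps)\covar$ with high probability, and from \eqref{eq:mgbdef} that $\mm = \mmg + \mmb$ with $\mmb \succeq 0$. Adding the PSD matrix $\mmb$ to both sides of the Loewner inequality preserves it, so
\[
\mm \;=\; \mmg + \mmb \;\succeq\; \mmg \;\succeq\; (1 - \teps)\covar.
\]

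Next, I would invoke Weyl's monotonicity theorem (equivalently, Courant--Fischer): if $\ma \succeq \mb$ for symmetric $\ma, \mb \in \Sym^d$, then $\lambda_j(\ma) \ge \lambda_j(\mb)$ for every $j \in [d]$, where eigenvalues are listed in decreasing order. Applying this with $\ma = \mm$ and $\mb = (1 - \teps)\covar$ and using the fact that $\lambda_j((1 - \teps)\covar) = (1 - \teps)\sigma_j$ (since scaling by a nonnegative constant scales eigenvalues in order) yields
\[
\lam_j \;=\; \lambda_j(\mm) \;\ge\; \lambda_j\!\bigl((1 - \teps)\covar\bigr) \;=\; (1 - \teps)\sigma_j \quad \text{for all } j \in [d],
\]
which is the claimed bound.

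There is no real obstacle here: the lemma is a one-line consequence of monotonicity of eigenvalues under the Loewner order, once Lemma~\ref{lem:linfguarantee} and the PSD-ness of $\mmb$ are in hand. The only thing to be careful about is that the min--max inequality is applied with the correct ordering convention (decreasing eigenvalues), matching the convention set up in \eqref{eq:eigdecomp} of Proposition~\ref{prop:containsgamma}.
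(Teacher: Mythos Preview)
Your proposal is correct and follows essentially the same approach as the paper: both use $\mm \succeq \mmg \succeq (1-\teps)\covar$ (from Lemma~\ref{lem:linfguarantee} and $\mmb \succeq 0$) together with the Courant--Fischer/Weyl monotonicity of eigenvalues under the Loewner order. The paper writes the Courant--Fischer step explicitly as $\lam_j \ge \min_{k \in [j]} u_k^\top \mm u_k$ and then bounds each Rayleigh quotient, whereas you invoke Weyl's monotonicity directly; these are equivalent formulations of the same one-line argument.
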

\begin{proof}
	By the Courant-Fischer minimax characterization of eigenvalues,
	\[\lam_j \ge \min_{k \in [j]} u_k^\top \mm u_k.\]
	However, we also have $\mm \succeq \mmg \succeq (1 - \teps) \covar$ (Lemma~\ref{lem:linfguarantee}), yielding the conclusion.
\end{proof}

The guarantees of Proposition~\ref{prop:containsgamma} were geared towards exact eigenvectors of the matrix $\mm$. We now modify the analysis to tolerate inexactness in the eigenvector computation, in line with the processing of Line 5 of our Algorithm~\ref{alg:robust-pca}. This yields our final claim in Theorem~\ref{thm:main-robust}.

\begin{corollary}\label{corr:inexactgamma}
	In the setting of Proposition~\ref{prop:containsgamma}, and letting $\{z_j\}_{j \in [t]}$ satisfy \eqref{eq:mpsvd}, set for all $j \in [t]$
	\[y_j \defeq \frac{\mm^{\frac{p - 1}{2}} z_j}{\norm{\mm^{\frac{p - 1}{2}} z_j}_2}.\]
	Then with probability at least $1 - \delta$,
	\[\max_{j \in [t]} y_j^\top \covar y_j \ge (1 - \gamma) \norm{\covar}_\infty.\]
\end{corollary}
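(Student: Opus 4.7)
The plan is to prove this by contradiction, mimicking the chain of inequalities in Proposition~\ref{prop:containsgamma} but replacing the exact eigenvectors $v_j$ of $\mm$ with the approximate objects $z_j$ supplied by $\mathsf{Power}$. The key identity to exploit is
\[
y_j^\top \covar y_j \;=\; \frac{z_j^\top \mm^{(p-1)/2}\covar\,\mm^{(p-1)/2} z_j}{z_j^\top \mm^{p-1} z_j},
\]
whose denominator is controlled by \eqref{eq:mpsvd}: $z_j^\top \mm^{p-1} z_j \in [(1-\teps)\lam_j^{p-1},(1+\teps)\lam_j^{p-1}]$. Thus, assuming for contradiction that $y_j^\top \covar y_j < (1-\gamma)\|\covar\|_\infty$ for every $j \in [t]$ immediately gives $z_j^\top \mm^{(p-1)/2}\covar\,\mm^{(p-1)/2} z_j < (1+\teps)(1-\gamma)\sigma_1\lam_j^{p-1}$.

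Next, I would define the natural analog of $\mn$ from Proposition~\ref{prop:containsgamma},
\[
\mn \;\defeq\; \sum_{j \in [s]} \mm^{(p-1)/2} z_j z_j^\top \mm^{(p-1)/2},
\]
where $s \le t$ is again the largest index with $\sigma_s > (1-\gamma/4)\sigma_1$. Since the $\{z_j\}_{j \in [t]}$ are orthonormal they extend to an orthonormal basis $\{z_j\}_{j \in [d]}$, so $\sum_{j \in [d]} z_j z_j^\top = \id$ and consequently $\mn \preceq \mm^{p-1}$; this preserves the crucial inequality $\inprod{\mmb}{\mm^{p-1}} \ge \inprod{\mmb}{\mn}$ that drove the original contradiction. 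The outer envelope of the chain is unchanged: Fact~\ref{fact:qnormcert} with Corollary~\ref{cor:robust-output-ub} gives $\inprod{\mm}{\mm^{p-1}} \le (1+\teps)^p\|\covar\|_p^p$, while Lemma~\ref{lem:pnormshift} together with $\mmg \succeq (1-\teps)\covar$ from Lemma~\ref{lem:linfguarantee} gives $\inprod{\mmg}{\mm^{p-1}} \ge (1-\teps)^p\|\covar\|_p^p$. Subtracting yields $((1+\teps)^p - (1-\teps)^p)\|\covar\|_p^p \ge \inprod{\mm}{\mn} - \inprod{\mmg}{\mn}$.

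It remains to estimate the two terms on the right. By \eqref{eq:mpsvd}, $\inprod{\mm}{\mn} = \sum_{j \in [s]} z_j^\top \mm^p z_j \ge (1-\teps)\sum_{j \in [s]} \lam_j^p$. Using $\mmg \preceq (1+\teps)\covar$ from Lemma~\ref{lem:linfguarantee} and the contradiction hypothesis, $\inprod{\mmg}{\mn} \le (1+\teps)\sum_{j \in [s]} z_j^\top \mm^{(p-1)/2}\covar\,\mm^{(p-1)/2} z_j < (1+\teps)^2(1-\gamma)\sigma_1 \sum_{j \in [s]} \lam_j^{p-1}$. Feeding these into the same quantitative comparisons as \eqref{eq:sdominates} and \eqref{eq:dealwithlhs}, using Lemma~\ref{lem:minimaxchar} and the definition of $s$ exactly as before, produces an inequality of the form $C_1 p\teps S \ge C_2 \gamma S$ for absolute constants $C_1, C_2$, contradicting the choice $\gamma = \Theta(p\teps)$ once the constant $C$ in Theorem~\ref{thm:main-robust} is taken slightly larger than the constant used in Proposition~\ref{prop:containsgamma}.

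The main obstacle is the bookkeeping of compounding $(1\pm\teps)$ factors arising from the approximate power iteration, and verifying that after these factors inflate, the contradiction can still be driven home by increasing $C$ by a constant. This is routine but tedious. The $1-\delta$ probability guarantee is obtained by union bounding Lemma~\ref{lem:robust-objective-ub} ($\delta/2$), Lemma~\ref{lem:linfguarantee} ($\delta/2$), and the inverse polynomial failure probabilities of the SDP solver (Theorem~\ref{thm:wgrowthsdp}) and $\mathsf{Power}$ (Proposition~\ref{prop:poweriter}); both of the latter can be driven below any desired inverse polynomial by standard amplification without affecting the runtime up to constants.
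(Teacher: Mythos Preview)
Your proposal is correct and follows essentially the same route as the paper: you redefine $\mn \defeq \mm^{(p-1)/2}\bigl(\sum_{j\in[s]} z_jz_j^\top\bigr)\mm^{(p-1)/2}\preceq \mm^{p-1}$, use \eqref{eq:mpsvd} to replace the exact eigenvalue identities by their $(1\pm\teps)$-approximate counterparts, and then rerun the contradiction of Proposition~\ref{prop:containsgamma} with the extra $(1\pm\teps)$ factors absorbed into the constant. The paper's proof is line-for-line the same argument.
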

\begin{proof}
	Assume all $y_j$ have $y_j^\top \covar y_j \le (1 - \gamma)\sigma_1$ for contradiction. We outline modifications to the proof of Proposition~\ref{prop:containsgamma}. Specifically, we redefine the matrix $\mn$ by
	\[\mn \defeq \mm^{\frac{p - 1}{2}}\Par{\sum_{j \in [s]}z_jz_j^\top}\mm^{\frac{p - 1}{2}}.\]
	Because $\sum_{j \in [s]}z_jz_j^\top$ is a projection matrix, it is clear $\mn \preceq \mm^{p - 1}$. Therefore, by combining the derivations \eqref{eq:applysdp} and \eqref{eq:goodbaddirs}, it remains true that 
	\[\Par{(1 + \teps)^p - (1 - \teps)^p}\norm{\covar}_p^p \ge \inprod{\mmb}{\mn} = \inprod{\mm}{\mn} - \inprod{\mmg}{\mn}.\]
	We now bound these two terms in an analogous way from Proposition~\ref{prop:containsgamma}, with negligible loss; combining these bounds will again yield a contradiction. First, we have the lower bound
	\begin{align*}
	\inprod{\mm}{\sum_{j \in [s]} \mm^{\frac{p - 1}{2}} z_j z_j^\top \mm^{\frac{p - 1}{2}}} = \sum_{j \in [s]} z_j^\top \mm^p z_j \ge (1 - \teps) \sum_{j \in [s]} \lam_j^p.
	\end{align*}
	Here, the last inequality applied the assumption \eqref{eq:mpsvd} with respect to $\mm^p$. Next, we upper bound
	\begin{align*}
	\inprod{\mmg}{\sum_{j \in [s]} \mm^{\frac{p - 1}{2}} z_j z_j^\top \mm^{\frac{p - 1}{2}}} &\le (1 + \teps)\inprod{\covar}{\sum_{j \in [s]} \mm^{\frac{p - 1}{2}} z_j z_j^\top \mm^{\frac{p - 1}{2}}} \\
	&= (1 + \teps)\sum_{j \in [s]} \norm{\mm^{\frac{p - 1}{2}} z_j}_2^2 y_j^\top \covar y_j \\
	&\le (1 + \teps)(1 - \gamma)\sigma_1 \sum_{j \in [s]} z_j^\top \mm^{p - 1} z_j \\
	&\le (1 - \gamma)(1 + \teps)^2\sigma_1 \sum_{j \in [s]} \lam_j^{p - 1},
	\end{align*}
	The first line used $\mmg \preceq (1 + \teps)\covar$, the second used the definition of $y_j$, the third used our assumption $y_j^\top \covar y_j \le (1 - \gamma)\sigma_1$, and the last used \eqref{eq:mpsvd} with respect to $\mm^{p - 1}$. Finally, the remaining derivation \eqref{eq:dealwithlhs} is tolerant to additional factors of $1 + \teps$, yielding the same conclusion up to constants.
\end{proof}

Finally, we prove Theorem~\ref{thm:main-robust} by combining the tools developed thus far.
\begin{proof}[Proof of Theorem~\ref{thm:main-robust}]
Correctness of the algorithm is immediate from Corollary~\ref{corr:inexactgamma} and the guarantees of $\OneD$. Concretely, Corollary~\ref{corr:inexactgamma} guarantees that one of the vectors we produce will be a $(1 - \gamma)$-approximate top eigenvector (say some index $j \in [t]$), and $\OneD$ will only lose a negligible fraction $O(\eps\log\eps^{-1})$ of this quality (see Lemma~\ref{lem:univariate}); the best returned eigenvector as measured by $\OneD$ can only improve the guarantee. Finally, the failure probability follows by combining the guarantees of Lemmas~\ref{lem:univariate},~\ref{lem:robust-objective-ub}, and~\ref{lem:linfguarantee}.

We now discuss runtime. The complexity of lines 2, 4, and 5, as guaranteed by Proposition~\ref{prop:boxconstrainedp}, Proposition~\ref{prop:poweriter}, and Lemma~\ref{lem:univariate} are respectively (recalling $p = \tO(\eps^{-0.5})$)
\[\tO\Par{\frac{nd}{\eps^{4.5}}},\; \tO\Par{\frac{ndt}{\eps^{1.5}}},\; \tO\Par{ndt}.\]
Throughout we use that we can compute matrix-vector products in an arbitrary linear combination of the $X_i X_i^\top$ in time $O(nd)$; it is easy to check that in all runtime guarantees, $\nnz$ can be replaced by this computational cost. Combining these bounds yields the final conclusion.
\end{proof} 	
	\subsection*{Acknowledgments}
	
	We thank Swati Padmanabhan and Aaron Sidford for helpful discussions.
	
	\newpage
	\bibliographystyle{alpha}	
	\bibliography{schatten-packing}

\newcommand{\etalchar}[1]{$^{#1}$}
\begin{thebibliography}{CDGW19}

\bibitem[AK16]{AroraK16}
Sanjeev Arora and Satyen Kale.
\newblock A combinatorial, primal-dual approach to semidefinite programs.
\newblock {\em J. {ACM}}, 63(2):12:1--12:35, 2016.

\bibitem[AL17]{Allen-ZhuL17b}
Zeyuan Allen{-}Zhu and Yuanzhi Li.
\newblock Follow the compressed leader: Faster online learning of eigenvectors
  and faster {MMWU}.
\newblock In {\em Proceedings of the 34th International Conference on Machine
  Learning, {ICML} 2017, Sydney, NSW, Australia, 6-11 August 2017}, pages
  116--125, 2017.

\bibitem[ALO15]{ZhuLO15}
Zeyuan {Allen Zhu}, Zhenyu Liao, and Lorenzo Orecchia.
\newblock Spectral sparsification and regret minimization beyond matrix
  multiplicative updates.
\newblock In {\em Proceedings of the Forty-Seventh Annual {ACM} on Symposium on
  Theory of Computing, {STOC} 2015, Portland, OR, USA, June 14-17, 2015}, pages
  237--245, 2015.

\bibitem[ALO16]{Allen-ZhuLO16}
Zeyuan {Allen Zhu}, Yin~Tat Lee, and Lorenzo Orecchia.
\newblock Using optimization to obtain a width-independent, parallel, simpler,
  and faster positive {SDP} solver.
\newblock In {\em Proceedings of the Twenty-Seventh Annual {ACM-SIAM} Symposium
  on Discrete Algorithms, {SODA} 2016, Arlington, VA, USA, January 10-12,
  2016}, pages 1824--1831, 2016.

\bibitem[CDG19]{ChengDG19}
Yu~Cheng, Ilias Diakonikolas, and Rong Ge.
\newblock High-dimensional robust mean estimation in nearly-linear time.
\newblock In {\em Proceedings of the Thirtieth Annual {ACM-SIAM} Symposium on
  Discrete Algorithms, {SODA} 2019, San Diego, California, USA, January 6-9,
  2019}, pages 2755--2771, 2019.

\bibitem[CDGW19]{cheng2019faster}
Yu~Cheng, Ilias Diakonikolas, Rong Ge, and David Woodruff.
\newblock Faster algorithms for high-dimensional robust covariance estimation.
\newblock {\em arXiv preprint arXiv:1906.04661}, 2019.

\bibitem[CDST19]{CarmonDST19}
Yair Carmon, John~C. Duchi, Aaron Sidford, and Kevin Tian.
\newblock A rank-1 sketch for matrix multiplicative weights.
\newblock In {\em Conference on Learning Theory, {COLT} 2019, 25-28 June 2019,
  Phoenix, AZ, {USA}}, pages 589--623, 2019.

\bibitem[CG18]{ChengG18}
Yu~Cheng and Rong Ge.
\newblock Non-convex matrix completion against a semi-random adversary.
\newblock In {\em Conference On Learning Theory, {COLT} 2018, Stockholm,
  Sweden, 6-9 July 2018}, pages 1362--1394, 2018.

\bibitem[CLMW11]{candes2011robust}
Emmanuel~J Cand{\`e}s, Xiaodong Li, Yi~Ma, and John Wright.
\newblock Robust principal component analysis?
\newblock {\em Journal of the ACM (JACM)}, 58(3):1--37, 2011.

\bibitem[CMY20]{CherapanamjeriMY20}
Yeshwanth Cherapanamjeri, Sidhanth Mohanty, and Morris Yau.
\newblock List decodable mean estimation in nearly linear time.
\newblock {\em CoRR}, abs/2005.09796, 2020.

\bibitem[DFO18]{DiakonikolasFO18}
Jelena Diakonikolas, Maryam Fazel, and Lorenzo Orecchia.
\newblock Width-independence beyond linear objectives: Distributed fair packing
  and covering algorithms.
\newblock {\em CoRR}, abs/1808.02517, 2018.

\bibitem[DG03]{DasguptaG03}
Sanjoy Dasgupta and Anupam Gupta.
\newblock An elementary proof of a theorem of johnson and lindenstrauss.
\newblock {\em Random Struct. Algorithms}, 22(1):60--65, 2003.

\bibitem[DHL19]{dong2019quantum}
Yihe Dong, Samuel Hopkins, and Jerry Li.
\newblock Quantum entropy scoring for fast robust mean estimation and improved
  outlier detection.
\newblock In {\em Advances in Neural Information Processing Systems}, pages
  6065--6075, 2019.

\bibitem[DK19]{diakonikolas2019recent}
Ilias Diakonikolas and Daniel~M Kane.
\newblock Recent advances in algorithmic high-dimensional robust statistics.
\newblock {\em arXiv preprint arXiv:1911.05911}, 2019.

\bibitem[DKK{\etalchar{+}}17]{diakonikolas2017being}
Ilias Diakonikolas, Gautam Kamath, Daniel~M Kane, Jerry Li, Ankur Moitra, and
  Alistair Stewart.
\newblock Being robust (in high dimensions) can be practical.
\newblock In {\em Proceedings of the 34th International Conference on Machine
  Learning-Volume 70}, pages 999--1008. JMLR. org, 2017.

\bibitem[DKK{\etalchar{+}}18]{diakonikolas2018robustly}
Ilias Diakonikolas, Gautam Kamath, Daniel~M Kane, Jerry Li, Ankur Moitra, and
  Alistair Stewart.
\newblock Robustly learning a gaussian: Getting optimal error, efficiently.
\newblock In {\em Proceedings of the Twenty-Ninth Annual ACM-SIAM Symposium on
  Discrete Algorithms}, pages 2683--2702. SIAM, 2018.

\bibitem[DKK{\etalchar{+}}19]{DiakonikolasKKLMS19}
Ilias Diakonikolas, Gautam Kamath, Daniel Kane, Jerry Li, Ankur Moitra, and
  Alistair Stewart.
\newblock Robust estimators in high-dimensions without the computational
  intractability.
\newblock {\em {SIAM} J. Comput.}, 48(2):742--864, 2019.

\bibitem[DKS17]{diakonikolas2017statistical}
Ilias Diakonikolas, Daniel~M Kane, and Alistair Stewart.
\newblock Statistical query lower bounds for robust estimation of
  high-dimensional gaussians and gaussian mixtures.
\newblock In {\em 2017 IEEE 58th Annual Symposium on Foundations of Computer
  Science (FOCS)}, pages 73--84. IEEE, 2017.

\bibitem[DL19]{DespersinL19}
Jules Despersin and Guillaume Lecu\'e.
\newblock Robust subgaussian estimation of a mean vector in nearly linear time.
\newblock {\em CoRR}, abs/1906.03058, 2019.

\bibitem[GHM15]{GarberHM15}
Dan Garber, Elad Hazan, and Tengyu Ma.
\newblock Online learning of eigenvectors.
\newblock In {\em Proceedings of the 32nd International Conference on Machine
  Learning, {ICML} 2015, Lille, France, 6-11 July 2015}, pages 560--568, 2015.

\bibitem[HL18]{hopkins2018mixture}
Samuel~B Hopkins and Jerry Li.
\newblock Mixture models, robustness, and sum of squares proofs.
\newblock In {\em Proceedings of the 50th Annual ACM SIGACT Symposium on Theory
  of Computing}, pages 1021--1034, 2018.

\bibitem[Hub64]{huber1964robust}
Peter~J Huber.
\newblock Robust estimation of a location parameter.
\newblock {\em The Annals of Mathematical Statistics}, 35(1):73--101, 1964.

\bibitem[JLL{\etalchar{+}}20]{JambulapatiLLPT20}
Arun Jambulapati, Yin~Tat Lee, Jerry Li, Swati Padmanabhan, and Kevin Tian.
\newblock Positive semidefinite programming: Mixed, parallel, and
  width-independent.
\newblock In {\em Proceedings of the 52nd Annual ACM SIGACT Symposium on Theory
  of Computing}, 2020.

\bibitem[JY11]{JainY11}
Rahul Jain and Penghui Yao.
\newblock A parallel approximation algorithm for positive semidefinite
  programming.
\newblock In {\em {IEEE} 52nd Annual Symposium on Foundations of Computer
  Science, {FOCS} 2011, Palm Springs, CA, USA, October 22-25, 2011}, pages
  463--471, 2011.

\bibitem[KSS18]{kothari2018robust}
Pravesh~K Kothari, Jacob Steinhardt, and David Steurer.
\newblock Robust moment estimation and improved clustering via sum of squares.
\newblock In {\em Proceedings of the 50th Annual ACM SIGACT Symposium on Theory
  of Computing}, pages 1035--1046, 2018.

\bibitem[Li18]{li2018principled}
Jerry~Zheng Li.
\newblock {\em Principled approaches to robust machine learning and beyond}.
\newblock PhD thesis, Massachusetts Institute of Technology, 2018.

\bibitem[LRV16]{lai2016agnostic}
Kevin~A Lai, Anup~B Rao, and Santosh Vempala.
\newblock Agnostic estimation of mean and covariance.
\newblock In {\em 2016 IEEE 57th Annual Symposium on Foundations of Computer
  Science (FOCS)}, pages 665--674. IEEE, 2016.

\bibitem[MM15]{MuscoM15}
Cameron Musco and Christopher Musco.
\newblock Randomized block krylov methods for stronger and faster approximate
  singular value decomposition.
\newblock In {\em Advances in Neural Information Processing Systems 28: Annual
  Conference on Neural Information Processing Systems 2015, December 7-12,
  2015, Montreal, Quebec, Canada}, pages 1396--1404, 2015.

\bibitem[MRWZ16]{MahoneyRWZ16}
Michael~W. Mahoney, Satish Rao, Di~Wang, and Peng Zhang.
\newblock Approximating the solution to mixed packing and covering lps in
  parallel o{\texttildelow}(epsilon{\^{}}\{-3\}) time.
\newblock In {\em 43rd International Colloquium on Automata, Languages, and
  Programming, {ICALP} 2016, July 11-15, 2016, Rome, Italy}, pages 52:1--52:14,
  2016.

\bibitem[MSZ16]{MarasevicSZ16}
Jelena Marasevic, Clifford Stein, and Gil Zussman.
\newblock A fast distributed stateless algorithm for alpha-fair packing
  problems.
\newblock In {\em 43rd International Colloquium on Automata, Languages, and
  Programming, {ICALP} 2016, July 11-15, 2016, Rome, Italy}, pages 54:1--54:15,
  2016.

\bibitem[NN94]{NesterovN94}
Yurii Nesterov and Arkadi Nemirovski.
\newblock {\em Interior-Point Polynomial Algorithms in Convex Programming}.
\newblock Society for Industrial and Applied Mathematics, 1994.

\bibitem[PTZ16]{PengTZ16}
Richard Peng, Kanat Tangwongsan, and Peng Zhang.
\newblock Faster and simpler width-independent parallel algorithms for positive
  semidefinite programming.
\newblock {\em CoRR}, abs/1201.5135, 2016.

\bibitem[RH17]{RigolletH17}
Philippe Rigollet and Jan-Christian H\"utter.
\newblock {\em High-Dimensional Statistics}.
\newblock 2017.

\bibitem[SCV17]{steinhardt2017resilience}
Jacob Steinhardt, Moses Charikar, and Gregory Valiant.
\newblock Resilience: A criterion for learning in the presence of arbitrary
  outliers.
\newblock {\em arXiv preprint arXiv:1703.04940}, 2017.

\bibitem[Ste18]{steinhardt2018robust}
Jacob Steinhardt.
\newblock {\em Robust Learning: Information Theory and Algorithms}.
\newblock PhD thesis, Stanford University, 2018.

\bibitem[Tuk75]{tukey1975mathematics}
John~W Tukey.
\newblock Mathematics and the picturing of data.
\newblock In {\em Proceedings of the International Congress of Mathematicians,
  Vancouver, 1975}, volume~2, pages 523--531, 1975.

\bibitem[Ver16]{Vershynin16}
Roman Vershynin.
\newblock {\em High-Dimensional Probability, An Introduction with Applications
  in Data Science}.
\newblock 2016.

\bibitem[WK06]{WarmuthK06}
Manfred~K. Warmuth and Dima Kuzmin.
\newblock Randomized {PCA} algorithms with regret bounds that are logarithmic
  in the dimension.
\newblock In {\em Advances in Neural Information Processing Systems 19,
  Proceedings of the Twentieth Annual Conference on Neural Information
  Processing Systems, Vancouver, British Columbia, Canada, December 4-7, 2006},
  pages 1481--1488, 2006.

\end{thebibliography}
	\newpage
	\begin{appendix}

\section{Concentration}
\label{app:concentration}

\subsection{Sub-Gaussian concentration}

We use the following concentration facts on sub-Gaussian distributions following from standard techniques, and give an application bounding Schatten-norm deviations. 
\begin{restatable}{lemma}{restatesubgconc}\label{lem:subg_conc}
	Under Assumption~\ref{assume:corruption}, there are universal constants $C_1$, $C_2$ such that
	\[\Pr\Brack{\sup_{\substack{v \in \R^d \\ \norm{v}_2 = 1}} \left|v^\top \Par{\frac{1}{n}\sum_{i \in G'} X_i X_i^\top - \covar}v\right| - tv^\top\covar v > 0} \le \exp\Par{C_1 d - C_2 n\min(t, t^2)}.\]
\end{restatable}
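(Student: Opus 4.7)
The plan is to prove this uniform deviation bound by combining a standard $\tfrac14$-net argument with a Bernstein-type tail inequality for squares of sub-Gaussian random variables, handling the scale-dependence on $v^\top\covar v$ via a change of basis. Assuming first that $\covar$ is invertible, I would set $\mn \defeq \covar^{-1/2}\Par{\tfrac{1}{n}\sum_{i \in G'}X_iX_i^\top - \covar}\covar^{-1/2}$ and observe that under the substitution $w = \covar^{1/2}v$ we have $v^\top\covar v = \norm{w}_2^2$, so the event inside the probability reduces to $\norm{\mn}_\infty > t$. If $\covar$ is singular, the proxy bound $\mprox \preceq c\covar$ forces $v^\top X_i = 0$ almost surely on the kernel of $\covar$, so the supremum is attained on the range of $\covar$ and the same reduction applies there.

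For a fixed unit vector $w$ and the corresponding $v = \covar^{-1/2}w$, we have $v^\top\covar v = 1$ and, by \eqref{eq:subg_def} together with $\mprox \preceq c\covar$, each $v^\top X_i$ is mean-zero and $c$-sub-Gaussian. Standard sub-Gaussian calculus then gives that $Y_i \defeq (v^\top X_i)^2 - 1$ is centered sub-exponential with norm bounded by a universal multiple of $c$, so Bernstein's inequality for i.i.d.\ sub-exponential variables yields
\[\Pr\Brack{\left|w^\top \mn w\right| > t} \;=\; \Pr\Brack{\left|\tfrac{1}{n}\sum_{i \in G'}Y_i\right| > t} \;\le\; 2\exp\Par{-C_2' \cdot n\min(t,t^2)}\]
for a universal constant $C_2'$ depending only on $c$.

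To upgrade this direction-by-direction bound to a uniform statement, I would take a $\tfrac14$-net $\net$ of the unit sphere in $\R^d$ of cardinality at most $9^d$ and invoke the classical estimate $\norm{\mn}_\infty \le 2\sup_{w \in \net}\left|w^\top \mn w\right|$. A union bound over $\net$ then gives
\[\Pr\Brack{\norm{\mn}_\infty > 2t} \;\le\; 2 \cdot 9^d \exp\Par{-C_2' \cdot n\min(t,t^2)},\]
which after rescaling $t$ by a factor of $2$ and absorbing the $9^d$ factor into the exponent takes the required form $\exp(C_1 d - C_2 n\min(t,t^2))$.

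The calculations themselves are routine; the only real subtlety is verifying that Bernstein's constants do not depend on the direction $v$, which is precisely what the change of basis accomplishes, since after normalization the sub-exponential norm of $(v^\top X_i)^2/(v^\top\covar v)$ depends only on the absolute constant $c$ from Assumption~\ref{assume:corruption}. The main obstacle is making this uniformity argument rigorous when $\covar$ has a nontrivial kernel, which is handled by restricting the supremum to the range of $\covar$ via the proxy bound $\mprox \preceq c\covar$.
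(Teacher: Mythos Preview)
Your proposal is correct and follows essentially the same route as the paper: both arguments whiten by $\covar^{-1/2}$ to reduce to bounding $\norm{\mn}_\infty$ for the normalized matrix, apply Bernstein's inequality to the sub-exponential variables $(v^\top X_i)^2 - 1$ in each fixed direction, and then pass to the full sphere via a $\tfrac14$-net and the standard doubling bound. The only cosmetic differences are that the paper uses the net bound $12^d$ rather than $9^d$ and applies Bernstein at level $t/2$ before doubling, whereas you apply it at level $t$ and rescale afterward; your explicit handling of the kernel of $\covar$ via $\mprox\preceq c\covar$ is a nice addition the paper omits.
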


\begin{proof}
	By observing \eqref{eq:subg_def}, it is clear that the random vector $\tX = \covar^{-\half} X$ for $X \sim \dist$ has covariance $\id$ and sub-Gaussian proxy $c\id$. For any fixed unit vector $u$, by Lemma 1.12 of \cite{RigolletH17}, the random variable $(u^\top \tX)^2 - 1$ is sub-exponential with parameter $16c$, so by Bernstein's inequality (Theorem 1.13, \cite{RigolletH17}), defining $\tX_i = \covar^{-\half} X_i$ for each $X_i \sim \dist$,
	\[\Pr\Brack{\left|u^\top \Par{\frac{1}{n}\sum_{i \in G'} \tX_i \tX_i^\top - \id}u\right| > \frac{t}{2}} \le \exp\Par{-\frac{n}{2^{11}c^2}\min(t, t^2)}.\]
	For shorthand define $\mm \defeq \tfrac{1}{n}\sum_{i \in G'} \tX_i \tX_i^\top$, and let $\net$ be a maximal $\tfrac{1}{4}$-net of the unit ball (as measured in $\ell_2$ distance). By Lemma 1.18 of \cite{RigolletH17}, $|\net| \le 12^d$, so by a union bound,
	\[\Pr\Brack{\sup_{u \in \mathcal{N}}\left|u^\top (\mm - \id) u\right| > \frac{t}{2}} \le \exp\Par{3d -\frac{n}{2^{11}c^2}\min(t, t^2)}.\]
	Next, by a standard application of the triangle inequality (see e.g.\ Exercise 4.3.3, \cite{Vershynin16})
	\[
	\sup_{\substack{v \in \R^d \\ \norm{v}_2 = 1}} \left|v^\top (\mm - \id)v\right| \le 2\sup_{u \in \net} \left|u^\top (\mm - \id) u\right| \le t
	\]
	with probability at least $1 - \exp\Par{C_1 d - C_2 n\min(t, t^2)}$ for appropriate $C_1$, $C_2$. The conclusion follows since its statement is scale invariant, so it suffices to show as we have that
	\[\Pr\Brack{\sup_{\substack{v \in \R^d \\ \norm{v}_{\covar} = 1}} \left|v^\top \Par{\frac{1}{n}\sum_{i \in G'} X_i X_i^\top - \covar}v\right| - tv^\top\covar v > 0} \le \exp\Par{C_1 d - C_2 n\min(t, t^2)}.\]
\end{proof}

\begin{restatable}{corollary}{restatelpconc}\label{corr:lpconc}
	Let $p \ge 2$. Under Assumption~\ref{assume:corruption}, there are universal constants $C_1$, $C_2$ with
	\[\Pr\Brack{\norm{\frac{1}{n}\sum_{i \in G'} X_i X_i^\top - \covar}_p > t\norm{\covar}_p} \le \exp\Par{C_1 d - C_2 n\min(t, t^2)}.\]
\end{restatable}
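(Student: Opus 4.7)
The plan is to derive the Schatten-$p$ bound entirely from Lemma~\ref{lem:subg_conc} via a deterministic norm conversion; no additional probabilistic argument (or union bound over $p$) is needed, so the same failure probability is inherited verbatim.

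First I would invoke Lemma~\ref{lem:subg_conc}: with probability at least $1-\exp(C_1 d - C_2 n\min(t,t^2))$, setting $\mathbf{E} \defeq \tfrac{1}{n}\sum_{i\in G'} X_i X_i^\top - \covar$, the inequality $|v^\top \mathbf{E} v|\le t\,v^\top \covar v$ holds for every unit $v\in\R^d$, and hence (by homogeneity) for every $v\in\R^d$. This is exactly the Loewner sandwich $-t\covar\preceq \mathbf{E}\preceq t\covar$. We may assume $\covar\succ 0$ without loss of generality, since $\dist$ is supported in the range of $\covar$ almost surely and the entire argument can be carried out on that subspace (or by a routine $\covar+\delta\id$ limit as $\delta\to 0$). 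Under this reduction, the matrix $\mathbf{X}\defeq \covar^{-1/2}\mathbf{E}\covar^{-1/2}$ has spectrum inside $[-t,t]$, so $\norm{\mathbf{X}}_\infty \le t$.

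Next I would apply Hölder's inequality for Schatten norms (namely $\norm{AB}_r\le \norm{A}_a\norm{B}_b$ whenever $\tfrac{1}{r}=\tfrac{1}{a}+\tfrac{1}{b}$) twice, first with the splitting $\tfrac{1}{p}=\tfrac{1}{2p}+\tfrac{1}{2p}$ and then with $\tfrac{1}{2p}=\tfrac{1}{2p}+\tfrac{1}{\infty}$, to obtain
\[
\norm{\mathbf{E}}_p \;=\; \norm{\covar^{1/2}\,\mathbf{X}\,\covar^{1/2}}_p \;\le\; \norm{\covar^{1/2}}_{2p}\,\norm{\mathbf{X}\,\covar^{1/2}}_{2p} \;\le\; \norm{\covar^{1/2}}_{2p}^{\,2}\,\norm{\mathbf{X}}_\infty.
\]
Since the eigenvalues of $\covar^{1/2}$ are the square roots of those of $\covar$, the identity $\norm{\covar^{1/2}}_{2p}^{\,2p}=\sum_j \sigma_j^p=\norm{\covar}_p^{\,p}$ yields $\norm{\covar^{1/2}}_{2p}^{\,2}=\norm{\covar}_p$. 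Combining with $\norm{\mathbf{X}}_\infty\le t$ gives $\norm{\mathbf{E}}_p\le t\norm{\covar}_p$ on the high-probability event, which is precisely the desired inequality.

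I do not anticipate any real obstacle here: all probabilistic content is already packaged in Lemma~\ref{lem:subg_conc}, and the remaining step is a standard two-sided Schatten-Hölder bound under a relative-spectral hypothesis. The only mild subtlety is well-definedness of $\covar^{-1/2}$, which is dispatched by the restriction-to-range reduction noted above; an alternative but strictly weaker route using $\norm{\mathbf{E}}_p\le d^{1/p}\norm{\mathbf{E}}_\infty$ does not produce the correct $\norm{\covar}_p$ dependence (for instance when $\covar$ is low rank), which is exactly why the change of variables by $\covar^{\pm 1/2}$ is needed.
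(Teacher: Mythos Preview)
Your argument is correct. Both your proof and the paper's reduce everything to the high-probability event of Lemma~\ref{lem:subg_conc}, i.e.\ the Loewner sandwich $-t\covar\preceq \mathbf{E}\preceq t\covar$, and then perform a purely deterministic norm conversion; the difference lies only in that conversion step.

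The paper eigendecomposes $\mathbf{E}=\sum_j \lambda_j v_j v_j^\top$ and, using Fact~\ref{fact:qnormcert}, writes $\norm{\mathbf{E}}_p=\sum_j \tfrac{|\lambda_j|^{p-1}}{\norm{\mathbf{E}}_p^{p-1}}|v_j^\top \mathbf{E} v_j|$, applies the per-direction bound $|v_j^\top \mathbf{E} v_j|\le t\, v_j^\top\covar v_j$, and then recognizes the resulting expression as the inner product of $\covar$ with a unit-$\ell_q$ matrix, which is at most $\norm{\covar}_p$ again by Fact~\ref{fact:qnormcert}. Your route instead performs the change of variables $\mathbf{X}=\covar^{-1/2}\mathbf{E}\covar^{-1/2}$ (so that the sandwich becomes $\norm{\mathbf{X}}_\infty\le t$) and applies Schatten H\"older twice to obtain $\norm{\mathbf{E}}_p\le \norm{\covar^{1/2}}_{2p}^2\norm{\mathbf{X}}_\infty = \norm{\covar}_p \cdot t$. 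Your approach invokes an external (standard) inequality and requires a brief remark about restricting to the range of $\covar$ to make $\covar^{-1/2}$ meaningful; the paper's approach stays entirely within the dual-norm machinery already set up in Fact~\ref{fact:qnormcert} and sidesteps invertibility altogether since it only evaluates quadratic forms $v_j^\top\covar v_j$. Otherwise the two are equivalent in strength and constants.
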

\begin{proof}
	Suppose the event in Lemma~\ref{lem:subg_conc} does not hold, which happens with probability at least $1 - \exp(C_1 d - C_2 n \min(t, t^2))$. Define for shorthand $\mm \defeq \tfrac{1}{n}\sum_{i \in G'} X_i X_i^\top - \covar$ and let its spectral decomposition be $\sum_{j \in [d]} \lam_j v_j v_j^\top$. By the triangle inequality and Fact~\ref{fact:qnormcert},
	\begin{align*}
	\norm{\mm}_p &\le \sum_{j \in [d]} \frac{|\lam_j|^{p - 1}}{\norm{\mm}_p^{p - 1}} \left|v_j^\top \Par{\frac{1}{n}\sum_{i \in G'} X_i X_i^\top - \covar} v_j\right| \\
	&\le t\sum_{j \in [d]} \frac{|\lam_j|^{p - 1}}{\norm{\mm}_p^{p - 1}} v_j^\top \covar v_j = t\inprod{\sum_{j \in [d]} \frac{|\lam_j|^{p - 1}}{\norm{\mm}_p^{p - 1}} v_j v_j^\top}{\covar} \le t\norm{\covar}_p.
	\end{align*}
	In the last inequality, we used that $\sum_{j \in [d]} \tfrac{|\lam_j|^{p - 1}}{\norm{\mm}_p^{p - 1}} v_j v_j^\top$ has unit $\ell_q$ norm, and applied Fact~\ref{fact:qnormcert}.
\end{proof}

\subsection{Concentration under weightings in $\fS_\eps^n$}

We consider concentration of the empirical covariance under weightings which are not far from uniform, in spectral and Schatten senses.

	\begin{lemma}\label{lem:pnormbound_fs}
	Under Assumption~\ref{assume:corruption}, let $\delta \in [0, 1]$, $p \ge 2$, and $n = \Omega\Par{\tfrac{d + \log \delta^{-1}}{(\eps \log \eps^{-1})^2}}$ for a sufficiently large constant. Then for a universal constant $C_3$,
	\[\Pr\Brack{\exists w \in \fS_\eps^n \; \Bigg|\; \norm{\sum_{i \in G'} w_i X_i X_i^\top - \covar}_p > C_3 \cdot \eps\log\eps^{-1} \norm{\covar}_p} \le \frac{\delta}{2}.\]
\end{lemma}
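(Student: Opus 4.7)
The plan is to reduce the Schatten-$p$ bound to a uniform two-sided spectral bound and then dualize via Fact~\ref{fact:qnormcert}. Specifically, I would first show that with probability at least $1 - \delta/2$, for every unit $v \in \R^d$ and every $w \in \fS_\eps^n$,
\[\left|v^\top \Par{\sum_{i \in G'} w_i X_i X_i^\top - \covar} v\right| \le C_3 \eps \log\eps^{-1} \cdot v^\top \covar v.\]
Once this uniform spectral statement holds, for any symmetric $\mn$ with $\norm{\mn}_q \le 1$ and eigendecomposition $\mn = \sum_j \mu_j u_j u_j^\top$, applying the pointwise spectral bound in each direction $u_j$ and summing yields $\left|\inprod{\mn}{\sum_{i \in G'} w_i X_i X_i^\top - \covar}\right| \le C_3 \eps\log\eps^{-1}\inprod{|\mn|}{\covar} \le C_3 \eps\log\eps^{-1}\norm{\covar}_p$ (using $\norm{|\mn|}_q = \norm{\mn}_q$ and Fact~\ref{fact:qnormcert}); taking the sup over $\mn$ closes the Schatten claim.

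For the uniform spectral bound, the plan is to follow the $12^d$-sized $\tfrac14$-net argument of Lemma~\ref{lem:subg_conc}, reducing to a statement at a fixed unit $v$ with an $\exp(O(d))$ union-bound loss. Fixing $v$ and setting $Y_i = v^\top X_i$ (a scalar sub-Gaussian with $\E Y_i^2 = \sigma_v^2 \defeq v^\top\covar v$ and proxy $O(\sigma_v^2)$), I would observe that the extreme points of $\fS_\eps^n$ are uniform distributions on size-$m$ subsets $S \subseteq G'$ for $m \defeq \lceil (1-\eps)n\rceil$, so the sup over $w$ is attained at some such extreme point. Decomposing
\[\tfrac{1}{m}\sum_{i \in S} Y_i^2 - \sigma_v^2 = \tfrac{n}{m}\Par{\tfrac{1}{n}\sum_{i \in G'} Y_i^2 - \sigma_v^2} + \tfrac{\eps}{1-\eps}\sigma_v^2 - \tfrac{1}{m}\sum_{i \in G'\setminus S} Y_i^2\]
then isolates a bulk and a tails contribution to bound separately.

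The bulk $\tfrac{1}{n}\sum_{i \in G'} Y_i^2 - \sigma_v^2$ is controlled by sub-exponential Bernstein, giving $O(\eps\log\eps^{-1})\sigma_v^2$ with failure $\exp(-cn(\eps\log\eps^{-1})^2)$, which absorbs the $12^d$ net factor under the lemma's hypothesis $n = \Omega((d+\log\delta^{-1})/(\eps\log\eps^{-1})^2)$. The tails term is $\sup_S \tfrac{1}{m}\sum_{i \in G'\setminus S}Y_i^2 = \tfrac{1}{m}\sum_{\text{top } \eps n} Y_i^2$, which I would bound by $O(\eps\log\eps^{-1})\sigma_v^2$ as follows: for each fixed $T \subseteq G'$ with $|T| = \eps n$, sub-exponential Bernstein on $\sum_{i \in T} Y_i^2$ gives tail probability $\exp(-c\eps n\log\eps^{-1})$, and after the $\log\binom{n}{\eps n} = O(\eps n\log\eps^{-1})$ subset union bound plus the net's $12^d$ factor, the combined failure is within $\delta/4$ under the same sample hypothesis.

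The main obstacle is achieving the sharp rate $\eps\log\eps^{-1}$ rather than the weaker $\sqrt{\eps\log\eps^{-1}}$. A direct application of Corollary~\ref{corr:lpconc} to each extreme-point subset combined with a $\binom{n}{\eps n}$ union bound would demand $nt^2 \gtrsim \eps n\log\eps^{-1}$ to absorb the combinatorial factor, forcing $t \gtrsim \sqrt{\eps\log\eps^{-1}}$. The key maneuver is to separate the bulk concentration of $\tfrac{1}{n}\sum Y_i^2$ (which enjoys the favorable rate $\exp(-cnt^2)$ and tolerates the small target $t \asymp \eps\log\eps^{-1}$) from the reweighting-induced tails term (a top-$\eps n$ order-statistic term bounded independently at size $O(\eps\log\eps^{-1})\sigma_v^2$), rather than collapsing them into a single union bound.
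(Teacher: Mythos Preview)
Your proposal is correct and follows essentially the same approach as the paper: reduce to extreme points of $\fS_\eps^n$, decompose the extreme-point covariance into a bulk term over all of $G'$ plus a tail term over the $\eps n$ excluded points, apply sub-exponential Bernstein to each, and union bound over the $\binom{n}{\eps n}$ subsets for the tail. The only organizational difference is that the paper applies Corollary~\ref{corr:lpconc} (which already encodes the net-and-dualize step) directly at the matrix/Schatten level, whereas you work per-direction with an explicit net and dualize via Fact~\ref{fact:qnormcert} at the end --- this is the same argument unpacked, and in fact proves the spectral statement of Corollary~\ref{corr:infnormbound_fs} along the way.
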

\begin{proof}
	Because the vertices of $\fS_\eps^n$ are uniform over sets $S \subseteq G'$ with $|S| = (1 - \eps)n$ (see e.g.\ Section 4.1, \cite{DiakonikolasKKLMS19}), by convexity of the Schatten-$p$ norm it suffices to prove
	\[\Pr\Brack{\exists S \text{ with } |S| = (1 - \eps)n \; \Bigg|\; \norm{\frac{1}{(1 - \eps)n}\sum_{i \in S} X_i X_i^\top - \covar}_p > C_3 \cdot \eps\log\eps^{-1} \norm{\covar}_p} \le \frac{\delta}{4}.\]
	For any fixed $S$, and recalling $|S^c| = \eps n$, we can decompose this sum as
	\begin{equation}\label{eq:good_all_bad}\frac{1}{(1 - \eps)n}\sum_{i \in S} X_i X_i^\top = \frac{1}{1 - \eps} \Par{\frac{1}{n}\sum_{i \in G'} X_i X_i^\top} - \frac{\eps}{1 - \eps}\Par{\frac{1}{|S^c|}\sum_{i \in S^c}X_i X_i^\top}.\end{equation}
	By applying Corollary~\ref{corr:lpconc}, it follows that by setting $t = \tfrac{1 - \eps}{2} \cdot \eps\log\eps^{-1}$ and our choice of $n$ that 
	\begin{equation}\label{eq:bigbound}\Pr\Brack{\norm{\frac{1}{n}\sum_{i \in G'} X_i X_i^\top - \covar}_p > \frac{1 - \eps}{2} \cdot \eps\log\eps^{-1}\norm{\covar}_p} \le \frac{\delta}{4}.\end{equation}
	Moreover, for any fixed $S^c$, setting $t = \tfrac{1 - \eps}{2} \cdot C_3 \log\eps^{-1}$ where $C_3$ is a sufficiently large constant, so that for sufficiently small $\eps$, $t = \min(t, t^2)$,
	\begin{equation}\label{eq:eachscbound}\begin{aligned}\Pr\Brack{\norm{\frac{1}{\eps n}\sum_{i \in S^c} X_i X_i^\top - \covar}_p > \frac{1 - \eps}{2} \cdot C_3 \cdot \log\eps^{-1}\norm{\covar}_p} &\le \exp\Par{C_1 d - C_2 \eps n t} \\
	&\le \exp\Par{-\Par{\log \delta^{-1} + n\eps\log\eps^{-1}}} \\
	&\le \frac{\delta}{4\binom{n}{\eps n}}.\end{aligned}\end{equation}
	Here, we used that $\log \binom{n}{\eps n} = O\Par{n\eps \log \eps^{-1}}$. Finally, union bounding over all possible sets $S^c$ imply that with probability at least $1 - \tfrac{\delta}{2}$, the following events hold:
	\begin{align*}\norm{\frac{1}{n}\sum_{i \in G'} X_i X_i^\top - \covar}_p < \frac{1 - \eps}{2} \cdot \eps\log\eps^{-1}\norm{\covar}_p,\\ \norm{\frac{1}{|S^c|}\sum_{i \in S^c} X_i X_i^\top - \covar}_p < \frac{1 - \eps}{2} \cdot C_3 \cdot \log\eps^{-1}\norm{\covar}_p \text{ for all } S \text{ with } |S| = (1 - \eps) n.\end{align*}
	Combining these bounds in the context of \eqref{eq:good_all_bad} after applying the triangle inequality, we have with probability at least $1 - \tfrac{\delta}{2}$ for all $S$ the desired conclusion,
	\[\norm{\frac{1}{(1 - \eps)n}\sum_{i \in S} X_i X_i^\top - \covar}_p < C_3 \cdot \eps\log\eps^{-1} \norm{\covar}_p.\]
\end{proof}

\begin{corollary}\label{corr:infnormbound_fs}
Under Assumption~\ref{assume:corruption}, let $n = \Omega\Par{\tfrac{d + \log\delta^{-1}}{(\eps\log\eps^{-1})^2}}$ for a sufficiently large constant. For universal $C_3$ and all $w \in \fS_\eps^n$, with probability at least $1 - \tfrac{\delta}{2}$,
	\[C_3 \cdot \eps\log\eps^{-1} \covar\succeq \sum_{i \in G'} w_i X_i X_i^\top - \covar \succeq - C_3 \cdot \eps\log\eps^{-1} \covar.\]
\end{corollary}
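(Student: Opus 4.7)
The plan is to mirror the strategy of Lemma~\ref{lem:pnormbound_fs}, replacing its invocation of Corollary~\ref{corr:lpconc} (Schatten-$p$ norm concentration) with the spectral concentration of Lemma~\ref{lem:subg_conc}, which already yields relative deviation bounds in the Loewner ordering of exactly the form we need. The Loewner pair in the corollary is equivalent to the single condition
\[(1 - C_3 \eps \log \eps^{-1})\, v^\top \covar v \;\le\; v^\top \Par{\sum_{i \in G'} w_i X_i X_i^\top} v \;\le\; (1 + C_3 \eps \log \eps^{-1})\, v^\top \covar v\]
holding for every $v \in \R^d$, uniformly over all $w \in \fS_\eps^n$.

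First, I would reduce to vertices of $\fS_\eps^n$. Since $w \mapsto \sum_{i \in G'} w_i X_i X_i^\top - \covar$ is linear in $w$ and both Loewner inequalities are preserved under convex combinations, it suffices to establish the claim at each extreme point of $\fS_\eps^n$, i.e.\ at uniform distributions over sets $S \subseteq G'$ of size $(1-\eps)n$ (as used in the proof of Lemma~\ref{lem:pnormbound_fs}). For any such $S$, I would apply the algebraic decomposition \eqref{eq:good_all_bad}, rewriting the restricted empirical covariance as a scaled difference of the full empirical covariance over $G'$ and the empirical covariance over the $\eps n$-sized complement $S^c \subseteq G'$.

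Next, I would invoke Lemma~\ref{lem:subg_conc} twice. For the full empirical covariance over $G'$, choose $t = \Theta(\eps \log \eps^{-1})$; the sample-complexity hypothesis $n = \Omega((d + \log \delta^{-1})(\eps \log \eps^{-1})^{-2})$ makes $n \min(t,t^2) \gg d + \log \delta^{-1}$, so this event holds with probability at least $1 - \tfrac{\delta}{4}$. For each fixed complement $S^c$ of size $\eps n$, invoke the lemma with $t = \Theta(\log \eps^{-1})$; the per-set failure probability is then $\exp(C_1 d - C_2 \eps n \log \eps^{-1})$, which is at most $\tfrac{\delta}{4\binom{n}{\eps n}}$ by the same computation as \eqref{eq:eachscbound}, using $\log \binom{n}{\eps n} = O(n \eps \log \eps^{-1})$ together with the sample complexity. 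A union bound over the $\binom{n}{\eps n}$ complements then keeps the total failure probability below $\tfrac{\delta}{2}$.

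The main (mild) obstacle is assembling the two deviation bounds into the claimed Loewner inequality via \eqref{eq:good_all_bad}. Each of the two high-probability events provides a two-sided Loewner bound of the form $-\alpha\, \covar \preceq \mm - \covar \preceq \alpha\, \covar$, with $\alpha = \Theta(\eps \log \eps^{-1})$ for the full-sample term and $\alpha = \Theta(\log \eps^{-1})$ for the small-sample term. Scaling these by the coefficients $\tfrac{1}{1-\eps}$ and $\tfrac{\eps}{1-\eps}$ respectively and combining signs as dictated by \eqref{eq:good_all_bad} preserves the Loewner form, yielding a final constant $C_3$ that is an absolute multiple of the one from Lemma~\ref{lem:subg_conc}. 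Since this holds at every vertex simultaneously on the high-probability event, convexity extends the bound to all $w \in \fS_\eps^n$, completing the proof.
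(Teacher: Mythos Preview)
Your proposal is correct and follows essentially the same approach as the paper: both proofs replace the Schatten-$p$ concentration of Corollary~\ref{corr:lpconc} by the spectral concentration of Lemma~\ref{lem:subg_conc}, reduce to vertices of $\fS_\eps^n$, apply the decomposition~\eqref{eq:good_all_bad}, and union bound over all $\binom{n}{\eps n}$ complements exactly as in \eqref{eq:bigbound}--\eqref{eq:eachscbound}. The paper's proof is terser (it simply cites ``similar arguments as in \eqref{eq:bigbound}, \eqref{eq:eachscbound}'' and Lemma~\ref{lem:subg_conc}), but the underlying argument is identical to what you spell out.
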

\begin{proof}
	Consider any unit vector $v \in \R^d$. By similar arguments as in \eqref{eq:bigbound}, \eqref{eq:eachscbound}, and applying a union bound over all $S$ with $|S| = (1 - \eps)n$, with probability at least $1 - \tfrac{\delta}{2}$, it follows from Lemma~\ref{lem:subg_conc} that
	\begin{align}\left|v^\top \Par{\frac{1}{n}\sum_{i \in G'} X_i X_i^\top - \covar} v\right| < \frac{1 - \eps}{2} \cdot \eps\log\eps^{-1}v^\top \covar v, \label{eq:conc-per-vector-1} \\
	\left|v^\top\Par{\frac{1}{|S^c|}\sum_{i \in S^c}X_i X_i^\top - \covar}v\right| < \frac{1 - \eps}{2} \cdot C_3 \cdot \log\eps^{-1} v^\top\covar v \; . \label{eq:conc-per-vector-2}\end{align}
	Therefore, again using the formula \eqref{eq:good_all_bad} and the triangle inequality yields the desired conclusion for all directions $v$, which is equivalent to the spectral bound of the lemma statement.
\end{proof} 	%

\section{Deferred proofs from Section~\ref{sec:filtering}}
\label{app:filtering}

\subsection{Robust univariate variance estimation}

In this section, we prove Lemma~\ref{lem:univariate}, which allows us to robustly estimate the quadratic form of a vector in the covariance of a sub-Gaussian distribution from corrupted samples. Algorithm~\ref{alg:univariate} is folklore, and intuitively very simple; it projects all samples onto $u$, throws away the $2\epsilon$ fraction of points with largest magnitude in this direction, and takes the mean of the remaining set. 

\begin{algorithm}\caption{Univariate variance estimation: $\OneD(\{X_i\}_{i \in [n]}, \eps, u)$} 
	\begin{algorithmic}\label{alg:univariate}
		\STATE \textbf{Input:} $\{X_i\}_{i \in [n]}$, $\eps > 0$, and a unit vector $u$
		\STATE Let $a_i = \left\langle X_i, u \right\rangle^2$ for $i = 1, \ldots, n$ 
		\STATE Sort the $a_i$ in increasing order. WLOG assume $a_1 \leq a_2 \leq \ldots \leq a_n$.
		\RETURN $\sigma_u^2 = \frac{1}{(1 - 2 \epsilon) n} \sum_{i = 1}^{(1 - 2 \eps) n} a_i$
	\end{algorithmic}
\end{algorithm}

We require the following helper fact.
\begin{fact}
	\label{fact:subexp-conditioning}
	Let $Z$ be a sub-exponential random variable with parameter at most $\lambda$\footnote{We say mean-zero $Z$ is sub-exponential with parameter $\lam$ if $\forall |s| \le \lam^{-1}$, $\E[\exp(sZ)] \le \exp(\tfrac{s^2\lam^2}{2})$.}, and let $\eps \in [0, 1]$. Then, for any event $E$ with $\Pr [Z \in E] \leq \epsilon$, 
	$| \E \left[Z \cdot \one[Z \in E] \right]|  \leq 2 \lambda \eps \log\eps^{-1}$.
\end{fact}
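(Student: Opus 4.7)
\textbf{Proof plan for Fact~\ref{fact:subexp-conditioning}.} The plan is to bound $|\E[Z \cdot \one[Z \in E]]|$ by the truncated absolute moment $\E[|Z| \cdot \one[Z \in E]]$, and then to control the latter by splitting $Z$ at a threshold that balances the two constraints available: that $\Pr[Z \in E] \le \eps$, and that $|Z|$ has an exponential upper tail.

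First, I would recall the standard Chernoff-type tail bound derived from the sub-exponential MGF assumption: there is an absolute constant $c$ (say, $c = 2$ suffices under the definition $\E[\exp(sZ)] \le \exp(s^2 \lambda^2/2)$ for $|s| \le \lambda^{-1}$) such that
\[
\Pr[|Z| > t] \;\le\; 2 \exp\!\left(-\tfrac{t}{c\lambda}\right) \qquad \text{for all } t \ge 0,
\]
obtained by optimizing the Chernoff bound at $s = \lambda^{-1}$ (and absorbing the constant from the Gaussian regime $t \le \lambda$ into the factor of $2$).

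Second, for an arbitrary threshold $M > 0$ to be chosen, I would write
\[
\E[|Z| \cdot \one[Z \in E]] \;\le\; M \cdot \Pr[Z \in E] \;+\; \E[|Z| \cdot \one[|Z| > M]] \;\le\; M\eps \;+\; \int_M^\infty \Pr[|Z| > t]\, dt,
\]
using the layer-cake representation on the second term. Plugging in the tail bound from the previous step gives an upper bound of $M\eps + 2c\lambda \exp(-M/(c\lambda))$. The natural choice is to equalize the two terms (up to constants) by setting $M = c\lambda \log \eps^{-1}$, yielding a bound of the form $c\lambda \eps \log \eps^{-1} + 2c\lambda\eps$, which for any $\eps$ bounded away from $1$ is at most $2\lambda \eps \log \eps^{-1}$ after re-absorbing constants (and restricting to the regime where the statement is nontrivial, e.g.\ $\eps$ sufficiently small so that $\log\eps^{-1}$ dominates the additive $O(1)$ term).

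The proof is essentially routine once the threshold trick is in place; the only mild subtlety is ensuring that the constants work out to $2$ as stated, which may require either a tighter constant in the sub-exponential tail bound (exploiting the precise MGF inequality near $s = \lambda^{-1}$) or interpreting the inequality up to a universal constant, which is all that is used downstream in Lemma~\ref{lem:univariate}. Thus I do not expect any real obstacle; the main modeling choice is the threshold $M$, and it is forced (up to constants) by the requirement to match the $\eps \log \eps^{-1}$ rate.
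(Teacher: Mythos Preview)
Your threshold-and-tail-integral argument is correct and yields the claimed $\lambda\eps\log\eps^{-1}$ rate (up to the same constant fuzziness you already flag). The paper, however, takes a different one-line route: it applies H\"older's inequality,
\[
|\E[Z\cdot\one[Z\in E]]| \le \E[|Z|^p]^{1/p}\cdot \Pr[Z\in E]^{1/q} \le 2\lambda p \cdot \eps^{1/q},
\]
invoking the standard sub-exponential moment bound $\E[|Z|^p]^{1/p}\le 2\lambda p$, and then sets $p=\log\eps^{-1}$ so that $\eps^{1/q}=\eps^{1-1/p}=O(\eps)$. Your approach trades the moment bound for the tail bound and the H\"older step for a layer-cake computation; the two are essentially equivalent reformulations of the same underlying sub-exponential control, and neither requires anything the other doesn't. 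The paper's version is shorter to write once the moment bound is quotable, while yours is more self-contained and makes the threshold $M=\Theta(\lambda\log\eps^{-1})$ explicit, which can be helpful pedagogically.
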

\begin{proof} 
We have by H\"older's inequality that for any $p, q \ge 1$ with $p^{-1} + q^{-1} = 1$,
\[\left| \E \left[ Z \cdot \one[Z \in E] \right] \right| \leq \E [|Z|^p]^{1/p} \cdot \epsilon^{1/q} \leq 2 \lambda p \cdot \epsilon^{1/q}.\]
The second inequality is Lemma 1.10~\cite{RigolletH17}. Setting $p = \log \eps^{-1}$ yields the result.
\end{proof}
\univariate*
\begin{proof}
The runtime claim is immediate; we now turn our attention to correctness. We follow notation of Assumption~\ref{assume:corruption}, and in a slight abuse of notation, also define $a_i = \inprod{X_i}{u}^2$ for $i \in G'$. First, for $X \sim \dist$, then $\inprod{u}{X}^2 - u^\top \covar u$ is sub-exponential with parameter at most $16c u^\top \covar u$ (Lemma 1.12, \cite{RigolletH17}). 	By Bernstein's inequality, we have that if $X \sim \dist$, then for all $t \geq 1$,
\begin{equation}
\label{eq:subexp-def}
\Pr \left[ \inprod{X}{u}^2 > 32c t u^\top \covar u \right] \leq \exp (-t) \; .
\end{equation}
Using this in a standard Chernoff bound, we have that with probability $1 - \tfrac{\delta}{2}$,
\begin{equation}
\label{eq:subexp-condition}
\frac{\left| \{i \in G': a_i > 64 c \log \eps^{-1} \cdot u^\top \covar u \} \right|}{n} \leq \epsilon \; .	 	
\end{equation}
Let $T = 64 c \log \eps^{-1} \cdot u^\top \covar u$, and let $Y$ be distributed as $(\inprod{u}{X}^2 - u^\top \covar u) \cdot \one[\inprod{u}{X}^2 \leq T]$, where $X \sim \dist$.	We observe $Y - \E[Y]$ is also sub-exponential with parameter $16cu^\top\covar u$, and that by Fact~\ref{fact:subexp-conditioning},
\begin{equation}\label{eq:yexpectbound}|\E[Y]| \le 32cu^\top\covar u\eps\log\eps^{-1}.\end{equation}
Define the interval $I = [0, T]$ and let $S$ be the set of points in $[n]$ that survive the truncation procedure, so that $\sigma^2_u = \tfrac{1}{|S|} \sum_{i \in S} a_i$. Given event~\eqref{eq:subexp-condition}, $a_i \in I$ for all $i \in S$, since there are at most $\eps n$ points in $G$ outside $I$, and $|B| \leq \eps n$.	We decompose the deviation as follows:
\begin{equation}\label{eq:totaldeviation}
	\begin{aligned}
	\sum_{i \in S} a_i - |S| u^\top \covar u &= \sum_{i \in G \cap S} (a_i - u^\top \covar u) + \sum_{i \in B \cap S} (a_i - u^\top \covar u) \\
	&= \sum_{i \in G' \cap I} (a_i - u^\top \covar u) + \sum_{i \in B \cap S} (a_i - u^\top \covar u) \\
	&- \sum_{i \in (G' \setminus G) \cap I} (a_i - u^\top \covar u) - \sum_{i \in (G \cap I) \setminus S} (a_i - u^\top \covar u).
	\end{aligned}
\end{equation}
Here we overloaded $i \in I$ to mean that $a_i$ lies in the interval $I$, and conditioned on $S$ lying entirely in $I$.
	We bound each of these terms individually.
	First, for all $i \in G' \cap I$, conditioning on \eqref{eq:subexp-condition} (i.e.\ all $a_i \in I$),  $a_i- u^\top \covar u$ is an independent sample from $Y$. Thus, by \eqref{eq:yexpectbound} and Bernstein's inequality,
\begin{equation}\label{eq:subexp-term-1}
	\begin{aligned}
	\left| \frac{1}{|G' \cap I|} \sum_{i \in G' \cap I} (a_i - u^\top \covar u) \right| &\leq \left| \frac{1}{|G' \cap I|} \sum_{i \in G' \cap I} (a_i - u^\top \covar u) - \E[Y] \right| + 32c u^\top \covar u \eps \log\eps^{-1} \\
	&\leq 64 c \cdot u^\top \covar u \epsilon \log \eps^{-1},
	\end{aligned}
\end{equation}
with (conditional) probability at least $1 - \tfrac{\delta}{2}$. By a union bound, both events occur with probability at least $1 - \delta$; condition on this for the remainder of the proof. Under this assumption, we control the other three terms of \eqref{eq:totaldeviation}. Observe that $|B \cap S| \leq \eps n$, $|(G' \setminus G) \cap I| \leq \eps n$, and $|(G\cap I) \setminus S| \leq \eps n$. Further, by definition of $I$, every summand is at most $64c\log\eps^{-1} \cdot u^\top\covar u$. Thus,
	\begin{align}
	\left| \sum_{i \in B \cap S} (a_i - u^\top \covar u) \right| &\leq 64c\eps n\log\eps^{-1}\cdot u^\top\covar u, \label{eq:subexp-term-2}\\
	\left| \sum_{i \in (G' \setminus G) \cap I} (a_i - u^\top \covar u) \right| &\leq 64c\eps n\log\eps^{-1}\cdot u^\top\covar u, \label{eq:subexp-term-3}\\
		\left| \sum_{i \in (G' \cap I) \setminus S} (a_i - u^\top \covar u) \right| &\leq 64c\eps n\log\eps^{-1}\cdot u^\top\covar u. \label{eq:subexp-term-4}
	\end{align}
	Combining~\eqref{eq:subexp-term-1},~\eqref{eq:subexp-term-2},~\eqref{eq:subexp-term-3}, and~\eqref{eq:subexp-term-4} in derivation \eqref{eq:totaldeviation} and dividing by $|S|$ yields the claim.
\end{proof}

Finally, we also give an alternative set of conditions under which we can certify correctness of $\OneD$. Specifically, this assumption will be useful in lifting indpendence assumptions between $u$ and our samples $\{X_i\}_{i \in [n]}$ in repeated calls within Algorithm~\ref{alg:filter}.

\begin{assumption}\label{assume:poly-cond}
Under Assumption~\ref{assume:corruption}, let the following conditions hold for universal constant $C_4$:
\begin{align}
C_4 \epsilon \log \eps^{-1} \cdot \covar &\succeq \frac{1}{n} \sum_{i \in G'} X_i X_i^\top - \covar \succeq -C_4 \epsilon \log \eps^{-1} \cdot \covar,
\label{eq:poly-cond-2} \\
C_4 \log \eps^{-1} \cdot \covar &\succeq \sum_{i \in G'} w_i \Par{X_i X_i^\top - \covar}\succeq -C_4 \log \eps^{-1} \cdot \covar \; \mbox{for all $w \in \fS_{1 - \eps}^n$}.\label{eq:poly-cond-3}
\end{align}
\end{assumption}
Note that \eqref{eq:poly-cond-3} is a factor $\eps$ weaker in its guarantee than Corollary~\ref{corr:infnormbound_fs}, and is over weights in a different set $\fS_{1 - \eps}^n$. Standard sub-Gaussian concentration (i.e.\ an unweighted variant of Corollary~\ref{corr:infnormbound_fs}) and modifying the proof of Corollary~\ref{corr:infnormbound_fs} to take the constraint set $\fS_{1 - \eps}^n$ and normalizing over vertex sets of size $\eps n$ yield the following conclusion.
\begin{lemma}\label{lem:poly-cond-holds}
Let $n = \Omega\Par{\tfrac{d + \log\delta^{-1}}{(\eps\log\eps^{-1})^2}}$ for a sufficiently large constant. Assumption~\ref{assume:poly-cond} holds with probability at least $1 - \tfrac{\delta}{2}$.
\end{lemma}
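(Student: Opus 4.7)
\textbf{Proof plan for Lemma~\ref{lem:poly-cond-holds}.} The plan is to establish the two conditions \eqref{eq:poly-cond-2} and \eqref{eq:poly-cond-3} separately, each with failure probability at most $\delta/4$, and then union bound. Both are obtained by invoking Lemma~\ref{lem:subg_conc}, with the sample complexity $n = \Omega((d + \log \delta^{-1})/(\eps\log\eps^{-1})^2)$ handling the different scales of $t$ and number of sample-average events to union-bound.

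For \eqref{eq:poly-cond-2}, I would apply Lemma~\ref{lem:subg_conc} directly to the clean empirical average $\tfrac1n \sum_{i \in G'} X_iX_i^\top - \covar$, setting $t = C \cdot \eps \log \eps^{-1}$ for a sufficiently large constant $C$. Since $t \le 1$, we have $\min(t,t^2) = t^2 \asymp (\eps\log\eps^{-1})^2$, so the failure probability is at most $\exp(C_1 d - C_2 n(\eps\log\eps^{-1})^2)$. Under the assumed sample complexity with a large enough implicit constant, this is at most $\delta/4$, and the spectral statement follows since Lemma~\ref{lem:subg_conc} is already formulated as a uniform bound over unit directions in the $\covar$-seminorm.

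For \eqref{eq:poly-cond-3}, the quantity $v^\top (\sum_{i \in G'} w_i X_iX_i^\top - \covar) v$ is linear in $w$ for each fixed $v$, so the supremum (or infimum) of this quantity over $v$ is convex (or concave) in $w$. Hence it suffices to bound over the extreme points of $\fS_{1-\eps}^n$, which (mirroring Section~4.1 of~\cite{DiakonikolasKKLMS19}) are exactly the uniform distributions over subsets $S \subseteq G'$ of size $\eps n$. For any such vertex, $\sum_{i \in G'} w_i X_iX_i^\top = \tfrac{1}{\eps n} \sum_{i \in S} X_iX_i^\top$ is the empirical covariance of $\eps n$ i.i.d.\ clean samples, so Lemma~\ref{lem:subg_conc} (with sample count $\eps n$ in place of $n$) applies. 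Setting $t = C_4 \log \eps^{-1}$ with $C_4$ a sufficiently large constant, so that $\min(t,t^2) = t$, gives a per-vertex failure probability of $\exp(C_1 d - C_2 C_4 \eps n \log \eps^{-1})$.

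Finally, union-bounding over the $\binom{n}{\eps n} \le \exp(O(n \eps \log \eps^{-1}))$ vertices, the combined failure probability is $\exp(C_1 d - \Omega(\eps n \log \eps^{-1}))$ once $C_4$ is large enough to dominate the entropy of the binomial coefficient. Under the sample complexity, $\eps n (\log \eps^{-1})^2 \gtrsim d + \log \delta^{-1}$, and since $\log \eps^{-1} \ge 1$ for small $\eps$, this implies $\eps n \log \eps^{-1} \gtrsim d + \log \delta^{-1}$, making the combined bound at most $\delta/4$. The main subtlety, compared with Corollary~\ref{corr:infnormbound_fs}, is that the vertices of $\fS_{1-\eps}^n$ are averages over only $\eps n$ points rather than $(1-\eps) n$; this forces the weaker $\log \eps^{-1}$ (rather than $\eps\log\eps^{-1}$) deviation rate, but at the same time the entropy of the vertex set is exactly of the order $\eps n \log \eps^{-1}$, so the union bound still closes with room to spare. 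A final union bound on the two events yields the statement.
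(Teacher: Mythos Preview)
Your proposal is correct and matches the paper's own (very terse) argument essentially line-for-line: the paper simply states that \eqref{eq:poly-cond-2} follows from standard sub-Gaussian concentration (Lemma~\ref{lem:subg_conc}), and that \eqref{eq:poly-cond-3} follows by rerunning the proof of Corollary~\ref{corr:infnormbound_fs} over the constraint set $\fS_{1-\eps}^n$, whose vertices are uniform over size-$\eps n$ subsets. Your handling of the parameter trade-off --- that the smaller sample count $\eps n$ forces the weaker $\log\eps^{-1}$ deviation but is exactly compensated by the $\binom{n}{\eps n}$ union bound --- is precisely the point the paper alludes to.
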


We give a variant of Lemma~\ref{lem:univariate} with slightly stronger guarantees for $\OneD$; specifically, it holds for all $u$ simultaneously for a fixed set of samples satisfying Assumption~\ref{assume:poly-cond}.

\begin{corollary}\label{corr:oned-independence}
Under Assumption~\ref{assume:poly-cond}, Algorithm~\ref{alg:univariate} outputs $\sigma_u^2$ with $|u^\top \covar u - \sigma_u^2| < Cu^\top\covar u \cdot \eps\log\eps^{-1}$, for $C$ a fixed multiple of the parameter $c$ in Assumption~\ref{assume:corruption}, and runs in time $O(nd + n\log n)$.
\end{corollary}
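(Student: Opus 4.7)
The plan is to revisit the proof of Lemma~\ref{lem:univariate} and replace each of the two probabilistic concentration steps with a deterministic bound drawn from Assumption~\ref{assume:poly-cond}. Because the resulting argument will use only the events~\eqref{eq:poly-cond-2} and~\eqref{eq:poly-cond-3} (and no further randomness over $\{X_i\}_{i \in [n]}$ depending on $u$), the conclusion will hold for every unit vector $u$ simultaneously on the same sample. The runtime is immediate from Algorithm~\ref{alg:univariate}: computing $\{a_i\}_{i \in [n]}$ is $O(nd)$ and sorting is $O(n \log n)$, so I will focus on correctness.

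First, I will replace the sub-Gaussian Chernoff bound~\eqref{eq:subexp-condition}. Set $T = (1 + 2C_4 \log\eps^{-1}) u^\top \covar u$ and $I = [0, T]$. If strictly more than $\eps n$ indices $i \in G'$ had $a_i > T$, then placing weight $\tfrac{1}{\eps n}$ on $\eps n$ such indices gives a $w \in \fS_{1-\eps}^n$ for which $u^\top \bigl(\sum_{i \in G'} w_i X_i X_i^\top\bigr) u > T$, contradicting~\eqref{eq:poly-cond-3}. Hence at most $\eps n$ good points lie outside $I$; combined with $|B| \le \eps n$, the truncation in $\OneD$ removes every index outside $I$, so $a_i \le T$ for all $i \in S$. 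This is precisely the structural consequence that~\eqref{eq:subexp-condition} supplied in the proof of Lemma~\ref{lem:univariate}.

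Next, I will replace the Bernstein bound~\eqref{eq:subexp-term-1} by applying~\eqref{eq:poly-cond-2} to obtain directly
\[
\Bigl| \tfrac{1}{n} \sum_{i \in G'} a_i - u^\top \covar u \Bigr| \;\le\; C_4\, \eps \log \eps^{-1} \cdot u^\top \covar u.
\]
Plugging this and Step~1 into the decomposition~\eqref{eq:totaldeviation}, the remaining three correction terms $\sum_{i \in B \cap S}$, $\sum_{i \in (G' \setminus G) \cap I}$, $\sum_{i \in (G \cap I) \setminus S}$ each range over at most $\eps n$ indices. For the two sums whose $a_i$ are guaranteed to lie in $I$, each summand is at most $T = O(\log\eps^{-1}) u^\top \covar u$, so the sum is $O(\eps n \log \eps^{-1}) u^\top \covar u$. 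For $\sum_{i \in G' \setminus I} a_i$, placing uniform weight $\tfrac{1}{\eps n}$ on the top $\eps n$ values in $G'$ yields another extreme point of $\fS_{1-\eps}^n$, and~\eqref{eq:poly-cond-3} bounds this sum by $(1 + C_4 \log \eps^{-1}) \eps n \cdot u^\top \covar u = O(\eps n \log\eps^{-1}) u^\top \covar u$. Summing and dividing by $|S| = (1 - 2\eps) n$ produces the claimed $O(\eps \log \eps^{-1})$ multiplicative bound.

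The only delicate point is the passage in Step~2 that handles $\sum_{i \in G' \setminus I} a_i$: a spectral bound on the empirical mean alone (as in~\eqref{eq:poly-cond-2}) is insufficient, since a few large $a_i$'s could dominate the tail. What makes the argument work is that~\eqref{eq:poly-cond-3} is stated for \emph{all} $w \in \fS_{1-\eps}^n$, so applied with uniform weight on the top $\eps n$ indices it directly controls this tail. The slightly weaker $\log\eps^{-1}$ constant in~\eqref{eq:poly-cond-3} (versus the $\eps \log\eps^{-1}$ of~\eqref{eq:poly-cond-2}) is harmless, as this tail already comes with an intrinsic $\eps$ factor from having only $\eps n$ summands. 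Finally, because the entire argument uses only the deterministic events of Assumption~\ref{assume:poly-cond}, which by Lemma~\ref{lem:poly-cond-holds} hold with probability $1 - \tfrac{\delta}{2}$ on the sample, the guarantee transfers uniformly to every unit vector $u$—exactly what is needed to justify reusing samples across the iterations of $\PCAFilter$.
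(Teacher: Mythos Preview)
Your proof is correct and follows essentially the same approach as the paper's: both arguments replace the two probabilistic steps in Lemma~\ref{lem:univariate} (the Chernoff bound~\eqref{eq:subexp-condition} and the Bernstein bound~\eqref{eq:subexp-term-1}) with deterministic applications of~\eqref{eq:poly-cond-3} and~\eqref{eq:poly-cond-2} respectively, then bound the remaining correction terms via the established threshold. The only organizational difference is that the paper packages the combination of~\eqref{eq:poly-cond-2} and~\eqref{eq:poly-cond-3} into an intermediate bound~\eqref{eq:poly-cond-4} over $\fS_{3\eps}^n$ and defines the threshold data-dependently from the filtered good points, whereas you fix $T$ a priori and handle the $G'\setminus I$ tail explicitly; these are equivalent rearrangements of the same argument.
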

\begin{proof}
We discuss how to modify the derivations from Lemma~\ref{lem:univariate} appropriately in the absence of applications of Bernstein's inequality. First, note that appropriately combining \eqref{eq:poly-cond-2} and \eqref{eq:poly-cond-3} in a derivation such as \eqref{eq:good_all_bad} yields the following bound (deterministically under Assumption~\ref{assume:poly-cond}):
\begin{equation}\label{eq:poly-cond-4}
C_4 \epsilon \log \eps^{-1} \cdot \covar \succeq \sum_{i \in G'} w_i \Par{X_i X_i^\top - \covar}\succeq -C_4 \epsilon \log \eps^{-1} \covar \; \mbox{for all $w \in \fS_{3\eps}^n$} .
\end{equation}
Now, consider the decomposition \eqref{eq:totaldeviation}. We claim first that similarly to \eqref{eq:subexp-term-2}, \eqref{eq:subexp-term-3}, \eqref{eq:subexp-term-4} we can bound each summand in the latter three terms by $O(u^\top\covar u \log\eps^{-1})$; to prove this, it suffices to show that at least one filtered $a_i$ attains this bound, as then by definition of the algorithm, each non-filtered $a_i$ will as well. Note that a fraction between $\eps$ and $2\eps$ of points in $G \subset G'$ is filtered (since there are only $\eps n$ points from $B$). The assumption \eqref{eq:poly-cond-3} then implies precisely the desired bound on some filtered $a_i$ by placing uniform mass on filtered points from $G$, and applying pigeonhole. So, all non-filtered $a_i$ are bounded by $O(u^\top \covar u \log\eps^{-1})$, yielding analogous statements to \eqref{eq:subexp-term-2}, \eqref{eq:subexp-term-3}, \eqref{eq:subexp-term-4}.

Finally, an analogous derivation to \eqref{eq:subexp-term-1} follows via an application of the bound \eqref{eq:poly-cond-4}, where we place uniform mass on the set $G' \cap I$ and adjust constants appropriately, since the above argument shows that under the assumption \eqref{eq:poly-cond-3}, we have that at most $2\eps n$ indices $i \in G'$ have $a_i \not\in I$.
\end{proof}

\subsection{Preliminaries}

For convenience, we give the following preliminaries before embarking on our proof of Theorem~\ref{thm:poly-final} and giving guarantees on Algorithm~\ref{alg:filter}. First, we state a set of assumptions which augments Assumption~\ref{assume:poly-cond} with one additional condition, used in bounding the iteration count of our algorithm.

\begin{assumption}\label{assume:total-filter}
Under Assumption~\ref{assume:corruption}, let Assumption~\ref{assume:poly-cond} hold, as well as the following additional condition for the same universal constant $C_4$:
\begin{equation}
\norm{X_i}_2^2 \leq C_4\log\frac{n}{\delta} \cdot \Tr(\covar) \; \mbox{for all $i \in G$}. \label{eq:poly-cond-1}
\end{equation}
\end{assumption}

Standard sub-Gaussian concentration inequalities and a union bound, combined with our earlier claim Lemma~\ref{lem:poly-cond-holds}, then yield the following guarantee.
\begin{lemma}\label{lem:total-filter-holds}
Let $n = \Omega\Par{\tfrac{d + \log\delta^{-1}}{(\eps\log\eps^{-1})^2}}$ for a sufficiently large constant. Assumption~\ref{assume:total-filter} holds with probability at least $1 - \delta$.
\end{lemma}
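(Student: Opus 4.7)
The plan is to decompose Assumption~\ref{assume:total-filter} into its two constituent requirements and handle each with an independent failure budget of $\delta/2$. First, I would invoke Lemma~\ref{lem:poly-cond-holds} directly: for $n = \Omega\Par{\tfrac{d + \log\delta^{-1}}{(\eps\log\eps^{-1})^2}}$ with a sufficiently large hidden constant, Assumption~\ref{assume:poly-cond} (i.e.\ the two spectral bounds \eqref{eq:poly-cond-2} and \eqref{eq:poly-cond-3}) holds with probability at least $1 - \tfrac{\delta}{2}$. Nothing further needs to be proved for this part.

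The remaining task is to verify the norm bound \eqref{eq:poly-cond-1}, namely $\norm{X_i}_2^2 \le C_4 \log(n/\delta) \cdot \Tr(\covar)$ for all $i \in G$. For this I would work with the original uncorrupted samples $\{X_i\}_{i \in G'}$ (recall $G \subseteq G'$ and $|G'| = n$), since the adversary cannot increase the norm of points it did not touch. Since $\dist$ has sub-Gaussian proxy $\mprox \preceq c\covar$, the whitened vector $\tX = \covar^{-1/2} X$ for $X \sim \dist$ satisfies $\E[\tX \tX^\top] = \id$ and is sub-Gaussian with proxy $c\id$. A standard Hanson-Wright/Bernstein-type bound for quadratic forms of sub-Gaussian vectors (see e.g.\ Theorem 6.2.1 of \cite{Vershynin16}, or equivalently Lemma 1.12 combined with Bernstein in the style already used in the proof of Lemma~\ref{lem:univariate}) yields, for some universal constant $C'$ and all $t \ge 1$,
\[
\Pr\Brack{\norm{X}_2^2 > C' t \cdot \Tr(\covar)} \le \exp(-t).
\]
Setting $t = \log(2n/\delta)$ and union bounding over the $n$ indices in $G'$ gives $\norm{X_i}_2^2 \le C' \log(2n/\delta) \cdot \Tr(\covar)$ simultaneously for all $i \in G'$ with probability at least $1 - \tfrac{\delta}{2}$. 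Adjusting $C_4$ to absorb the constant $C'$ and the factor of $2$ inside the logarithm (both are harmless), this event implies \eqref{eq:poly-cond-1}.

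The final step is a union bound: with probability at least $1 - \delta$ both Assumption~\ref{assume:poly-cond} and \eqref{eq:poly-cond-1} hold, i.e.\ Assumption~\ref{assume:total-filter} holds, after possibly enlarging $C_4$ to be the maximum of the constants supplied by the two parts. There is no real obstacle here; the only mildly delicate point is ensuring the same constant $C_4$ governs all three bounds in Assumption~\ref{assume:total-filter}, which is resolved by taking the maximum. Since the norm concentration depends only on $\dist$ being sub-Gaussian with proxy $O(\covar)$ (Assumption~\ref{assume:corruption}), no additional sample-complexity conditions beyond those already needed for Lemma~\ref{lem:poly-cond-holds} are required.
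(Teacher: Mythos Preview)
Your proposal is correct and mirrors the paper's own argument exactly: the paper simply states that standard sub-Gaussian concentration plus a union bound handles the norm condition \eqref{eq:poly-cond-1}, and Lemma~\ref{lem:poly-cond-holds} handles Assumption~\ref{assume:poly-cond}. Your write-up is in fact more detailed than what the paper provides.
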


\subsection{Analysis of $\PCAFilter$}
For this section, for any nonnegative weights $w$, define $\mm(w) \defeq \sum_{i \in [n]} w_i X_i X_i^\top$. We now state our algorithm, $\PCAFilter$. At all iterations $t$, it maintains a current nonnegative weight vector $w^{(t)}$ (initialized to be the uniform distribution on $[n]$), preserving the following invariants for all $t$:
\begin{align}
w_i^{(t - 1)} \ge w_i^{(t)} \text{ for all } i \in [n],\; \sum_{i \in B} w^{(t - 1)}_i - w^{(t)}_i \geq \sum_{i \in G} w^{(t - 1)}_i - w^{(t)}_i.\label{eq:poly-inv-2}
\end{align}

We now state our method as Algorithm~\ref{alg:filter}; note that the update to $w^{(t)}$ is of the form in Lemma~\ref{lem:uni-filter}.

\begin{algorithm}[H]
	\caption{$\PCAFilter(\{X_i\}_{i \in [n]}, \eps)$}
	\begin{algorithmic}[1]\label{alg:filter}
		\STATE Remove all points $i \in [n]$ with $\norm{X_i}_2^2 > c \log(\tfrac{n}{\delta}) \cdot \Tr (\covar) $
		\STATE $w^{(0)}_i \gets \tfrac{1}{n}$ for all $i \in [n]$, $t \gets 1$
		\STATE $u_1 \gets$ approximate top eigenvector of $\mm(w^{(0)})$
		\STATE $\sigma_1^2 \gets \OneD(\{X_i\}_{i \in [n]}, \eps, u_1)$
		\WHILE{$u_t^\top \mm(w^{(t - 1)}) u_t > (1 + 5C_5 \eps \log \eps^{-1}) \sigma_t^2$, where $C_5 = \max(C, C_4)$ from constants in Assumption~\ref{assume:poly-cond}, Corollary~\ref{corr:oned-independence}}
		\STATE $a_i \gets \inprod{u_t}{X_i}^2$ for all $i \in [n]$
		\STATE Sort (permute) the indices $[n]$ so the $a_i$ are in increasing order (with $a_1$ smallest, $a_n$ largest)
		\STATE Let $\ell$ be the largest index with $\sum_{i = \ell}^n w_i \geq 2 \epsilon$
		\STATE Define
		\[
		w_i^{(t)} \gets \begin{cases} \Par{1 - \frac{a_i}{a_n}} w_i^{(t - 1)}& \ell \le i \le n \\ w_i^{(t - 1)} & i < \ell\end{cases}
		\]
		\STATE $u_{t} \gets $ approximate top eigenvector of $\mm(w^{(t)})$
		\STATE $\sigma_{t}^2 \gets \OneD(\{X_i\}_{i \in [n]}, u_{t}, \eps)$
		\STATE $t \gets t + 1$
		\ENDWHILE
		\RETURN{$u_t$}
	\end{algorithmic}
\end{algorithm}

We assume that in Line 8, we also have $\sum_{i = \ell}^n w_i \le 3\eps$, as we can assume at least one point is corrupted i.e.\ $\eps \ge \tfrac{1}{n}$ (else standard algorithms suffice for our setting), so adding an additional $w_i$ can only change the sum by $\eps$. We first prove invariants \eqref{eq:poly-inv-2} are preserved; at a high level, we simply demonstrate that Lemma~\ref{lem:uni-filter} holds via concentration on $G$ and lack of termination.

\begin{lemma}
	\label{lem:inv-holds}
	Under Assumption~\ref{assume:poly-cond}, for any iteration $t$ of Algorithm~\ref{alg:filter}, suppose~\eqref{eq:poly-inv-2} held for all iterations $t' \leq t - 1$. Then, \eqref{eq:poly-inv-2} holds at iteration $t$.
\end{lemma}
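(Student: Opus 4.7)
The first invariant in \eqref{eq:poly-inv-2} is immediate: the update rule multiplies each $w_i^{(t-1)}$ by $(1 - a_i/a_n) \in [0,1]$ for $i \in S := [\ell,n]$ and leaves other weights unchanged. For the second invariant, the downweighting step has exactly the form required by Lemma~\ref{lem:uni-filter} applied to the sub-problem indexed by $S$, with partition $I_B = B \cap S$, $I_G = G \cap S$. So the entire task reduces to verifying the single inequality
\[
\sum_{i \in B \cap S} w_i^{(t-1)} a_i \;>\; \sum_{i \in G \cap S} w_i^{(t-1)} a_i, \qquad a_i := \inprod{u_t}{X_i}^2.
\]
I will establish this by bounding each side and exploiting the fact that the while-loop continues (so $\sum_i w_i^{(t-1)} a_i$ exceeds $u_t^\top \covar u_t$ by a multiplicative $\Omega(\eps\log\eps^{-1})$ factor).

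The inductive invariants on $w^{(t-1)}$ imply $w_i^{(t-1)} \le 1/n$ entrywise, with at most $\eps$ mass removed from each of $G$ and $B$; thus $w(G) \ge 1-2\eps$, $w(B) \le \eps$, and by the choice of $\ell$ together with $w_i \le 1/n \le \eps$, we have $w(S) \in [2\eps,3\eps]$ and hence $w(G\cap S) \ge w(S)-w(B) \ge \eps$. Two concentration bounds then control good mass:
\begin{itemize}
\item Normalizing $w^{(t-1)}|_G$ yields a vector in $\fS_{2\eps}^n \subset \fS_{3\eps}^n$, so by \eqref{eq:poly-cond-4} evaluated along $u_t$,
$\sum_{i \in G} w_i^{(t-1)} a_i \le (1 + C_4 \eps\log\eps^{-1}) u_t^\top \covar u_t$.
\item Normalizing $w^{(t-1)}|_{G\cap S}$ gives $\ell_\infty$-norm at most $(1/n)/w(G\cap S) \le 1/(n\eps)$, hence it lies in $\fS_{1-\eps}^n$; using \eqref{eq:poly-cond-3} and multiplying back by $w(G\cap S) \le 3\eps$,
$\sum_{i \in G\cap S} w_i^{(t-1)} a_i \le 3C_4 \eps\log\eps^{-1} \cdot u_t^\top \covar u_t$.
\end{itemize}
The total mass is lower-bounded by the termination condition and Corollary~\ref{corr:oned-independence}: $\sum_i w_i^{(t-1)} a_i > (1+5C_5\eps\log\eps^{-1})\sigma_t^2 \ge (1+5C_5\eps\log\eps^{-1})(1-C_5\eps\log\eps^{-1}) u_t^\top \covar u_t$, which for sufficiently small $\eps\log\eps^{-1}$ is at least $(1+4C_5\eps\log\eps^{-1}) u_t^\top \covar u_t$. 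Subtracting the bound on good mass, $\sum_{i \in B} w_i^{(t-1)} a_i \ge (4C_5 - C_4)\eps\log\eps^{-1} u_t^\top \covar u_t \ge 3C_5 \eps\log\eps^{-1} u_t^\top \covar u_t$ since $C_5 \ge C_4$.

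The main obstacle, and the step requiring the most care, is converting the lower bound on all bad mass to a lower bound on bad mass inside $S$; we need to bound $\sum_{i \in B\cap S^c} w_i^{(t-1)} a_i$. Since $a_i \le a_\ell$ on $S^c$, the argument goes through a pointwise bound on $a_\ell$ obtained by a Markov step on good mass: because $w(G\cap S) \ge \eps$ and every $i \in S$ has $a_i \ge a_\ell$, the second bullet above gives $\eps a_\ell \le \sum_{i\in G\cap S} w_i a_i \le 3C_4\eps\log\eps^{-1} u_t^\top \covar u_t$, so $a_\ell \le 3C_4 \log\eps^{-1} u_t^\top \covar u_t$, and consequently $\sum_{i\in B\cap S^c} w_i^{(t-1)} a_i \le w(B)\cdot a_\ell \le 3C_4 \eps\log\eps^{-1} u_t^\top \covar u_t$. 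Combining everything,
\[
\sum_{i \in B\cap S} w_i^{(t-1)} a_i \;\ge\; (3C_5 - 3C_4)\eps\log\eps^{-1} u_t^\top \covar u_t \;>\; 3C_4\eps\log\eps^{-1} u_t^\top \covar u_t \;\ge\; \sum_{i \in G\cap S} w_i^{(t-1)} a_i,
\]
where the strict middle inequality holds provided the constants satisfy $C_5 > 2C_4$, which is ensured by our choice $C_5 = \max(C, C_4)$ and the freedom to rescale $C$ in Corollary~\ref{corr:oned-independence} (equivalently, the constant $5$ in the while-loop test can be enlarged). This verifies the hypothesis of Lemma~\ref{lem:uni-filter} and yields the second invariant, completing the induction.
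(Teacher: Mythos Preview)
Your proof is correct and follows essentially the same approach as the paper's: reduce to the hypothesis of Lemma~\ref{lem:uni-filter}, upper-bound good mass in $S$ via \eqref{eq:poly-cond-3}, lower-bound total bad mass from the non-termination condition together with Corollary~\ref{corr:oned-independence} and \eqref{eq:poly-cond-4}, then upper-bound bad mass outside $S$ using a pointwise bound on $a_\ell$. The only differences are bookkeeping: you use $w(S)\le 3\eps$ where the paper (slightly sloppily) writes $\sum_{i\in I_G} w_i \le 2\eps$, and you derive the $a_\ell$ bound via a Markov step rather than directly via the normalized weights $\tilde w$, which costs a constant factor and forces the final requirement $C_5>2C_4$---as you note, this is handled by enlarging the constant in the while-loop test.
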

\begin{proof}
	The first part of \eqref{eq:poly-inv-2} is immediate by observing the update in Line 9, so we show the second. We drop subscripts and superscripts for conciseness and focus on a single iteration $t$. Let $I_B = \{\ell, \ldots, n\} \cap B$, and $I_G = \{\ell, \ldots, n\} \cap G$. By Lemma~\ref{lem:uni-filter}, it suffices to demonstrate that 
	\begin{equation}\label{eq:suffice_invariant}\sum_{i \in I_B} w_i a_i > \sum_{i \in I_G} w_i a_i.\end{equation}
	First, $\sum_{i \in I_B} w_i \leq \epsilon$, so by definition of index $\ell$, we have $\eps \le \sum_{i \in I_G} w_i \le 2\epsilon$. Define $\tw_i = \tfrac{w_i}{\sum_{i \in I_G} w_i}$ if $i \in I_G$, and $0$ otherwise, and observe $\widetilde{w} \in \fS_{1 - 2\eps}^n$.
	By modifying constants appropriately from~\eqref{eq:poly-cond-3}, it follows from definition of $a_i = u^\top X_iX_i^\top u$ that
	\begin{equation}
	\label{eq:ig}
	\sum_{i \in I_G} w_i a_i \leq \Par{\sum_{i \in I_G} w_i} \cdot C_4\log\eps^{-1} \cdot u^\top \covar u \leq 2C_4 \eps \log\eps^{-1} \cdot u^\top \covar u.
	\end{equation}
	On the other hand, by~\eqref{eq:poly-cond-4} we know that the total quadratic form over $G$ is bounded as
	\begin{equation}\label{eq:goodwa}
	\sum_{i \in G} w_i a_i < \Par{\sum_{i \in G} w_i} \Par{1 + C_4\eps\log\eps^{-1}} u^\top \covar u < \Par{1 + C_4\eps\log\eps^{-1}} u^\top \covar u.
	\end{equation}
	Here, we applied the observation that the normalized $w_i$ restricted to $G$ are in $\fS^n_{1 - 3\eps}$ (e.g.\ using Lemma~\ref{lem:vuniform} inductively).
	However, since we did not terminate (Line 5), we must have by $u_t$ being a top eigenvector and Corollary~\ref{corr:oned-independence} (we defer discussions of inexactness to Theorem~\ref{thm:poly-final}) that
	\begin{align*}
	\sum_{i \in [n]} w_i a_i \ge (1 + 5C_5\eps\log\eps^{-1}) \sigma_t^2 \ge (1 + 4C_4\eps\log\eps^{-1}) \cdot u^\top \covar u\\ \implies \sum_{i \in B} w_i a_i > 3C_4\eps \log\eps^{-1} \cdot u^\top \covar u.
	\end{align*}
	To obtain the last conclusion, we used \eqref{eq:goodwa}. Finally, note that for all $i \in B \setminus I_B$,
	\[
	a_i \le a_\ell \leq \sum_{i \in I_G} \widetilde{w}_i a_i \leq C_4\log\eps^{-1} \cdot u^\top \covar u
	\]
	by rearranging~\eqref{eq:ig}. This implies that
	\[
	\sum_{i \in B \setminus I_B} w_i a_i \leq \Par{\sum_{i \in B \setminus I_B} w_i} \cdot C_4\log\eps^{-1} \cdot u^\top \covar u \leq C_4\epsilon \log \eps^{-1} \cdot u^\top \covar u.
	\]
	Thus, the desired inequality \eqref{eq:suffice_invariant} follows from combining the above derivations, e.g.\ using \eqref{eq:ig} and
	\[
	\sum_{i \in I_B} w_i a_i = \sum_{i \in B} w_i a_i - \sum_{i \in B \setminus I_B} w_i a_i > 2C_4\eps\log\eps^{-1} \cdot u^\top \covar u.
	\]
\end{proof} 
Lemma~\ref{lem:inv-holds} yields for all $t$ that $\sum_{i \in B} w^{(0)}_i - w^{(t)}_i \geq \sum_{i \in G} w^{(0)}_i - w^{(t)}_i$ by telescoping. Note that we can only remove at most $2 \eps$ mass from $w$ total, as $\sum_{i \in B} w^{(0)}_i - w^{(t)}_i \leq \epsilon$. Denote for shorthand normalized weights $v^{(t)} \defeq \tfrac{w^{(t)}}{\norm{w^{(t)}}_1}$. Then, the following is immediate by $\norm{w^{(t)}}_1 \ge 1 - 2\eps$.
\begin{lemma}\label{lem:vuniform}
	Under Assumption~\ref{assume:poly-cond}, in all iterations $t$ of Algorithm~\ref{alg:filter}, $v^{(t)} \in \fS_{2\eps}^n$.
\end{lemma}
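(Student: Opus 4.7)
The plan is to combine two elementary observations: first, the weights never increase entrywise along the algorithm, so $w^{(t)}_i \leq w^{(0)}_i = 1/n$ for every $i \in [n]$; and second, the total mass removed is at most $2\eps$, so that $\norm{w^{(t)}}_1 \geq 1 - 2\eps$ (which is the bound the paper quotes immediately before stating the lemma). Together these two facts give the entrywise bound $v^{(t)}_i = w^{(t)}_i / \norm{w^{(t)}}_1 \leq (1/n)/(1 - 2\eps) = 1/(n(1 - 2\eps))$, and combined with $\norm{v^{(t)}}_1 = 1$ by definition of $v^{(t)}$, this is exactly the defining property of $\fS^n_{2\eps}$ in \eqref{eq:truncsimp}.

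For the first observation, I would just inspect Line 9 of Algorithm~\ref{alg:filter}: the updated weight is either $w^{(t-1)}_i$ (for $i < \ell$) or $(1 - a_i/a_n) w^{(t-1)}_i$ (for $\ell \leq i \leq n$), both bounded above by $w^{(t-1)}_i$ since $a_i \geq 0$ and $a_n \geq a_i$. Iterating from $t = 0$ yields $w^{(t)}_i \leq 1/n$.

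For the second observation, I would invoke Lemma~\ref{lem:inv-holds} (which is already proved earlier in the excerpt) inductively. That lemma establishes the invariant $\sum_{i \in B}(w^{(t-1)}_i - w^{(t)}_i) \geq \sum_{i \in G}(w^{(t-1)}_i - w^{(t)}_i)$ at every iteration. Telescoping over iterations, and using $\sum_{i \in B} w^{(0)}_i = |B|/n \leq \eps$, we obtain
\[
\sum_{i \in G} (w^{(0)}_i - w^{(t)}_i) \;\leq\; \sum_{i \in B}(w^{(0)}_i - w^{(t)}_i) \;\leq\; \sum_{i \in B} w^{(0)}_i \;\leq\; \eps,
\]
so the total removed mass across $[n] = G \cup B$ is at most $2\eps$, giving $\norm{w^{(t)}}_1 \geq 1 - 2\eps$ as claimed.

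There is essentially no obstacle here: the lemma is purely bookkeeping on top of the invariants just proved in Lemma~\ref{lem:inv-holds}, and the author explicitly flags this by saying the conclusion is ``immediate'' from the $\ell_1$ lower bound on $w^{(t)}$. The only subtlety worth noting is ensuring the inductive hypothesis of Lemma~\ref{lem:inv-holds} is available at every iteration, which it is precisely because Lemma~\ref{lem:inv-holds} is phrased as a step-by-step propagation of the invariant \eqref{eq:poly-inv-2}.
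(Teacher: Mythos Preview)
Your proposal is correct and follows essentially the same approach as the paper: the paper telescopes the invariant from Lemma~\ref{lem:inv-holds} to bound the total removed mass by $2\eps$, notes $\norm{w^{(t)}}_1 \ge 1 - 2\eps$, and declares the result immediate. You spell out the one ingredient the paper leaves implicit---that $w^{(t)}_i \le w^{(0)}_i = 1/n$ entrywise (which is the first half of invariant~\eqref{eq:poly-inv-2})---and combine it with the $\ell_1$ lower bound exactly as needed.
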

Using Lemma~\ref{lem:vuniform}, we show that the output has the desired quality of being a large eigenvector.

\begin{lemma}
	\label{lem:output-correct}
	Under Assumption~\ref{assume:poly-cond}, let the output of Algorithm~\ref{alg:filter} be $u_T$.  Then for a universal constant $C^\star$, $u_T^\top \covar u_T \geq (1 - C^\star \eps \log \eps^{-1}) \| \covar \|_\infty$.
\end{lemma}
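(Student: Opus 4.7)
The plan is to chain together three inequalities at the terminating iteration $T$: (i) the algorithmic termination condition, (ii) the univariate variance estimate from $\OneD$, and (iii) a Loewner lower bound on $\mm(w^{(T-1)})$ coming from concentration on the good points. Specifically, termination tells us that $u_T^\top \mm(w^{(T-1)}) u_T \le (1 + 5C_5 \eps \log \eps^{-1}) \sigma_T^2$, while Corollary~\ref{corr:oned-independence} gives $\sigma_T^2 \le (1 + C \eps \log \eps^{-1}) \, u_T^\top \covar u_T$. Multiplying these yields an upper bound of the form $u_T^\top \mm(w^{(T-1)}) u_T \le (1 + O(\eps \log \eps^{-1})) u_T^\top \covar u_T$.

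For the matching lower bound, I would let $v^\star$ denote a unit top eigenvector of $\covar$, so $v^{\star\top} \covar v^\star = \norm{\covar}_\infty$. Since $u_T$ is (approximately) a top eigenvector of $\mm(w^{(T-1)})$ extracted by power iteration, we have $u_T^\top \mm(w^{(T-1)}) u_T \ge (1 - O(\eps)) \cdot v^{\star\top} \mm(w^{(T-1)}) v^\star$ (inexactness from power iteration is absorbed into constants). Dropping the contribution of $B$ by PSDness, $v^{\star\top} \mm(w^{(T-1)}) v^\star \ge v^{\star\top} \mmg^{(T-1)} v^\star$, where $\mmg^{(T-1)} \defeq \sum_{i \in G} w_i^{(T-1)} X_i X_i^\top$.

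The key step is to argue $\mmg^{(T-1)} \succeq (1 - O(\eps \log \eps^{-1})) \covar$ in the Loewner order. By Lemma~\ref{lem:inv-holds} and the observation that $\sum_{i \in B} w_i^{(0)} = \eps$, the cumulative mass removed from $G$ is at most $\eps$, so $\sum_{i \in G} w_i^{(T-1)} \ge 1 - 2\eps$. Since each $w_i^{(t)} \le 1/n$ monotonically, the normalized restriction $\hat w \defeq w^{(T-1)}|_G / \sum_{i \in G} w_i^{(T-1)}$ lies in $\fS_{2\eps}^n$ (viewed as supported on $G'$ by padding with zeros on $G' \setminus G$). Applying Corollary~\ref{corr:infnormbound_fs} (or equivalently combining \eqref{eq:poly-cond-2} and \eqref{eq:poly-cond-3} of Assumption~\ref{assume:poly-cond} as in \eqref{eq:poly-cond-4}) gives $\sum_{i \in G'} \hat w_i X_i X_i^\top \succeq (1 - O(\eps \log \eps^{-1})) \covar$, and rescaling by $\sum_{i \in G} w_i^{(T-1)} \ge 1 - 2\eps$ gives the stated Loewner bound on $\mmg^{(T-1)}$.

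Combining the two chains, $u_T^\top \covar u_T \ge (1 - O(\eps \log \eps^{-1})) \norm{\covar}_\infty$, which yields the claim for $C^\star$ a suitable multiple of $C_5$ (hence of the sub-Gaussian constant $c$ in Assumption~\ref{assume:corruption}). The only real delicacy is bookkeeping of the multiplicative factors and confirming that the approximation in power iteration used to define $u_T$ can be implemented to $1 \pm O(\eps)$ accuracy without affecting the final guarantee, which follows from standard Rayleigh quotient bounds; there is no substantive obstacle beyond this.
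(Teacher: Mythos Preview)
Your proposal is correct and follows essentially the same argument as the paper: upper-bound $u_T^\top \mm(w) u_T$ via the termination condition together with Corollary~\ref{corr:oned-independence}, lower-bound it via $\mm(w) \succeq \mmg \succeq (1 - O(\eps\log\eps^{-1}))\covar$ coming from \eqref{eq:poly-cond-4} and the top-eigenvector property, then combine. The only cosmetic differences are your explicit introduction of $v^\star$ where the paper invokes Courant--Fischer directly, and the index $T-1$ versus $T$ on the weights (an artifact of the algorithm's indexing); neither affects the argument.
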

\begin{proof}
We assume for now that $u_T$ is an exact top eigenvector, and discuss inexactness while proving Theorem~\ref{thm:poly-final}. By~\eqref{eq:poly-cond-4} and Lemma~\ref{lem:vuniform}, as then the normalized restriction of $w^{(T)}$ to $G$ is in $\fS_{3\eps}^n$, 
\begin{align*}
\mm(w^{(T)}) \succeq \sum_{i \in G} w^{(T)}_i X_i X_i^\top \succeq  \Par{1 - 2C_4 \eps\log\eps^{-1}}\covar\\
\implies u_T^\top \mm(w^{(T)}) u_T \ge \Par{1 - 2C_4 \eps\log\eps^{-1}}\norm{\covar}_\infty.
\end{align*}
We used the Courant-Fischer characterization of eigenvalues, and that $u_T$ is a top eigenvector of $\mm(w^{(T)})$. Moreover, by termination conditions and Corollary~\ref{corr:oned-independence} (correctness of $\OneD$),
\[(1 + C\eps\log\eps^{-1}) u_T^\top \covar u_T \ge \sigma_T^2 \geq (1 + 5C_5\eps\log\eps^{-1})^{-1}u_T^\top \mm(w^{(T)}) u_T.\]
Combining these two bounds and rescaling yields the conclusion.
\end{proof}
Finally, we prove our main guarantee about Algorithm~\ref{alg:filter}.
\restatepolyfinal*
\begin{proof}
First, we will operate under Assumption~\ref{assume:total-filter}, which holds with probability at least $1 - \delta$. It is clear that the analyses of Lemma~\ref{lem:inv-holds} and~\ref{lem:output-correct} hold with $1 - \Theta(\eps\log\eps^{-1})$ multiplicative approximations of top eigenvector computation, which the power method approximates with high probability. Thus, each iteration takes time $O\Par{\frac{nd}{\eps} \log\frac{n}{\delta\eps}}$,
where we will union bound over the number of iterations. We now give an iteration bound: in any iteration where we do not terminate, Lemma~\ref{lem:uni-filter} implies
	\begin{align*}
	\sum_{i = 1}^n w^{(t-1)}_i - w^{(t)}_i &\geq \frac{1}{2\max_{i \in [n]} \inprod{u_t}{X_i}^2} \sum_{i = \ell}^n w_i a_i \\
	& \geq \frac{1}{2C_4 \log \frac{n}{\delta} \cdot \Tr (\covar)} \sum_{i = \ell}^n w_i a_i \\
	&\geq \frac{1}{2C_4 \log \frac{n}{\delta} \cdot \Tr (\covar)} \Par{\frac{\sum_{i = \ell}^n w_i}{\sum_{i \in [n] }w_i}} \sum_{i \in [n]} w_i a_i \\
	&= \Omega \Par{\eps \cdot \frac{\norm{\covar}_\infty}{\log \frac{n}{\delta}\cdot \Tr (\covar)}} = \Omega\Par{\frac{\eps}{d\log\frac{n}{\delta}}}.
	\end{align*}
	Here, the second line used Assumption~\ref{assume:total-filter}, the third used that the $a_i$ are in sorted order, and the last used the definition of $\ell$ as well as the derivations of Lemma~\ref{lem:output-correct} (specifically, that $\mm(w)$ spectrally dominates $(1 - O(\eps\log\eps^{-1}))\covar$ for roughly uniform $w$). The conclusion follows since there can be at most $O(d\log\tfrac{n}{\delta})$ iterations, as the algorithm terminates when a $2\eps$ fraction of the mass is removed, giving the overall runtime claim.
\end{proof}	 	%

\section{Deferred proofs from Section~\ref{sec:packing}}
\label{app:pack}

\subsection{Proofs from Section~\ref{sec:lp-lp}}

Since our notion of approximation is multiplicative, we can assume without more than constant loss that $\ma$ has bounded entries. This observation is standard, and formalized in the following lemma.

\begin{restatable}[Entrywise bounds on $\ma$]{lemma}{restateassumeabounded}\label{lem:assumeabounded}
	Feasibility of Problem~\ref{problem:lp-packing} is unaffected (up to constants in $\eps$) by removing columns of $\ma$ with entries larger than $n\eps^{-1}$.
\end{restatable}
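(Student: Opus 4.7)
The plan is to reduce the lemma to the preservation of primal feasibility of Problem~\ref{problem:lp-packing} under the stated column-removal operation. Let $S \defeq \{j \in [n] : \max_{i \in [d]} \ma_{ij} \le n\eps^{-1}\}$ and $\bar S \defeq [n] \setminus S$, and let $\ma_S$ denote the submatrix of $\ma$ with columns indexed by $S$. I will show that there is a primal feasible $x \in \Delta^n$ for $\ma$ (with tolerance $\eps$) if and only if there is one supported on $S$ (with tolerance $O(\eps)$); combined with the correctness of Algorithm~\ref{alg:lp-lp} (implicit LP duality at tolerance $\eps$), this preserves the outcome of the decision problem up to constants.

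The trivial direction is that any $x' \in \Delta^n$ with support in $S$ and $\norm{\ma x'}_p \le 1 + \eps$ is already a primal solution to Problem~\ref{problem:lp-packing} on $\ma$ with no loss, so primal feasibility of the restricted matrix implies primal feasibility of $\ma$. The substantive direction is the converse: assume $x \in \Delta^n$ satisfies $\norm{\ma x}_p \le 1 + \eps$. For any $j \in \bar S$, pick some $i$ with $\ma_{ij} > n\eps^{-1}$; using entrywise nonnegativity,
\[
\frac{n}{\eps} \cdot x_j \;<\; \ma_{ij} x_j \;\le\; [\ma x]_i \;\le\; \norm{\ma x}_\infty \;\le\; \norm{\ma x}_p \;\le\; 1 + \eps,
\]
so $x_j \le (1+\eps)\eps/n$. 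Summing over $j \in \bar S$ (using $|\bar S| \le n$) gives $\sum_{j \in \bar S} x_j \le (1+\eps)\eps$. Setting $s \defeq \sum_{j \in S} x_j \ge 1 - (1+\eps)\eps$ and defining $x'_j \defeq x_j / s$ for $j \in S$ and $x'_j \defeq 0$ otherwise yields $x' \in \Delta^n$ supported on $S$ with
\[
\norm{\ma x'}_p \;\le\; \frac{\norm{\ma x}_p}{s} \;\le\; \frac{1 + \eps}{1 - (1+\eps)\eps} \;=\; 1 + O(\eps),
\]
as desired, since truncating $x$ to its coordinates in $S$ only decreases each entry of the (nonnegative) vector $\ma x$.

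Putting the two directions together, primal feasibility of Problem~\ref{problem:lp-packing} at tolerance $\eps$ on $\ma$ is equivalent to primal feasibility at tolerance $O(\eps)$ on $\ma_S$. Since Algorithm~\ref{alg:lp-lp}'s guarantee (Theorem~\ref{thm:wgrowth}, to be proved) furnishes the dual direction of the decision problem at essentially matching tolerance, the overall feasibility status of the problem is preserved up to constants in $\eps$, justifying the WLOG assumption $\ma_{ij} \le n\eps^{-1}$ made in the proof of Theorem~\ref{thm:wgrowth}. There is no substantial obstacle here: the only nontrivial step is the entrywise mass bound $x_j \le (1+\eps)\eps/n$ that follows directly from the large entry in column $j \in \bar S$, and the renormalization argument is standard.
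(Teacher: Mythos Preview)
Your proof is correct and follows essentially the same argument as the paper: a column with an entry exceeding $n\eps^{-1}$ forces the corresponding coordinate of any primal-feasible $x$ to be at most $\eps(1+\eps)/n$, so zeroing out those coordinates and renormalizing perturbs the objective by only a $1+O(\eps)$ factor. The paper compresses this into two sentences, whereas you spell out both directions and the renormalization explicitly; your closing remarks about the dual side (via Theorem~\ref{thm:wgrowth}) go slightly beyond what the paper bothers to justify, but the core content is identical.
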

\begin{proof}
	If $\ma_{ji} > n\eps^{-1}$ for any entry, then $x_i \le \tfrac{\eps(1 + \eps)}{n}$, else $\norm{\ma x}_p$ is already larger than $1 + \eps$. Ignoring all such entries of $x$ and rescaling can only change the objective by a $1 + O(\eps)$ factor.
\end{proof}

\restatelplppotential*
\begin{proof}
	Fix an iteration $t$. Define $\delta = \eta g_t$, and note $w_{t+1} = w_t + \delta \circ w_t$; henceforth in this proof, we will drop subscripts $t$ when clear. Observe that 
	\[
	\norm{\ma  w_{t+1}}_p = \norm{\ma ((1+\delta) \circ w)}_p = \left( \sum_{j \in [d]} [\ma w]_j^p \left(1 + \frac{[\ma (\delta \circ w_t)]_j}{[\ma w_t]_j}  \right)^p \right)^{1/p}.
	\]
	As $g \leq \1 \implies \delta \leq p^{-1}\1$, $\frac{\ma (\delta \circ w_t)}{\ma w_t} \leq p^{-1}$ entrywise. Via $(1+x)^p \leq \exp(px) \le 1 + px + p^2 x^2$ for $x \leq p^{-1}$, it follows that
	\[
	\norm{\ma ((1+\delta) \circ w)}_p \leq \left( \sum_{j \in [d]} [\ma w]_j^p \left(1 + \frac{p [\ma (\delta \circ w)]_j}{[\ma w]_j} + \left(\frac{p [\ma (\delta \circ w)]_j}{[\ma w]_j}\right)^2  \right) \right)^{1/p}.
	\]
	By direct manipulation of the above quantity, and recalling we defined $v = \tfrac{\ma w}{\norm{\ma w}_p}$,
	\begin{align*}
	\left(\sum_{j \in [d]}\left([\ma w]_j^p + p[\ma w]_j^{p - 1}[\ma (\delta \circ w)]_j + p^2[\ma w]_j^{p - 2}[\ma (\delta \circ w)]_j^2\right) \right)^{1/p}\\
	= \left(\norm{\ma w}_p^p \sum_{j \in [d]}\left(v_j^p + p v_j^{p-1} \frac{[\ma (\delta \circ w)]_j}{\norm{\ma w}_p} + p^2 v_j^{p-2} \left( \frac{[\ma (\delta \circ w)]_j}{\norm{\ma w}_p} \right)^2 \right)\right)^{1/p} \\
	=\norm{\ma w}_p \left(1+ \sum_{j \in [d]} \Par{p v_j^{p-1} \frac{[\ma (\delta \circ w)]_j}{\norm{\ma w}_p} + p^2 v_j^{p-2} \left( \frac{[\ma (\delta \circ w)]_j}{\norm{\ma w}_p}} \right)^2 \right)^{1/p}.
	\end{align*}
	Using $(1+x)^p > 1 + px$, i.e.\ $(1 + px)^{1/p} < 1 + x$, we thus obtain 
	\[
	\norm{\ma ((1+\delta) \circ w)}_p \leq \norm{\ma w}_p  + \inprod{v^{p-1}}{\ma (\delta \circ w)} + p \inprod{v^{p-1}}{\frac{(\ma (\delta \circ w))^2}{\ma w}}.
	\]
	Cauchy-Schwarz yields that $[\ma (\delta \circ w)]_j^2 \leq [\ma (\delta^2 \circ w)]_j [\ma w]_j$, $\forall j \in [d]$. Substituting into the above,
	\begin{equation}\label{eq:boundlpdiff}
	\begin{aligned}
	\norm{\ma ((1+\delta) \circ w)}_p &\leq \norm{\ma w}_p  + \inprod{v^{p-1}}{\ma (\delta \circ w)} + p \inprod{v^{p-1}}{\ma (\delta^2 \circ w)} \\
	&= \norm{\ma w}_p + \sum_{j \in [d]} \left[\ma^\top v^{p-1} \right]_j \delta_j w_j (1+ p \delta_j).
	\end{aligned}
	\end{equation}
	Finally, to bound this latter quantity, since $\delta = \eta g$, we observe that for all $j$ either $\delta_j = 0$ or $1 + p \delta_j = 1+ g_j = 2 - [\ma^\top v^{p-1}]_j$, in which case
	\[\left[\ma^\top v^{p-1} \right]_j(1 + p\delta_j) = \left[\ma^\top v^{p-1} \right]_j\left(2 - \left[\ma^\top v^{p-1} \right]_j\right) \le 1.\] 
	Thus, plugging this bound into \eqref{eq:boundlpdiff} entrywise, 
	\[\norm{\ma((1 + \delta)\circ w)}_p - \norm{\ma w}_p \le \sum_{j \in [d]} \delta_j w_j \left[\ma^\top v^{p-1} \right]_j(1 + p\delta_j) \le \sum_{j \in [d]} \delta_j w_j = \norm{w_{t + 1}}_1 - \norm{w_t}_1.\]
	Rearranging yields the desired claim.
\end{proof}

\subsection{Proofs from Section~\ref{sec:lp-sdp}}

Our analysis of Algorithm~\ref{alg:lp-sdp} will use the following helper fact.

\begin{lemma}[Spectral bounds on $\{\ma_i\}_{i \in [n]}$]\label{lem:assumeaspectralbound} Feasibility of Problem~\ref{problem:sdp-packing} is unaffected (up to constants in $\eps$) by removing matrices $\ma_i$ with an eigenvalue larger than $n\eps^{-1}$.
\end{lemma}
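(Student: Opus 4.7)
The plan is to mirror the argument of Lemma~\ref{lem:assumeabounded} from the LP setting, with matrix-monotone inequalities replacing scalar ones. The key observation will be that any primal-feasible $x \in \Delta^n$ must place negligible mass on indices whose matrix has a large eigenvalue; zeroing those entries and renormalizing then changes the objective by only a $(1 + O(\eps))$ factor.

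First I would establish the per-index bound. For any PSD matrix $\mm$ with eigenvalues $\lambda_1 \ge \cdots \ge \lambda_d \ge 0$, one has $\norm{\mm}_p = (\sum_j \lambda_j^p)^{1/p} \ge \lambda_1 = \lmax(\mm)$. Combined with the Loewner inequality $\sum_j x_j \ma_j \succeq x_i \ma_i$, which holds because each $\ma_j \succeq 0$, and monotonicity of the Schatten-$p$ norm on $\PSD^d$, this gives $\norm{\sum_j x_j \ma_j}_p \ge x_i \lmax(\ma_i)$. Consequently, if $\lmax(\ma_i) > n\eps^{-1}$ and $x$ satisfies $\norm{\sum_j x_j \ma_j}_p \le 1+\eps$, then $x_i < \eps(1+\eps)/n$.

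Next I would zero out and rescale. Let $S$ be the set of bad indices; the bound above gives $\sum_{i \in S} x_i < \eps(1+\eps)$. Define $x'$ by $x'_i = x_i / (1 - \sum_{j \in S} x_j)$ for $i \notin S$ and $x'_i = 0$ otherwise. Then $x' \in \Delta^n$, and Loewner-monotonicity of the Schatten-$p$ norm yields $\norm{\sum_i x'_i \ma_i}_p \le (1+\eps)/(1 - \eps(1+\eps)) = 1 + O(\eps)$. Conversely, any primal solution of the reduced instance (supported on good indices) extends to the original by padding with zeros. Thus primal feasibility is preserved in both directions, up to constants in $\eps$.

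I expect no serious obstacle here; the one subtlety, exactly as in Lemma~\ref{lem:assumeabounded}, is that the argument formally addresses only the primal side. For the dual side the reasoning will be implicit: since the algorithm outputs either primal or dual, and primal-infeasibility at slack $\eps$ is preserved under the reduction, a dual certificate on the reduced problem certifies the original at comparable slack. With matrix eigenvalues bounded by $n\eps^{-1}$, downstream quantities such as the initial potential $\Phi_0$ in the proof of Theorem~\ref{thm:wgrowthsdp} then go through exactly as in the LP case.
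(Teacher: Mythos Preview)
Your proposal is correct and matches the paper's approach exactly: the paper's proof is a one-sentence reference to Lemma~\ref{lem:assumeabounded}, noting that the only additional ingredient needed is monotonicity of the Schatten-$p$ norm in the Loewner order to force $x_i \le \tfrac{\eps(1+\eps)}{n}$. You have simply written out the details that the paper leaves implicit.
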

\begin{proof}
The proof is identical to Lemma~\ref{lem:assumeabounded}; we also require the additional fact that the Schatten norm $\norm{\cdot}_p$ is monotone in the Loewner order, forcing the constraint $x_i \le \tfrac{\eps(1 + \eps)}{n}$.
\end{proof}

We remark that we can perform this preprocessing procedure via power iteration on each $\ma_i$.

\restatesdppot*
\begin{proof}
	Drop $t$ and define $\delta = \eta g$. For simplicity, define the matrices
	\[\mm_0 \defeq \sum_{i \in [n]} w_i \ma_i,\; \mm_1 \defeq \sum_{i \in [n]} \delta_i w_i \ma_i,\; \mm_2 \defeq \sum_{i \in [n]} \delta_i^2 w_i \ma_i.\]
	We recall the Lieb-Thirring inequality $\Tr((\ma \mb\ma)^p) \le \Tr(\ma^{2p}\mb^p)$. Applying this, we have
	\[\norm{\mm_0 + \mm_1}_p^p = \Tr\Par{\left(\mm_0 + \mm_1\right)^p} \le \Tr\Par{\mm_0^p\left(\id + \mm_0^{-\half}\mm_1\mm_0^{-\half}\right)^p}.\]
	As $g \leq \1$, we have $\mm_0^{-\half}\mm_1 \mm_0^{-\half}\preceq p^{-1} \id$. Applying the bounds $(\id + \mm)^p \preceq \exp(p\mm) \preceq \id + p\mm + p^2 \mm^2$ for $\mm = \mm_0^{-\half}\mm_1 \mm_0^{-\half}$, where we use that $\id$ commutes with all $\mm$, it follows that
	\[\norm{\mm_0 + \mm_1}_p^p \le \Tr\Par{\mm_0^p + p\mm_0^{p - 1} \mm_1 + p^2 \mm_0^{p - 1} \mm_1 \mm_0^{-1} \mm_1}. \]
	Definitions of $\mm_0$, $\mm_1$, $\mm_2$, and preservation of positiveness under Schur complements imply
	\[\begin{pmatrix}\mm_0 & \mm_1 \\ \mm_1 & \mm_2 \end{pmatrix} \succeq 0 \implies \mm_2 - \mm_1 \mm_0^{-1} \mm_1 \succeq 0.\]
	Thus, $\mm_1\mm_0^{-1}\mm_1 \preceq \mm_2$. Applying this and recalling $\mv = \tfrac{\mm_0}{\norm{\mm_0}_p}$,
	\begin{align*}\norm{\mm_0 + \mm_1}_p^p &\le \Tr\Par{\mm_0^p + p\mm_0^{p - 1} \mm_1 + p^2 \mm_0^{p - 1} \mm_2}\\
	&=\norm{\mm_0}_p^p\left(1 + p\inprod{\mv^{p - 1}}{\frac{\mm_1}{\norm{\mm_0}_p} + \frac{p\mm_2}{\norm{\mm_0}_p}}\right). \end{align*}
	By $(1 + px)^{1/p} < 1 + x$, taking $p^{th}$ roots we thus have
	\[\norm{\mm_0 + \mm_1}_p \le \norm{\mm_0}_p + \inprod{\mv^{p - 1}}{\mm_1 + p\mm_2}. \]
	Finally, the conclusion follows as in Lemma~\ref{lemma:lp-potential}; by linearity of trace and $g = p\delta$,
	\[\inprod{\mv^{p - 1}}{\mm_1 + p\mm_2} = \sum_{i \in [n]} \inprod{\mv^{p - 1}}{\ma_i} \delta_i w_i(1 + p\delta_i) \le \sum_{i \in [n]}\delta_i w_i.\]
	Here, we used the inequality for all nonzero $g_i$,
	\[\inprod{\mv^{p - 1}}{\ma_i}(1 + p\delta_i) = \inprod{\mv^{p - 1}}{\ma_i}\left(2 - \inprod{\mv^{p - 1}}{\ma_i}\right) \le 1. \]
\end{proof}

\restatewgrowthsdp*
\begin{proof}
The proof is analogous to that of Theorem~\ref{thm:wgrowth}; we sketch the main differences here. By applying Lemma~\ref{lem:assumeaspectralbound} and monotonicity of Schatten norms in the Loewner order, we again have $\Phi_0 \le 1$, implying correctness whenever the algorithm terminates on Line 4. Correctness of dual certification again follows from lack of termination and the choice of $T$, as well as setting $u$ to indicate each coordinate. Finally, the returned matrix in Line 8 is correct by convexity of the Schatten-$q$ norm, and the fact that all $\mv_t^{p - 1}$ have unit Schatten-$q$ norm.

We now discuss issues regearding computing $g_t$ in Line 5 of the algorithm, the bottleneck step; these techniques are standard in the approximate SDP literature, and we defer a more formal discussion to e.g.\ \cite{JambulapatiLLPT20}. First, note that each coordinate of $g_t$ requires us to compute
\begin{equation}\label{eq:gradsdp}\frac{1}{\norm{\sum_{i \in [n]} [w_t]_i \ma_i}_p^{p - 1}} \cdot \inprod{\ma_i}{\Par{\sum_{i \in [n]} [w_t]_i \ma_i}^{p - 1}}.\end{equation}
We estimate the two quantities in the above expression each to $1 + \eps$ multiplicative error with high probability. Union bounding over iterations, and modifying Lemma~\ref{lem:sdp-potential} to use the potential $\norm{\sum_{i \in [n]} [w_t]_i \ma_i}_p - (1 + O(\eps))\norm{w_t}_1$, the analysis remains valid up to constants in $\eps$ with this multiplicative approximation quality. We now discuss our approximation strategies.

For shorthand, denote $\mm = \sum_{i \in [n]} [w_t]_i \ma_i$. To estimate the denominator of \eqref{eq:gradsdp}, it suffices to multiplicatively approximate $\norm{\mm}_p^p = \Tr[\mm^p]$ within a $1 + \eps$ factor, as raising to the $\tfrac{p-1}{p}$ power can only improve this. To do so, we use the well-known fact (e.g.\ \cite{DasguptaG03}) that letting $\mq$ be a $k \times d$ matrix with independent entries $\sim \Nor(0, \tfrac{1}{k})$, for $k = O(\tfrac{\log(\frac{nd}{\eps})}{\eps^2})$, with probability $1 - \textup{poly}((\tfrac{nd}{\eps})^{-1})$,
\[\Tr[\mm^p] \approx \sum_{\ell \in [k]} \mq_{\ell:} ^\top \mm^p \mq_{\ell:}\]
to a $1 + \eps$ factor. To read this from the standard Johnson-Lindestrauss guarantee, it suffices to factorize $\mm^p$ and use that each row of the square root's $\ell_2$ norm is preserved with high probability under multiplication by $\mq$, and then apply the cyclic definition of trace. Similarly, for each $i \in [n]$, we can approximate the numerators via
\[\Tr\Par{\mq \mm^{\frac{p - 1}{2}} \ma_i  \mm^{\frac{p - 1}{2}} \mq^\top}.\]
We can simultaneously compute all such quantities by first applying $O(p)$ matrix-vector multiplications through $\mm$ to each row of $\mq$, and then computing all quadratic forms. In total, the computational cost per iteration of all approximations is $O(\nnz \cdot \tfrac{p\log(\frac{nd}{\eps})}{\eps^2})$ as desired.
\end{proof}

\subsection{Proof of Proposition~\ref{prop:boxconstrainedp}}

In this section, following our prior developments, we prove the following claim.

\restateboxconstrainedp*

\subsubsection{Reduction to a decision problem}\label{sssec:reducedecision}

Given access to an oracle for the following approximate decision problem, we can implement an efficient binary search for estimating $\opt$. Specifically, letting the range of $\opt$ be $(\mu_{\text{lower}}, \mu_{\text{upper}})$, we can subdivide the range into $O(\tfrac{1}{\eps}\log\tfrac{\mu_{\text{upper}}}{\mu_{\text{lower}}})$ multiplicative intervals of range $1 + \eps$, and then compute a binary search using our decision oracle. This incurs a multiplicative $\log(\tfrac{nd}{\eps})$ overhead in the setting of Proposition~\ref{prop:boxconstrainedp} (see Appendix A, \cite{JambulapatiLLPT20}, for a more formal treatment).

\begin{problem}\label{problem:boxconstrainedp}
Given $\{\ma_i\}_{i \in [n]} \in \PSD^d$, either find primal solution $x \in \Delta^n$ with $\norm{\alla(x)}_{p} \le 1 + \eps$, $\norm{x}_\infty \le \tfrac{(1 + \eps)(1 + \alpha)}{n}$, or conclude no $x \in \Delta^n$ satisfies $\norm{\alla(x)}_{p} \le 1 - \eps$, $\norm{x}_\infty \le \tfrac{(1 - \eps)(1 + \alpha)}{n}$.
\end{problem}

The hard constraint $\norm{x}_\infty \le \tfrac{1 + \alpha}{n}$ in the definition \eqref{eq:boxconstrainedschatten} can be adjusted by constant factors to admit the $\ell_\infty$ bound in Problem~\ref{problem:boxconstrainedp}, since we assumed $\eps = O(\alpha)$ is sufficiently small.

\subsubsection{Preliminaries}

We use the shorthand $\ms \defeq \tfrac{n}{1 + \alpha} \id$, and $p' \defeq \tfrac{\log n}{\eps}$, so $\ell_{p'}$ and $\ell_\infty$ are interchangeable up to $1 + O(\eps)$ factors. In other words, Problem~\ref{problem:boxconstrainedp} asks to certify whether there exists $x \in \Delta^n$ with
\begin{equation}\label{eq:pinfdecision}\max\Par{\norm{\alla(x)}_p,\;\norm{\ms x}_{p'}} \le 1,\end{equation}
up to multiplicative $1 + \eps$ tolerance on either side. Consider the potential function
\begin{equation}\label{eq:phidef}\Phi(w) \defeq \log\Par{\exp\Par{\norm{\alla(w)}_p} + \exp\Par{\norm{\ms w}_{p'}}}- \norm{w}_1.\end{equation}
It is clear that the first term of $\Phi(w)$ approximates the left hand side of \eqref{eq:pinfdecision} up to a $\log 2$ additive factor, so if any of $\norm{\alla(w)}_p$, $\norm{\alla(w)}_{p'}$, or $\norm{w}_1$ reaches the scale $3\eps^{-1}$ and $\Phi(w)$ is bounded by $1$, we can safely terminate. and conclude primal feasibility for Problem~\ref{problem:boxconstrainedp}. Next, we compute
\begin{equation}\label{eq:gradphi}
\begin{aligned}
\nabla_i \Phi(w) = 1 - \frac{\exp\Par{\norm{\alla(w)}_{p}}\inprod{\ma_i}{\my(w)} + \exp\Par{\norm{\ms w}_{p'}}[\ms z(w)]_i}{\exp\Par{\norm{\alla(w)}_{p}} + \exp\Par{\norm{\ms w}_{p'}}} \text{ for all } i \in [n],\\
\text{where } \my(w) \defeq \Par{\frac{\alla(w)}{\norm{\alla(w)}_p}}^{p - 1},\; z(w) \defeq \Par{\frac{\ms w}{\norm{\ms w}_{p'}}}^{p' - 1}
\end{aligned}
\end{equation}
The following helper lemma will be useful in concluding dual infeasibility of Problem~\ref{problem:boxconstrainedp}.

\begin{lemma}\label{lem:inprodatleasteps}
In the setting of Problem~\ref{problem:boxconstrainedp}, suppose there exists $x^* \in \Delta^n$ with
\[\norm{\alla(x^*)}_p \le 1 - \eps,\; \norm{\ms x^*}_{p'} \le 1 - \eps.\]
Then, for any $w$,
\[\inprod{\nabla \Phi(w)}{x^*} \ge \eps.\]
\end{lemma}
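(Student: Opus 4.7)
The plan is to expand $\langle \nabla \Phi(w), x^\ast\rangle$ directly using the formula \eqref{eq:gradphi}, treating the convex combination of the two terms in the numerator and exploiting that both dual objects $\my(w)$ and $z(w)$ are normalized to have unit dual norm by construction. Since $x^\ast \in \Delta^n$, we immediately have $\langle \1, x^\ast\rangle = 1$, so it suffices to show that the fraction appearing in $\nabla_i \Phi(w)$, once summed against $x^\ast$, is at most $1-\eps$. Because that fraction is a convex combination (in $\exp(\|\alla(w)\|_p)$ and $\exp(\|\ms w\|_{p'})$) of two scalars, it is enough to prove each of those scalars is at most $1-\eps$.

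The two scalars are $\sum_i x^\ast_i \langle \ma_i, \my(w)\rangle = \langle \alla(x^\ast), \my(w)\rangle$ and $\sum_i x^\ast_i [\ms z(w)]_i = \langle \ms x^\ast, z(w)\rangle$, by linearity of $\alla$ and $\ms$. For the first, the key fact is that $\my(w)$ is normalized so that $\norm{\my(w)}_q = 1$, where $q = p/(p-1)$: the eigenvalues of $\my(w) = (\alla(w)/\norm{\alla(w)}_p)^{p-1}$ are $\lambda_i^{p-1}/\norm{\alla(w)}_p^{p-1}$, and then $\sum_i \lambda_i^{q(p-1)}/\norm{\alla(w)}_p^{q(p-1)} = \sum_i \lambda_i^p/\norm{\alla(w)}_p^p = 1$ since $q(p-1) = p$. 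Applying Fact~\ref{fact:qnormcert} (the dual characterization of Schatten norms) gives $\langle \alla(x^\ast), \my(w)\rangle \le \norm{\alla(x^\ast)}_p \norm{\my(w)}_q \le 1-\eps$. The second scalar is handled identically in the vector $\ell_{p'}/\ell_{q'}$ setting: an analogous computation shows $\norm{z(w)}_{q'} = 1$, and H\"older's inequality yields $\langle \ms x^\ast, z(w)\rangle \le \norm{\ms x^\ast}_{p'} \norm{z(w)}_{q'} \le 1-\eps$.

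Combining, the entire fraction in \eqref{eq:gradphi} after pairing with $x^\ast$ is a convex combination of quantities each bounded by $1-\eps$, hence itself $\le 1-\eps$. Subtracting from $\langle \1, x^\ast\rangle = 1$ yields $\langle \nabla \Phi(w), x^\ast\rangle \ge \eps$, as claimed. There is no real obstacle here beyond bookkeeping: the entire proof reduces to recognizing that $\my(w)$ and $z(w)$ are exactly the unit-dual-norm witnesses induced by the gradients of the respective norms, so that dual-norm inequalities extract the $\norm{\alla(x^\ast)}_p$ and $\norm{\ms x^\ast}_{p'}$ bounds on the nose. The only mild subtlety is confirming the normalization computation $\norm{\my(w)}_q = 1$ uses $q(p-1)=p$, and likewise for $z(w)$; this is a one-line check in each case.
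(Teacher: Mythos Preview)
Your proposal is correct and follows essentially the same approach as the paper's proof: both observe that $\my(w)$ and $z(w)$ have unit dual norm, apply the dual characterization of the norm (Fact~\ref{fact:qnormcert} / H\"older) together with the assumed bounds on $x^*$ to show each of the two inner products is at most $1-\eps$, and then use that $\nabla\Phi(w)$ paired with $x^*$ is $1$ minus a convex combination of these two quantities. The paper's version is slightly terser (it just asserts $\norm{\my(w)}_q = \norm{z(w)}_{q'} = 1$ and notes positivity), whereas you spell out the eigenvalue computation explicitly, but the argument is the same.
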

\begin{proof}
From the definitions in \eqref{eq:gradphi}, it is clear that $\norm{\my(w)}_q = \norm{z(w)}_{q'} = 1$, where $q$, $q'$ are the dual norms of $p$, $p'$ respectively. Moreover, by the definition of $x^*$, we have for all $\norm{\my}_q = \norm{z}_{q'} = 1$,
\[\inprod{\my}{\alla(x)} \le 1 - \eps,\; \inprod{z}{\ms x} \le 1 - \eps.\]
This follows from the dual definition of the $\ell_p$ norm (see Fact~\ref{fact:qnormcert}). Now, note that for some nonnegative $\alpha(w)$, $\beta(w)$ summing to 1, using the above claim and \eqref{eq:gradphi},
\[\inprod{\nabla \Phi(w)}{x^*} = 1 - \Par{\alpha(w) \inprod{\my(w)}{\alla(x^*)} + \beta(w) \inprod{z(w)}{\ms x^*}} \ge \eps,\]
as desired (here, we used positivity of all relevant quantities).
\end{proof}

\subsubsection{Potential monotonicity}

We prove a monotonicity property regarding the potential $\Phi$ in \eqref{eq:phidef}.

\begin{lemma}
Let $w \in \R^n_{\ge 0}$ satisfy $\norm{\alla(w)}_p \le 3\eps^{-1}$, $\norm{\ms w}_{p'} \le 3\eps^{-1}$, let $g = \max(0, \nabla \Phi(w))$ entrywise, and let $w' = (1 + \eta g) \circ w$, where $\eta = (4p')^{-1}$. Then, $\Phi(w') \le \Phi(w)$.
\end{lemma}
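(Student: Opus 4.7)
The plan is to massage $\Phi(w') - \Phi(w)$ into a form where the individual Schatten/$\ell_p$ monotonicity results apply. Set $s_1 = \|\alla(w')\|_p - \|\alla(w)\|_p \ge 0$, $s_2 = \|\ms w'\|_{p'} - \|\ms w\|_{p'} \ge 0$, $P = \exp(\|\alla(w)\|_p)$, $Q = \exp(\|\ms w\|_{p'})$, and weights $\alpha = P/(P+Q)$, $\beta = Q/(P+Q)$; then a direct algebraic manipulation gives
\[
\Phi(w') - \Phi(w) \;=\; \log\Par{\alpha e^{s_1} + \beta e^{s_2}} - \sum_j \delta_j w_j,\qquad \delta := \eta g.
\]
The proofs of Lemma~\ref{lem:sdp-potential} and Lemma~\ref{lemma:lp-potential} (the latter applied to the diagonal linear map $\ms$) supply the upper bounds $s_i \le h_i := \sum_j \delta_j w_j \nabla_j f_i(w)(1 + p_i \delta_j)$, where $f_1 := \|\alla(\cdot)\|_p$, $f_2 := \|\ms \cdot\|_{p'}$, and $(p_1, p_2) := (p, p')$; the Taylor steps there only require $p_i \delta_j \le 1$, which is ensured by $p'\eta = 1/4$ and $p \le p'$. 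Because $\exp$ is monotone in each argument, the task reduces to showing $\log(\alpha e^{h_1} + \beta e^{h_2}) \le \sum_j \delta_j w_j$.

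\textbf{Linear cancellation.} I would write $\log(\alpha e^{h_1} + \beta e^{h_2}) = \alpha h_1 + \beta h_2 + \rho(h_1 - h_2)$, where $\rho(t) := \log(\alpha e^t + \beta) - \alpha t$ satisfies $\rho(0) = \rho'(0) = 0$, and handle the linear piece via a weighted version of the scalar cancellation trick of Lemma~\ref{lem:sdp-potential}. At any coordinate $j$ with $g_j > 0$, the gradient formula \eqref{eq:gradphi} gives $\alpha \nabla_j f_1(w) + \beta \nabla_j f_2(w) = 1 - g_j$, and since the summands are nonnegative, individually $\alpha \nabla_j f_1, \beta \nabla_j f_2 \le 1$. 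Using $p \le p'$ and $p'\eta = 1/4$,
\[
\alpha \nabla_j f_1 (1 + p \delta_j) + \beta \nabla_j f_2 (1 + p'\delta_j) \;\le\; (1 + p'\delta_j)(1 - g_j) = \Par{1 + \tfrac{g_j}{4}}(1 - g_j) \;\le\; 1 - \tfrac{3g_j}{4},
\]
and summing against $\delta_j w_j$ yields $\alpha h_1 + \beta h_2 \le \sum_j \delta_j w_j - \tfrac{3\eta}{4}\sum_j w_j g_j^2$, which creates slack of order $\eta \sum_j w_j g_j^2$ to absorb the quadratic remainder.

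\textbf{The main obstacle: quadratic remainder.} It remains to show $\rho(h_1 - h_2) \le \tfrac{3\eta}{4}\sum_j w_j g_j^2$. Two differentiations give $\rho''(t) = \alpha\beta\, e^t/(\alpha e^t + \beta)^2$, which on $|t| \le 1$ is at most $e^3 \alpha\beta$, so $\rho(t) \le \tfrac{e^3 \alpha\beta}{2}(h_1 - h_2)^2$. The assumptions $\|\alla(w)\|_p, \|\ms w\|_{p'} \le 3\eps^{-1}$ combined with $\delta_j \le \eta$ force $h_i \le O(\eta/\eps) = O(1/\log n)$, making $|h_1 - h_2| \le 1$ automatic for moderate $n$. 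The genuine obstacle is turning the factor $\alpha\beta(h_1 - h_2)^2$ into something proportional to $\sum_j w_j g_j^2$; my approach is the weighted Cauchy--Schwarz
\[
\Par{\sum_j \delta_j w_j \nabla_j f_i}^2 \;\le\; \Par{\sum_j w_j \nabla_j f_i}\Par{\sum_j \delta_j^2 w_j \nabla_j f_i} \;=\; f_i(w) \sum_j \delta_j^2 w_j \nabla_j f_i,
\]
combined with Euler's identity $\sum_j w_j \nabla_j f_i = f_i(w) \le 3\eps^{-1}$ and the pointwise bound $\alpha \nabla_j f_1 \le 1$ (resp.\ $\beta \nabla_j f_2 \le 1$) to absorb one factor of $\alpha\beta$. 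This yields $\alpha\beta h_i^2 \lesssim \eta^2 \eps^{-1}\sum_j w_j g_j^2$, and hence $\rho(h_1 - h_2) \lesssim \eta^2 \eps^{-1}\sum_j w_j g_j^2$, which is dominated by the slack $\tfrac{3\eta}{4}\sum_j w_j g_j^2$ whenever $\eta \lesssim \eps$; the choice $\eta = \eps/(4\log n)$ leaves ample room, completing the monotonicity proof.
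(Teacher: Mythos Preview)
Your argument is correct but follows a genuinely different route from the paper's. Both proofs start from the same first-order bounds $s_i \le h_i$ coming from Lemmas~\ref{lemma:lp-potential} and~\ref{lem:sdp-potential}, and both use the same weighted Cauchy--Schwarz step together with Euler's identity $\sum_j w_j \nabla_j f_i = f_i(w) \le K$. The difference is the order of operations. The paper exponentiates \emph{before} combining: it applies $e^{\Delta_i} \le 1 + \Delta_i + \Delta_i^2$ to each term separately, feeds the Cauchy--Schwarz bound $\Delta_i^2 \le 2K\sum_j \delta_j^2 w_j \nabla_j f_i$ back into the linear form to obtain a single ``corrected'' coefficient $(p_i + 2K)$, and only then takes the convex combination $\alpha(\cdot) + \beta(\cdot)$ inside a $\log(1+x) \le x$. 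This collapses everything into $\sum_j\bigl[(1-\nabla_j\Phi)(1+(p'+2K)\delta_j) - 1\bigr]\delta_j w_j$, which is coordinatewise nonpositive since $(p'+2K)\eta \le 1$. No separate remainder analysis is needed, and the only requirement on $n$ is roughly $\log n \ge 2$.

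Your route instead combines first and then Taylor-expands the log-sum-exp, which forces you to control the curvature term $\rho(h_1-h_2)$ against the slack $\tfrac{3\eta}{4}\sum_j w_j g_j^2$. This is where your constants suffer: tracing through $\rho'' \le e^3\alpha\beta$, $(h_1-h_2)^2 \le 2(h_1^2+h_2^2)$, and $\alpha\beta h_i^2 \le 2K\eta^2\sum_j w_j g_j^2$ gives a requirement of the form $K\eta \lesssim 1/(16e^3)$, i.e.\ $\log n$ at least on the order of $80$. So ``ample room'' is an overstatement --- the inequality $\eta \lesssim \eps$ carries a hidden constant that forces $n$ to exceed a fairly large absolute threshold. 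This is not a fatal gap (it is still an absolute constant), but it is a real cost of decomposing the log-sum-exp rather than the individual exponentials. The paper's ordering sidesteps the issue entirely.
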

\begin{proof}
Denote for simplicity the threshold $K = 3\eps^{-1}$ and the step vector $\delta = \eta g$. First, by prior calculations in Lemma~\ref{lemma:lp-potential} and Lemma~\ref{lem:sdp-potential}, it follows that
\begin{align*}
\norm{\alla(w')}_p \le \norm{\alla(w)}_p + \Delta_{\alla},\; \norm{\ms w'}_{p'} \le \norm{\ms w}_{p'} + \Delta_{\ms},\\
\text{where } \Delta_{\alla} \defeq \sum_{i \in [n]} \inprod{\ma_i}{\my(w)} \delta_i w_i (1 + p\delta_i),\; \Delta_{\ms} \defeq \sum_{i \in [n]} [\ms z(w)]_i \delta_i w_i (1 + p'\delta_i).
\end{align*}
Next, note that by $\delta \le \eta$ entrywise and lack of termination (i.e.\ the threshold $K$),
\begin{align*}
\Delta_{\alla} \le (1 + p\eta) \eta \inprod{\my(w)}{\alla(w)} \le 2\eta\norm{\alla(w)}_{p} \le 1.
\end{align*}
Therefore, by $\exp(x) \le 1 + x + x^2$ for $x \le 1$, 
\begin{equation}\label{eq:expagrowth}\exp\Par{\norm{\alla(w')}_p} \le \exp\Par{\norm{\alla(w)}_p}\Par{1 + \Delta_{\alla} + \Delta_{\alla}^2}.\end{equation}
Moreover, by applying Cauchy-Schwarz and the threshold $\norm{\alla(w)}_p \le K$ once more,
\begin{equation}\label{eq:secondorderbounda}
\begin{aligned}
\Delta_{\alla}^2 &\le (1 + p\eta)^2\Par{\sum_{i \in [n]} \inprod{\ma_i}{\my(w)} \delta_i w_i}^2 \\
&\le 2\Par{\sum_{i \in [n]} \inprod{\ma_i}{\my(w)} \delta_i^2 w_i} \inprod{\my(w)}{\alla(w)} \le 2K\Par{\sum_{i \in [n]} \inprod{\ma_i}{\my(w)} \delta_i^2 w_i}.
\end{aligned}
\end{equation}
Combining \eqref{eq:expagrowth} and \eqref{eq:secondorderbounda} (and applying similar reasoning to the term $\Delta_{\ms}$), we conclude
\begin{align*}
\exp\Par{\norm{\alla(w')}_p} \le \exp\Par{\norm{\alla(w)}_p} \Par{1 + \sum_{i \in [n]} \inprod{\ma_i}{\my(w)} \delta_i w_i (1 + (p + 2K)\delta_i)}, \\
\exp\Par{\norm{\ms w'}_{p'}} \le \exp\Par{\norm{\ms w}_{p'}} \Par{1 + \sum_{i \in [n]} [\ms z(w)]_i \delta_i w_i (1 + (p' + 2K)\delta_i)}.
\end{align*}
Recall the inequality $\log(1 + x) \le x$ for nonnegative $x$. Expanding the definition of $\Phi$ and $\nabla \Phi$ (cf. \eqref{eq:phidef}), and plugging in the above bounds, we conclude that 
\begin{align*}
\Phi(w') - \Phi(w) &= \log\Par{\frac{\exp\Par{\norm{\alla(w')}_p} + \exp\Par{\norm{\ms w'}_{p'}}}{\exp\Par{\norm{\alla(w)}_p} + \exp\Par{\norm{\ms w}_{p'}}}} - \inprod{\delta}{w} \\
&\le \sum_{i \in [n]} (1 - \nabla_i \Phi(w))\delta_i w_i(1 + (p' + 2K)\delta_i) - \inprod{\delta}{w} \\
&= \sum_{i \in [n]} \Par{(1 - \nabla_i \Phi(w)) (1 + (p' + 2K)\delta_i) - 1}\delta_i w_i.
\end{align*}
As before, we show that this sum is entrywise nonpositive. For any $i \in [n]$ with $\delta_i \neq 0$, we have
\begin{align*}(1 - \nabla_i \Phi(w)) (1 + (p' + 2K)\delta_i) - 1 &= (1 - \nabla_i \Phi(w)) (1 + (p' + 2K)\eta\nabla_i \Phi(w)) - 1 \\
&\le  (1 - \nabla_i \Phi(w))(1 + \nabla_i \Phi(w)) - 1 \le 0,\end{align*}
as desired, where we used that $\eta^{-1} \ge p' + 2K$. This yields the conclusion $\Phi(w') \le \Phi(w)$.
\end{proof}
\subsubsection{Algorithm and analysis}

Finally, we state Algorithm~\ref{alg:boxpack} and prove Proposition~\ref{prop:boxconstrainedp}.

\begin{algorithm}
	\caption{$\BoxPacking(\{\ma_i\}_{i \in [n]}, \eps, p, \alpha)$}
	\begin{algorithmic}[1]\label{alg:boxpack}
		\STATE \textbf{Input:} $\{\ma_i\}_{i \in [n]} \in \PSD^d, \eps \in [0, \thalf], p \ge 2$, $\alpha \in [0, n - 1]$
		\STATE $p' \gets \tfrac{\log n}{\eps}$, $\ms \gets \tfrac{n}{1 + \alpha}\id$
		\STATE $\eta \gets (4p')^{-1}$, $K \gets 3\eps^{-1}$, $T \gets \frac{6\log(\frac{nd}{\eps})}{\eta\eps}$
		\STATE $[w_0]_i \gets \tfrac{\eps}{n^2 d}$ for all $i \in [n]$, $t \gets 0$
		\WHILE{$\norm{\alla(w_t)}_p, \norm{\ms w_t}_{p'}, \norm{w_t}_1 \le K$}	
		\STATE $g_t \gets \max\Par{0, \nabla \Phi(w_t)}$ entrywise, where we use the definition \eqref{eq:phidef}
		\STATE $w_{t + 1} \gets w_t \circ (1 + \eta g_t)$, $t \gets t+1$
		\IF{$t \geq T$}
		\RETURN{Infeasible}
		\ENDIF
		\ENDWHILE
		\RETURN{ $x = \frac{w_t}{\norm{w_t}_1}$ }
	\end{algorithmic}
\end{algorithm}

\begin{proof}[Proof of Proposition~\ref{prop:boxconstrainedp}]
Correctness of the reduction to deciding Problem~\ref{problem:boxconstrainedp} follows from the discussion in Section~\ref{sssec:reducedecision}. Moreover, by the given Algorithm~\ref{alg:boxpack}, it is clear (following e.g.\ the preprocessing of Lemma~\ref{lem:assumeaspectralbound}) that $\Phi(w_t) \le 1$ throughout the algorithm, so whenever the algorithm terminates we have primal feasibility. It suffices to prove that whenever the problem admits $x^*$ with
\[\norm{\alla(x^*)}_p \le 1 - \eps,\; \norm{\ms x^*}_{p'} \le 1 - \eps,\]
then the algorithm terminates on Line 5 in $T$ iterations. Analogously to Theorem~\ref{thm:wgrowth}, we have
\[\eta(1 - \eta) \sum_{0 \le t < T} \inprod{g_t}{x^*} \le \log n - \log \norm{w_0}_1 + \log \norm{w_T}_1 \le 2\log\Par{\frac{nd}{\eps}} + \log \norm{w_T}_1.\]
Next, since $g_t$ is an upwards truncation of $\nabla \Phi(w_t)$, applying Lemma~\ref{lem:inprodatleasteps} implies that
\[\norm{w_T}_1 \ge \exp\Par{\frac{\eta\eps T}{2} - 2\log\Par{\frac{nd}{\eps}}}.\]
The conclusion follows by the definition of $T$, as desired. Finally, the iteration complexity follows analogously to the discussion in Theorem~\ref{thm:wgrowthsdp}'s proof, where the only expensive cost is estimating coordinates of the $\alla$ component of $\nabla \Phi(w_t)$ every iteration.
\end{proof}

Finally, we remark that by opening up the dual certificates $\my(w)$, $\mz(w)$ of our mirror descent analysis, we can in fact implement a stronger version of the decision Problem~\ref{problem:boxconstrainedp} which returns a feasible dual certificate whenever the primal problem is infeasible. We omit this extension for brevity, as it is unnecessary for our applications, but it is analogous to the analysis of Theorem~\ref{thm:wgrowthsdp}.

\section{Deferred proofs from Section~\ref{sec:subspace}}
\label{app:pca}

\subsection{Proof of Proposition~\ref{prop:poweriter}}

\poweriter*

\begin{proof}
	We claim that Algorithm 1 in~\cite{MuscoM15} applied to the matrix $\ma^p$ with a careful choice of exponent $q$ in their Algorithm 1 yields this guarantee.
	Specifically, we choose $q_1, q_2$, both of which satisfy the criteria in their main theorem, such that the iterates produced by simultaneous power iteration $\mm^p$ with exponent $q_1$ and $\mm^{p - 1}$ with exponent $q_2$ are identical; it suffices to choose $q$ a multiple of $p(p - 1)$.
	Thus, we can also apply their guarantees to $\ma^{p - 1}$ and apply a union bound. Notice that their Algorithm 1 also contains some postprocessing to ensure that they obtain singular values in the right space, which is unnecessary for us, as our matrices are Hermitian.
\end{proof}

\subsection{Proof of Lemma~\ref{lem:robust-objective-ub}}
\robustobjectiveub*

\begin{proof}
Lemma~\ref{lem:pnormbound_fs} implies that letting $w^*$ be the uniform distribution over the uncorrupted samples amongst $X_1, \ldots, X_n$, we have with probability at least $1 - \tfrac{\delta}{2}$, and denoting $\teps \defeq 2C_3 \cdot \eps\log\eps^{-1}$,
	\[\norm{\sum_{i \in [n]} w^*_i X_i X_i^\top}_p \le \Par{1 + \frac{\teps}{2}}\norm{\covar}_p.\]
	Therefore, the mixed $\ell_\infty$-$\ell_p$ packing semidefinite program
	\[\exists w^* \in \Delta^n \text{ with } \norm{w^*}_\infty \le \frac{1}{(1 - \eps)n},\; \norm{\sum_{i \in [n]} w^*_i X_i X_i^\top}_p \le \Par{1 + \frac{\teps}{2}}\norm{\covar}_p\]
	is feasible. This completes the proof.
\end{proof}

\subsection{Proof of Lemma~\ref{lem:linfguarantee}}

\linfguarantee*

\begin{proof}
We follow the notation of \eqref{eq:mgbdef}. First, by the guarantees of Corollary~\ref{cor:robust-output-ub},
	\[\wg = 1 - \wb \ge 1 - \frac{\eps n}{(1 - 2\eps) n} = \frac{1 - 3\eps}{1 - 2\eps} \ge 1 - 2\eps.\] 
Therefore, again applying Corollary~\ref{cor:robust-output-ub}, for all $i \in G$,
	\[\frac{w_i}{\wg} \le \frac{1}{(1 - 2\eps) n} \cdot \frac{1 - 2\eps}{1 - 3\eps} = \frac{1}{(1 - 3\eps) n}.\]
We conclude that the set of weights $\{\tfrac{w_i}{\wg}\}_{i \in G}$ belong to  $\fS_{3\eps}^{(1 - \eps)n}$. By applying Corollary~\ref{corr:infnormbound_fs} to these weights and adjusting the definition of $C_3$ by a constant, we conclude with probability at least $1 - \tfrac{\delta}{2}$
	\[\Par{1 + C_3 \cdot \eps\log\eps^{-1}}\covar \succeq \sum_{i \in G} \frac{w_i}{\wg} X_i X_i^\top \succeq  \Par{1 -  C_3 \cdot \eps\log\eps^{-1}}\covar.\]
	The conclusion follows by multiplying through by $w_G$, and using the definition $\teps = 2C_3 \cdot \eps\log\eps^{-1}$.
\end{proof} 	\end{appendix}

\end{document}